\newtheorem{theorem}{Theorem}
\newtheorem{lemma}{Lemma}
\theoremstyle{definition}
\newtheorem{defn}{Definition}
\newtheorem{remark}{Remark}
\newtheorem{assum}{Assumption}
\newcommand{\RR}{\mathbb{R}}
\newcommand{\PP}{\mathbb{P}}
\newcommand{\EE}{\mathbb{E}}
\newcommand{\ZZ}{\mathbb{Z}}
\newcommand{\Var}{\mathrm{Var}}
\title{G-HIVE: Parameter Estimation and Approximate Inference for Multivariate Response Generalized Linear Models \\
with Hidden Variables}
\author{Inbeom Lee\thanks{ \fontsize{9pt}{0.4cm}\selectfont Booth School of Business, University of Chicago, Chicago, IL. E-mail: \texttt{inbeom.lee@chicagobooth.edu}.}~~~~~Yang Ning\thanks{\fontsize{9pt}{0.4cm}\selectfont Department of Statistics and Data Science, Cornell University, Ithaca, NY. E-mail: \texttt{yn265@cornell.edu}. }}
\date{\today}
\begin{document}

\maketitle

\begin{abstract}
In practice, there often exist unobserved variables, also termed hidden variables, associated with both the response and covariates. Existing works in the literature mostly focus on linear regression with hidden variables. However, when the regression model is non-linear, the presence of hidden variables leads to new challenges in parameter identification, estimation, and statistical inference. This paper studies multivariate response generalized linear models (GLMs) with hidden variables. We propose a unified framework for parameter estimation and statistical inference called \textsc{G-hive}, short for \underline{G}eneralized - \underline{HI}dden \underline{V}ariable adjusted \underline{E}stimation. Specifically, based on factor model assumptions, we propose a modified quasi-likelihood approach to estimate an intermediate parameter,  defined through a set of reweighted estimating equations. The key of our approach is to construct the proper weight, so that the first-order asymptotic bias of the estimator can be removed by orthogonal projection. Moreover, we propose an approximate inference framework for uncertainty quantification. 
Theoretically, we establish the first-order and second-order asymptotic bias and the convergence rate of our estimator. In addition, we characterize the accuracy of the Gaussian approximation of our estimator via the Berry–Esseen bound, which justifies the validity of the proposed approximate inference approach. Extensive simulations and real data analysis results show that \textsc{G-hive} is feasibly implementable and can outperform the baseline method that ignores hidden variables.

\end{abstract}
{\small \noindent \textit{Keywords}: Generalized linear models, multivariate response data, non-linear regression, hidden variables, unmeasured confounders, parameter estimation, approximate inference.}	

\section{Introduction}
In many regression problems, due to measurement limitations or ethical considerations, there often exist unobserved variables, also referred to as hidden variables. For example, in the analysis of high-throughput genomic data, researchers have long been aware that the measurements can be affected by many unobserved factors such as laboratory conditions, preparation time, and reagent lots \citep{irizarry2005multiple,luo2019batch}. These factors are called batch effects, which can be modeled as hidden variables \citep{leek2007capturing}. Similarly, in biomedical studies, the onset of a disease is likely associated with several unmeasured variables, such as environmental factors or habitual patterns \citep{katsaouni2021machine}. Ignoring hidden variables in the statistical analysis may introduce estimation bias and potentially lead to misleading scientific conclusions. Therefore, there is a pressing need to develop statistical methods that deal with hidden variables in a general regression framework, and that in particular, are applicable to binary or categorical data. 

This paper studies the multivariate response generalized linear model with hidden variables. Specifically, we assume that the $M$-dimensional response variable $Y=(Y_1,...,Y_M)^T$ given the observed covariates $X \in \RR^{p}$ and hidden variables $Z \in \RR^{K}$ follows the generalized linear model (GLM) with the canonical link
\begin{align}
f(Y_m|X,Z)=\exp\big[\{Y_m\cdot(\Theta_m X +B_m Z) - b(\Theta_m X + B_m Z)\}/\phi ~+ ~c(Y_m, \phi)\big] \label{glm}
\end{align}
where $b(\cdot)$ and $c(\cdot)$ are known functions and $\phi$ is the dispersion parameter. The parameters $\Theta_m$ and $B_m$ are the $m$-th row of coefficient matrices $\Theta \in \RR^{M \times p}$ and $B \in \RR^{M \times K}$, respectively. Given $n$ i.i.d copies of $(Y, X)$, we are interested in the estimation and inference of the coefficient matrix $\Theta$, the association between $X$ and $Y$ in the presence of hidden variables $Z$. In this work, we consider the regime where $p, M, K$ are all allowed to grow with the sample size $n$, where $p \leq n$ and $K \leq M$ hold.

In this work, we propose a unified framework for parameter estimation and statistical inference called \textsc{G-hive}, short for \underline{G}eneralized - \underline{HI}dden \underline{V}ariable adjusted \underline{E}stimation. Since the hidden variable $Z$ is random and unobserved, the coefficient matrix $\Theta$ is generally not identifiable. To make $\Theta$ (asymptotically) identifiable and estimable, we impose a factor model in (\ref{factor_model}) that relates $X$ and $Z$ \citep{bai2003inferential,fan2013large,fan2008high}. However, under these model assumptions, the distribution of $Y$ given $X$ does not follow a GLM and is indeed intractable, since we do not impose any parametric assumption on the distribution of $Z$.  To overcome this challenge, we carefully construct a modified quasi-likelihood for a new estimand $F^*$, which is defined through a set of reweighted estimating equations. The rationale behind the reweighted estimating equations is that the resulting first-order approximation of the bias $F^*-\Theta$ is shown to belong to the column space of $B$, which can be removed by estimating the projection matrix $P_B=B(B^TB)^{-1}B^T$. The intuition of our approach is explained in Section \ref{sec_ident}. Under certain identifiability conditions, using $F^*$ as a bridge, we introduce the estimator $\hat\Theta=\hat{P}_B^{\perp}\hat{F}$, where $\hat{P}_B$ is obtained by applying PCA to a carefully constructed weighted covariance matrix and $\hat{F}$ is the maximum modified quasi-likelihood estimator. Since our approach yields a tight pipeline, it is straightforward to implement in practice. 

In addition, we propose an approximate inference framework for uncertainty quantification. Unlike classical inference results for models without hidden variables, a new, unpleasant phenomenon of the estimator $\hat\Theta$ is that the asymptotic bias may dominate the stochastic error. Consequently, the limiting distribution of the estimator $\hat\Theta$ is no longer centered at $\Theta$ with the $\sqrt{n}-$rate. To address this issue,  we shift the target parameter from $\Theta$ to $P_B^{\perp}F^*$ (or $F^*$), which corresponds to the second-order (or first-order) approximation of $\Theta$. Intuitively, $P_B^{\perp}F^*$ can be viewed as the correct  limiting value of $\hat\Theta$, and therefore a confidence interval based on the limiting distribution of $\hat\Theta$ yields the desired coverage probability for $P_B^{\perp}F^*$. For this reason, we refer to this approach as second-order approximate inference.

Theoretically, our first key result shows that the approximation bias satisfies $\|F^*-\Theta\|_F /\sqrt{M}=O(1/\sqrt{p})$ and $\|P_B^{\perp}F^*-\Theta\|_F/\sqrt{M} = O(1/p)$. An interesting implication of this result is that collecting more observed covariates can mitigate the approximation bias. Moreover, it also explains why $P_B^{\perp}F^*$ (or $F^*$) is called the second-order (or first-order) approximation of $\Theta$ in our inference framework. Next, we establish the convergence rate of the estimator $\hat\Theta$. In particular, we show that, under mild conditions (e.g., $M$ is large enough), the convergence rate of $\hat\Theta$ is faster than that of $\hat F$, which corresponds to the baseline naive estimator that ignores the hidden variables. Finally, we characterize the accuracy of the Gaussian approximation of $\hat\Theta$ via the Berry–Esseen bound, which justifies the validity of the proposed approximate inference approach.

\subsection{Related Literature}\label{lit_review}
This work is most related to surrogate variable analysis (SVA) proposed by \cite{leek2007capturing}, and more recently developed by \cite{lee2017improved,wang2017confounder,mckennan2019accounting,bing2022adaptive,bing2023inference}, the last of which proposed a novel factor model based bias correction approach for multivariate response linear regression with hidden variables, which is a special case of GLMs. However, their approach is only applicable to linear regression. The challenge of extending their approach to GLMs is detailed in Section \ref{sec_backgroud}. Compared to these works, our main methodological novelty is that we propose to calibrate the residual by an approximate inverse variance weighting scheme. Such a calibration step is essential under the GLM for parameter identification, estimation consistency, and asymptotic normality. Theoretically, we discover a unique result in that our estimator under the GLM inherently has an asymptotic bias which decreases with $p$ but may still dominate the stochastic error. As a result, there is an interesting and much more delicate interplay between $p$ and $M$ in both the estimation error and the Berry–Esseen bound for Gaussian approximation. 

Along this line, a recent work by \cite{du2025simultaneous} studied simultaneous inference with unmeasured confounders when $p\gg n$. While they focused on the same GLM as in our (\ref{glm}), their imposed model for the unmeasured confounders is different from our factor model in (\ref{factor_model}), and consequently the corresponding assumptions on their model are different from our Assumption \ref{assum_glm}. In particular, our theory is established under a more challenging setting, where the covariance matrix of $X$ has spiked  eigenvalues. Their proposed method is a joint maximum likelihood approach, which requires the estimation of all coefficient matrices as well as the latent factors $Z$ for each sample. In contrast, our method is computationally more convenient and avoids estimating the unknown factors.

Another recent direction of relevance, grouped together under the term spectral deconfounding, includes work by \cite{cevid2020spectral,guo2022doubly,fan2024latent,wang2025latent,sun2024decorrelating} among others, and considers estimation and inference in high-dimensional regression models with unmeasured confounders. For example, \cite{ouyang2023high} focused on inference in high-dimensional GLMs with unmeasured confounders by generalizing the decorrelated score approach from \cite{ning2017general} to account for the effects induced by the unmeasured confounders. 
This work is similar to ours, but fundamentally different in that their response is assumed to be univariate. In contrast, we show that with multiple response variables, we can estimate the parameters in a collaborative way, improving the convergence rate compared to the univariate case.

Alternatively, one may view hidden variables as random effects or latent factors. The usage of random effects or latent variables in GLMs have many different forms in the literature \citep{bartholomew2011latent,mcculloch2001generalized}. For example, \cite{huber2004estimation}  introduced generalized linear latent variable models without any observed covariates, and proposed a Laplacian approximation to estimate the coefficient of the latent variable. A similar approach was also considered for generalized linear mixed effect models  \citep{breslow1993approximate}. All these works differ substantially from our approach.


\subsection{Notation}
For any vector $v\in \RR^d$ and some real number $q \geq 0$, we define its $L_q$ norm as $||v||_q=(\sum_{j=1}^d |v_j|^q)^{1/q}$. For any matrix $H \in \RR^{d_1 \times d_2}$, we denote by $||H||_{\text{op}}$ and $||H||_F$ the operator norm and the Frobenius norm, respectively. $||H||_{\infty}=\max_i \sum_j |h_{ij}|$ denotes the maximum absolute row sum. Following the notation in \cite{vershynin2018high}, for any sub-Gaussian random variable (or vector) $h_2$, let $||h_2||_{\psi_2}$ denote its sub-Gaussian norm, and for any sub-exponential random variable (or vector) $h_1$, let $||h_1||_{\psi_1}$ denote its sub-exponential norm. For any symmetric matrix $H$, we write $\lambda_{k}(H)$ to denote its $k$-th largest eigenvalue, and $\lambda_{\min}(H)$ and $\lambda_{\max}(H)$ for the smallest and largest eigenvalues, respectively. For any two sequences $a_n$ and $b_n$, we write $a_n \lesssim b_n$ if there exists some fixed positive constant $C$ such that $a_n \leq Cb_n$. We also use the following notation to refer to the maximum and minimum: $a \vee b = \max(a,b)$, $a \wedge b = \min(a,b)$. 


\section{Informal Analysis of Parameter Identifiability}\label{sec_infor}
In this section, we first introduce our model setup and highlight the challenges of model identifiability under the GLM compared to the linear case, and afterwards we offer the intuition of our proposed approach.  

\subsection{Model Setup and Background}\label{sec_backgroud}
Recall that given the observed covariates $X \in \RR^{p}$ and hidden variables $Z \in \RR^{K}$, the response variable $Y=(Y_1,...,Y_M)^T$ follows the GLM (\ref{glm}). For simplicity, we assume that $\phi$ is known (set $\phi=1$) and that $X$ and $Z$ have zero mean. The parameters $\Theta_m$ and $B_m$ are the $m$-th row of coefficient matrices $\Theta \in \RR^{M \times p}$ and $B \in \RR^{M \times K}$, respectively. Without loss of generality, we assume rank$(B)=K<M$ since if $B$ is not of full column rank we can always reduce the dimensions of $Z$ such that the full column rank condition is met. Finally, we assume $Y_m$ and $Y_{m'}$ are independent given $X$ and $Z$ for $m'\neq m$. 

To characterize the effect of hidden variables, we assume the following factor model \citep{bai2003inferential,fan2013large,fan2008high} that relates $X$, the observed variables, to $Z$, the hidden variables:
\begin{align}
    X = AZ + W, \label{factor_model}
\end{align}
where the noise term $W\in\RR^p$ has zero mean and is independent of $Z$, and $A \in \RR^{p \times K}$ is a matrix of unknown parameters. In this paper, we focus on the independent and homogeneous noise setting where $\Sigma_W=\EE(WW^T) = \tau I_p$, and without loss of generality, we set $\tau=1$. The proposed method can be easily extended to the dependent noise setting, provided the smallest and largest eigenvalues of $\Sigma_W$ are bounded from below and above by some constants.

To understand the challenge in the identifiability of $\Theta$, we first consider a special case of (\ref{glm}) in which $Y_m$ follows the linear regression model $Y_m=\Theta_m X +B_m Z+E_m$, where $E_m$ is the random noise. As shown in \cite{bing2022adaptive}, the model can be rewritten as $Y_m=(\Theta_m+B_mL) X +\epsilon_m$, where $L=\EE(ZX^T)\{\EE(XX^T)\}^{-1}$ is obtained by $L_2(P)$ projecting $Z$ onto the linear space generated by $X$ and $\epsilon_m=B_m (Z-LX)+E_m$. As a result, ignoring the hidden variable $Z$ and regressing $Y_m$ on $X$ leads to a biased estimator of $\Theta_m$. In fact, it is easily seen that the coefficient matrix $\Theta+BL$ can be identified via the first two moments of $(Y,X)$. To establish the identifiability of $\Theta$, a very natural idea is to separate $\Theta$ and $BL$ in the coefficient matrix $\Theta+BL$. In the literature, a commonly used identifiability assumption for $\Theta$ is  $P_{B}\Theta=0$ \citep{bing2022adaptive,lee2017improved,wang2017confounder}, where $P_B=B(B^TB)^{-1}B^T\in\RR^{M\times M}$ is the projection matrix onto the column space of $B$. Under this assumption, the two matrices  
$\Theta$ and $BL$ belong to two orthogonal spaces, and therefore we can identify $\Theta$ via $\Theta=P_B^\perp(\Theta+BL)$ where $P_B^\perp=I_M-P_B$, provided $P_B$ is identifiable. In this case, $\Theta$ can be naturally interpreted as the association between $X$ and $Y$ that cannot be explained via the hidden variables.  

Nevertheless, the above analysis suffers from the following two challenges when extended to the GLM setting in (\ref{glm}). First, unlike linear regression, for the GLM, the parameter obtained by regressing $Y_m$ on $X$ does not have a simple closed form, and therefore the relationship between this parameter and the parameter of interest $\Theta$ is unclear. Following the classical literature on misspecified models \citep{white1982maximum}, a routine approach is to define the pseudo-true parameter as
    \begin{align}\label{eq_score}
        F_m^{\textsc{MLE}} ~=~ \underset{F_m \in \RR^{p}}{\text{arg max}}~ \EE \Big\{Y_m\cdot(F_mX) - b(F_m X)\Big\}.
    \end{align}
Since in general $b(\cdot)$ is not a quadratic function,  $F_m^{\textsc{MLE}}$ does not have a simple closed form solution, which complicates the analysis of the identifiability of $\Theta$. To overcome this challenge, we first focus on quantifying the approximation bias of $F_m^{\textsc{MLE}}$ locally around the target parameter $\Theta_m$. Following the logic similar to the proof of Theorem \ref{thm_expectation} in Section \ref{theory}, we can establish that
    \begin{align}\label{eq_taylor_MLE}
        F_m^{\textsc{MLE}}-\Theta_m=\EE\Big\{b''(\Theta_m X +B_m Z)B_mZZ^T\Big\} A^T\Big\{\EE(b''(\Theta_m X +B_m Z)XX^T)\Big\}^{-1}+Rem_m,
    \end{align}
where the first term on the right hand side corresponds to the first-order bias of $F_m^{\textsc{MLE}}$ and $Rem_m$ represents the approximation error which is of a smaller order. We note that the factor model (\ref{factor_model}) plays a pivotal role in deriving the leading bias term and quantifying the magnitude of $Rem_m$. In contrast, under linear regression, we have $F_m^{\textsc{MLE}}-\Theta_m=B_m\EE(ZX^T)\{\EE(XX^T)\}^{-1}$, which does not require the factor model (\ref{factor_model}) to hold as this result is derived purely from the $L_2(P)$ projection of $Z$ onto the linear space of $X$. 

To identify $\Theta$, our next step is to separate $\Theta_m$ and the first-order bias in the decomposition of $F_m^{\text{MLE}}$ in (\ref{eq_taylor_MLE}). This brings us to the second major challenge in extending to the GLM setting, which is that the first order bias term of $F^{\text{MLE}}$ no longer lives in the column space of $B$. To see this more clearly, following the analysis used in linear regression, we stack the first-order biases in (\ref{eq_taylor_MLE}) over $1 \leq m \leq M$ into a matrix $\EE(DBZZ^T)A^T\EE(DXX^T)$, where $D\in\RR^{M\times M}$ is a diagonal matrix with the $m$th entry being $b''(\Theta_m X +B_m Z)$. Ignoring the $Rem_m$ term, we can write $F^{\textsc{MLE}}\approx\Theta+\EE(DBZZ^T)A^T\EE(DXX^T)$, where the matrix $F^{\textsc{MLE}}$  is identifiable row-by-row through (\ref{eq_score}). Unfortunately, we are not able to separate $\Theta$ and $\EE(DBZZ^T)A^T\EE(DXX^T)$ as in the linear case since $\EE(DBZZ^T)A^T\EE(DXX^T)$ is no longer in the column space of $B$ due to the presence of the matrix $D$. Consequently, under the same assumption $P_{B}\Theta=0$, the non-orthogonality of $\Theta$ and $\EE(DBZZ^T)A^T\EE(DXX^T)$ implies 
 $$
 P_B^\perp F^{\textsc{MLE}}\approx P_B^\perp(\Theta+\EE(DBZZ^T)A^T\EE(DXX^T))\neq \Theta,
 $$ 
which then results in $\Theta$ not being identifiable via $P_B^\perp F^{\textsc{MLE}}$.

\subsection{Our proposed approach}\label{sec_ident}
To address the identifiability problem, our main idea is to construct a properly weighted score function of the misspecified GLM to restore the orthogonality between $\Theta$ and the corresponding first-order bias. More precisely, for each $1 \leq m \leq M$, we define the  parameter $F_m^* \in \RR^{1 \times p}$ as the solution of the following estimating equation:
\begin{align}
    &\EE\bigg[\bigg\{\frac{Y_m - b'(F_m^* X)}{b''(F_m^* X)}\bigg\}X^T \bigg]=0. \label{score_reweighted}
\end{align} 
Compared to the score function from (\ref{eq_score}), the estimating equation (\ref{score_reweighted}) contains a denominator $b''(F_m^* X)$, which can be viewed as an approximation of the variance of $Y_m$ under the GLM, i.e., $\Var(Y_m|X,Z)=b''(\Theta_m X +B_m Z)$. Thus, we can also interpret (\ref{score_reweighted}) as an inverse variance weighted score function. Since the GLM in (\ref{eq_score}) is misspecified, in general we have $F_m^*\neq F_m^{\textsc{MLE}}$. The rationale behind the weighted score approach is that the first-order bias of $F_m^*$ will have a more desirable form, which will facilitate the analysis of the identifiability of $\Theta$. Indeed, Theorem \ref{thm_expectation} shows that
    \begin{align}\label{eq_taylor_weighted_score}
        F_m^*-\Theta_m=B_m\EE(ZZ^T) A^T\Big\{\EE(XX^T)\Big\}^{-1}+Rem_m',
    \end{align}
where $Rem_m'$ presents the remainder in the expansion which is of a smaller order. It is easily seen that the first-order bias on the right hand side of (\ref{eq_taylor_weighted_score}), when stacked satisfies $P_B^\perp B~\EE(ZZ^T) A^T\{\EE(XX^T)\}^{-1}=0$. Under the assumption  $P_{B}\Theta=0$,  we can show that, ignoring the $Rem_m'$ term,
$$
P_B^\perp F^*\approx P_B^\perp(\Theta+B_m\EE(ZZ^T) A^T\{\EE(XX^T)\}^{-1})= \Theta.
$$ 
As a result, we can asymptotically identify $\Theta$, provided $P_B$ is identifiable and the $Rem_m'$ term is asymptotically negligible.  
    
To justify the identifiability of $P_B$, we similarly define the inverse variance weighted residual as
\begin{align}
    \bar{\epsilon}_m :&= \frac{Y_m - b'(F_m^* X)}{b''(F_m^* X)},~~~~ \bar{\epsilon} := \big[\bar{\epsilon}_1,...,\bar{\epsilon}_M\big]^T.\label{eq_pseudo_error}
\end{align}
As shown in the proof of Theorem \ref{thm_projection}, we have
\begin{align}
\EE(\bar{\epsilon}\bar{\epsilon}^T)=\EE({\epsilon}{\epsilon}^T)+B\EE\Big[(Z-\Gamma X)(Z-\Gamma X)^T\Big]B^T+Rem'',\label{eq_var_pca}
\end{align}
where $\Gamma=\EE(ZZ^T) A^T\{\EE(XX^T)\}^{-1}$, 
\begin{align}
\epsilon_m := \frac{Y_m-b'(\Theta_m X + B_mZ)}{b''(\Theta_m X + B_mZ)},~~~~ \epsilon := \big[{\epsilon}_1,...,{\epsilon}_M\big]^T \label{true_error}
\end{align}
and $Rem''$ denotes the remainder term induced by the second-order term $Rem'$ in (\ref{eq_taylor_weighted_score}). Recall that the first term on the right hand side of (\ref{eq_var_pca}), $\EE({\epsilon}{\epsilon}^T)$, is a diagonal matrix. Consider the singular value decomposition of $B=V\Lambda U^T$ where $V \in \RR^{M \times K}$ and $U \in \RR^{K \times K}$ consist of the left and right singular vectors of $B$, respectively, and $\Lambda$ is the diagonal matrix of non-increasing singular values. From (\ref{eq_var_pca}), under the pervasiveness assumption in the factor model literature \citep{bai2003inferential,fan2013large,fan2008high}, we can (asymptotically) recover $V$ by applying spectral decomposition on $\EE(\bar{\epsilon}\bar{\epsilon}^T)$ and obtaining the first $K$ eigenvectors. Since we can verify $P_B = VV^T$, the projection matrix $P_B$ is identifiable.

Compared to the analysis of the identifiability of $P_B$ in linear regression \citep{bing2022adaptive}, our argument differs in the following two ways. First, the decomposition in (\ref{eq_var_pca}) is applied to the covariance of the inverse variance weighted residual, rather than the residual itself. Again, this reweighting approach  is crucial to ensure that the column space of $B$ can be identified by the spectral decomposition of a proper covariance matrix. Second,  the non-linear property of $b'(\cdot)$ requires a more careful analysis of the matrix perturbation errors in (\ref{eq_var_pca}) to apply the Davis-Kahan Theorem. In particular, the sample version of $Rem''$ in (\ref{eq_var_pca}) and the plug-in estimators of $F^*_m$ are correlated, leading to a slower rate of convergence. To address this technical challenge, we rely on cross-fitting and data splitting to facilitate the theory.


\section{Parameter Estimation and Approximate Inference: G-HIVE}\label{sec_est_inf}
Recall that given $n$ i.i.d. observations $(Y^{(i)},X^{(i)})$ of $(Y,X)$, $i=1,...,n$, our goal is to estimate and do inference on $\Theta$. In this section, we present our estimation and inference procedure called \textsc{G-hive}, short for \underline{G}eneralized - \underline{HI}dden \underline{V}ariable adjusted \underline{E}stimation. The algorithm,  inspired by the identifiability of $\Theta$ detailed in Section \ref{sec_ident}, is summarized in Algorithm \ref{GHive}.

\subsection{Parameter Estimation}\label{sec_est}
In this algorithm, we first randomly split the data into two folds $D_1$ and $D_2$. We  use the data in $D_1$ to estimate the pseudo-true parameter $F_m^*$ in (\ref{score_reweighted}). A straightforward approach is to solve the estimating equation (\ref{score_reweighted}) with the expectation replaced by the sample average. However, in general, solving estimating equations may lead to multiple solutions (or the solution may not even exist), which complicates practical implementation. As an alternative, we propose to estimate  $F_m^*$ by maximizing the following modified quasi-likelihood function:  
\begin{align}   
\hat F^{(D_1)}_m=\arg\max Q^{(D_1)}_m(F_m), ~~\textrm{where}~~Q^{(D_1)}_m(F_m)=\frac{1}{|D_1|}\sum_{i\in D_1} \int^{F_mX^{(i)}}_0 \frac{Y_m^{(i)} - b'(\eta)}{b''(\eta)} d\eta. \label{MLE}
\end{align}
$Q^{(D_1)}_m(F_m)$ is a valid likelihood-type function since on the population level $\EE(\nabla Q^{(D_1)}_m(F_m^*))=0$ by (\ref{score_reweighted}), and $\EE(\nabla^2 Q^{(D_1)}_m(F_m^*))$ is negative definite. This implies that, on the sample level, $Q^{(D_1)}_m(F_m)$ is locally strictly concave around $F_m^*$. However, the function $Q^{(D_1)}_m(F_m)$ may not be strictly concave for all $F_m$, which implies the possibility of local maximizers. Thus, following the convention in the statistics literature, it is advisable to maximize $Q^{(D_1)}_m(F_m)$ from multiple initial values in practice. Finally, we note that our function $Q^{(D_1)}_m(F_m)$ is related but different from the standard quasi-likelihood \citep{wedderburn1974quasi} defined as $\frac{1}{n}\sum_{i=1}^n\int_0^{b'(F_mX^{(i)})}\frac{Y_m^{(i)} - \mu}{V(\mu)} d\mu$, where $V(\cdot)$ is the variance function. Since the pseudo-true parameter $F_m^*$ is defined under a misspecified GLM, maximizing the standard quasi-likelihood does not yield a consistent estimator of $F_m^*$.

After obtaining $\hat{F}^{(D_1)}_m$ for all $1 \leq m \leq M$, we can construct the estimated residuals using the data in $D_2$. That is, for $i\in D_2$, we can construct
\begin{align}
    \hat{\epsilon}_m^{(i)} = \frac{Y_m^{(i)} - b'(\hat{F}^{(D_1)}_mX^{(i)})}{b''(\hat{F}^{(D_1)}_mX^{(i)})}. \label{resid}
\end{align}
We can then estimate $\EE(\bar{\epsilon}\bar{\epsilon}^T)$ in (\ref{eq_var_pca}) by
\begin{align}
\hat{\Sigma}^{(D_2)} = \frac{1}{|D_2|}\sum_{i\in D_2} \hat{\epsilon}^{(i)}(\hat{\epsilon}^{(i)})^T, \label{noise_matrix}
\end{align}
where $\hat{\epsilon}^{(i)}=(\hat{\epsilon}_1^{(i)},...,\hat{\epsilon}_M^{(i)})^T\in\RR^M$. The sample splitting procedure guarantees the desired independence between $\hat{F}^{(D_1)}_m$ and the data $(Y_m^{(i)}, X^{(i)})$ in $D_2$, which simplifies the technical analysis of the sample covariance matrix of $\hat \epsilon^{(i)}$. To fully utilize the data, we can switch the role of $D_1$ and $D_2$ to construct the estimators $\hat{F}^{(D_2)}_m$ and $\hat{\Sigma}^{(D_1)}$, and eventually define
\begin{equation}\label{eq_sample_split}
\hat{F}_m=(\hat{F}^{(D_1)}_m+\hat{F}^{(D_2)}_m)/2, ~~\textrm{and}~~\hat{\Sigma}=(\hat{\Sigma}^{(D_1)}+\hat{\Sigma}^{(D_2)})/2. 
\end{equation}
Inspired by (\ref{eq_var_pca}), we apply spectral decomposition on $\hat{\Sigma}$ to get the first $K$ eigenvectors which are arranged as columns in $\hat{V}\in\RR^{M\times K}$ which is then used to construct $\hat{P}_B^{\perp}=I-\hat{V}\hat{V}^T$. We refer to this as the PCA step. In view of (\ref{eq_taylor_weighted_score}), we propose to remove the first-order bias of $F^*$ and estimate $\Theta$ with $\hat{\Theta}=\hat{P}_{B}^{\perp}\hat{F}$, where $F^*:=[F_1^{*T} , ... , F_M^{*T}]^T\in\RR^{M\times K}$ and $\hat{F}:=[\hat{F}_1^T , ... , \hat{F}_M^T]^T\in\RR^{M\times K}$. 

\begin{remark}\label{rem_1}
Since $K$, the number of eigenvectors of $\hat{\Sigma}$ to extract, is unknown in practice, the user needs to specify its value to implement the PCA step. Similar to \cite{ahn2013eigenvalue, lam2012factor}, we consider the following eigenvalue ratio approach. In particular, we estimate $K$ by
\begin{align}
    \hat{K} ~=~ \underset{j \in \{1,2,...,\bar{K}\}}{\arg\max}~\frac{\hat{\lambda}_j}{\hat{\lambda}_{j+1}} \label{selection_K}
\end{align}
where $\hat{\lambda}_{1} \geq \hat{\lambda}_{2}\geq ...$ are the eigenvalues of $\hat{\Sigma}$ and $\bar{K}$ is a pre-specified value. Similar to \cite{lam2012factor}, we set $\bar{K}=\lfloor (n \wedge M)/2 \rfloor$, as the rank of $\hat{\Sigma}$ is no greater than $n \wedge M$, and it is reasonable to look at the first half of the non-zero eigenvalues. The intuition of this approach is that by (\ref{eq_var_pca}) the sample covariance matrix $\hat{\Sigma}$ should have $K$ spiked eigenvalues and therefore the eigenvalue ratio ${\hat{\lambda}_j}/{\hat{\lambda}_{j+1}}$ is expected to reach the maximum at $j=K$. One desired property of this approach is that it does not require any knowledge of unknown population level quantities or additional tuning parameters. The theoretical justification of (\ref{selection_K}) follows the same argument as in \cite{bing2022adaptive}. We defer further technical results to  Section \ref{sec_rem_1} of the Appendix. In simulations, we implement this data-driven choice of $\hat{K}$ in \textsc{data-driven G-hive}, and it is shown to yield reasonable results. 
\end{remark}

\subsection{Approximate Inference}\label{sec_inf}
In this subsection, we consider how to construct an inference procedure for $\Theta$. Recall 
that following the analysis in Section \ref{sec_ident}, we can only argue $\Theta$ is asymptotically identifiable. To study the inferential property for $\Theta$, we have to characterize the asymptotic bias in the identification of $\Theta$. In view of (\ref{eq_taylor_weighted_score}), Theorem \ref{thm_population} below shows that the asymptotic bias satisfies
\begin{equation}\label{eq_inf_bias}
\big|\big|F^*_m-\Theta_m\big|\big|_2 =O \bigg(\frac{1}{\sqrt{p}}\bigg)~~~~~\text{and}~~~~~\big|\big|(P_B^{\perp}F^*)_m-\Theta_m\big|\big|_2 = O\bigg(\frac{1}{p}\bigg),
\end{equation}
where $(P_B^{\perp}F^*)_m$ is the $m$th row of $P_B^{\perp}F^*$. An important consequence of the above result is that the asymptotic bias may dominate the estimation error, making inference on $\Theta$ difficult or even infeasible. 

To explain the details, we focus on the estimator $\hat F$. The same argument applies to $\hat\Theta$ as well. Theorem \ref{thm_hatF} below shows that $\hat F$ is asymptotically linear. That is, under some conditions, for any $1\leq m\leq M$ and $1\leq j\leq p$,
\begin{equation}\label{eq_inf_F}
\sqrt{n}(\hat F_{mj}-F^*_{mj})=\frac{1}{\sqrt{n}}\sum_{i=1}^n \frac{Y^{(i)}_m - b'(F_m^* X^{(i)})}{b''(F_m^* X^{(i)})} e_j^TG_m^{-1}X^{(i)}+o_p(1),
\end{equation}
where $e_j$ is a unit basis vector with the $j$th entry being 1 and 0 otherwise, and $G_m$ is defined in (\ref{eq_G_m}) below. Applying the central limit theorem to the first term on the right hand side and using (\ref{eq_inf_bias}), we have
\begin{equation}\label{eq_inf_F_invalid}
\sqrt{n}(\hat F_{mj}-\Theta_{mj})=\sqrt{n}(\hat F_{mj}-F^*_{mj})+\sqrt{n}(F^*_{mj}-\Theta_{mj})\rightarrow_d N(0,\sigma^2_{mj})+O_p\Big(\sqrt{\frac{n}{p}}\Big),
\end{equation}
for some $\sigma^2_{mj}>0$. When $p=o(n)$, which is the regime considered in this work, the asymptotic bias dominates the stochastic error of the estimator $\hat F_{mj}$ so that the confidence interval based on the limiting distribution of the estimator $\hat F_{mj}$ does not yield the desired coverage probability for $\Theta_{mj}$. Unlike linear regression with hidden variables, the presence of asymptotic bias makes inference in GLMs substantially more challenging.  

To overcome this difficulty, we propose the following approximate inference framework. Specifically, we shift the parameter of interest from $\Theta$ to $F^*$ or $P_B^{\perp}F^*$, where, by (\ref{eq_inf_bias}), $F^*$ can be viewed as the first-order approximation of $\Theta$ and $P_B^{\perp}F^*$ the second-order approximation. As a result, we refer to inference on $F^*$ and $P_B^{\perp}F^*$ as {\it first-order approximate inference} and {\it second-order approximate inference}, respectively. Indeed, first-order approximate inference (i.e., inference on $F^*$) is immediately available from the result in (\ref{eq_inf_F}) or more generally, from Theorem \ref{thm_hatF}. However, it is reasonable to expect that inference on $P_B^{\perp}F^*$  would serve as a more accurate surrogate and provide more information on $\Theta$ compared to the first-order inference method.  Thus, in this work we focus on the following second-order approximate inference method for uncertainty quantification. 

Assume that the parameter of interest is defined as $u^T(P_B^\perp F^*)v$, where $u\in\RR^M$ and $v\in\RR^p$ are known vectors satisfying $||u||_2 = ||v||_2 = 1$. We thus allow for the inference on arbitrary linear combinations of parameters in our approach. Define a weighted covariance matrix as
\begin{equation}\label{eq_G_m}
G_m=\EE\Big(1+\zeta^{(i)}_m(F^*_m)\Big) X^{(i)} X^{(i)T},
\end{equation}
which corresponds to the expected Hessian of the modified quasi-likelihood function, where 
\begin{equation}\label{eq_zeta}
\zeta^{(i)}_m(F_m)=\frac{(Y^{(i)}_m - b'(F_m X^{(i)}))b'''(F_m X^{(i)})}{\{b''(F_m X^{(i)})\}^2}. 
\end{equation}
By Theorem \ref{thm_inference}, under certain conditions, we can show that
$$
\sqrt{n}u^T(\hat\Theta-P_B^\perp F^*)v/(s_n/\sqrt{n}) \rightarrow_d N(0, 1),
$$
where $s_n^2=\sum_{i=1}^n \EE(u^TP_B^\perp h^{(i)})^2$ with $h^{(i)}=(h^{(i)}_1,...,h^{(i)}_M)^T$ and $h^{(i)}_m=\bar \epsilon^{(i)}_mv^T G^{-1}_mX^{(i)}$. In addition, define $\hat h^{(i)}=(\hat h^{(i)}_1,...,\hat h^{(i)}_M)^T$, where $\hat h^{(i)}_m=\hat \epsilon^{(i)}_m v^T \hat G^{-1}_mX^{(i)}$ and
\begin{equation}\label{eq_hat_G_m}
\hat G_m=\frac{1}{n}\sum_{i=1}^n\Big(1+\zeta^{(i)}_m(\hat F_m)\Big) X^{(i)} X^{(i)T}
\end{equation}
is an estimate of $G_m$.  Theorem \ref{thm_inference} further shows that the asymptotic variance $s_n^2/n$ can be consistently estimated by $\hat s_n^2/n$, where $\hat s_n^2=\sum_{i=1}^n (u^T\hat P_B^\perp\hat h^{(i)})^2$. As a result, the $(1-\alpha)\%$ confidence interval for  $u^T(P_B^\perp F^*)v$ is given by $(u^T\hat\Theta v-q_{1-\alpha/2} \hat s_n/\sqrt{n}, u^T\hat\Theta v+q_{1-\alpha/2} \hat s_n/\sqrt{n})$, where $q_{1-\alpha/2}$ is the $(1-\alpha/2)$-quantile of a standard normal distribution. Finally, we note that while the proposed confidence interval yields the desired coverage probability for $u^T(P_B^\perp F^*)v$ rather than $u^T\Theta v$, we expect the proposed approximate inference framework to offer a valuable toolbox to quantify the uncertainty of estimating $\Theta$, and to provide useful information for inferring the magnitude of $\Theta$ in practice. This is confirmed in our simulation studies.

\begin{algorithm}
\caption{\textsc{G-hive}: Parameter Estimation and Approximate Inference}
    \textsc{Input}: i.i.d. observations $(Y^{(i)},X^{(i)}), ~i=1,...,n$, and rank $K$.
    \begin{itemize}
    \item[(1)] Randomly split the data into two folds $D_1$ and $D_2$.
    \item[(2)] Using the data in $D_1$, compute $\hat{F}_m^{(D_1)}$ by solving (\ref{MLE}).
    \item[(3)] Using the data in $D_2$, compute the sample covariance matrix $\hat{\Sigma}^{(D_2)}$ in (\ref{noise_matrix}).
    \item[(4)] Similarly, compute $\hat{F}_m^{(D_2)}$ and $\hat{\Sigma}^{(D_1)}$, and the averaged estimators $\hat{F}_m$ and $\hat{\Sigma}$ in (\ref{eq_sample_split}). 
    \item[(5)] Compute $\hat{P}_{B}^{\perp}=I_M - \hat{V}\hat{V}^T$, where $\hat{V}\in\RR^{M \times K}$ consists of columns corresponding to the first $K$ eigenvectors of $\hat{\Sigma}$.
    \item[(6)] Construct the point estimator $\hat{\Theta}=\hat{P}_{B}^{\perp}\hat{F}$, where $\hat{F}=\big[\hat{F}_1^T,...,\hat{F}_M^T\big]^T $.
    \item[(7)] Construct the $(1-\alpha)\%$ second-order approximate confidence interval, 
    $$(u^T\hat\Theta v-q_{1-\alpha/2} \hat s_n/\sqrt{n}, u^T\hat\Theta v+q_{1-\alpha/2} \hat s_n/\sqrt{n}).$$ 
    \end{itemize}
\label{GHive}
\end{algorithm}

\section{Statistical Guarantees}\label{theory}

We use $\Sigma_X=\EE(XX^T)$ and $\Sigma_Z=\EE(ZZ^T)$ to denote the covariance matrices of $X$ and $Z$, respectively, and throughout the paper we consider the asymptotic setting of $p, M, K \rightarrow \infty$ as $n\rightarrow\infty$.

\begin{assum}{(Identifiability Assumption).}\label{assum_ident}
Assume that $P_{B}\Theta=0$.  
\end{assum}

\begin{assum}{(Tail Assumption).}\label{assum_bound}
Assume that $W$ and $Z$ are sub-Gaussian vectors with bounded sub-Gaussian norm $\sigma_W^2$ and $\sigma_Z^2$, respectively. Given $X$ and $Z$, the error $Y_m-b'(\Theta_m X + B_mZ)$ is sub-exponential with bounded sub-Exponential norm $\sigma^2_{\epsilon, \max}$ for $1\leq m\leq M$, and $\underset{1 \leq i \leq n}{\max}~\underset{1 \leq j \leq p}{\max}~ \big|X^{(i)}_j\big| ~\leq~ C_0$ for some constant $C_0 > 0$. 
\end{assum}
\begin{assum}{(GLM Assumption).} \label{assum_glm}
There exist some constants $C_1,C_2, C_3>0$, such that $C_1 \leq b''(t) \leq C_2$, and $|b'(t)|$, $|b'''(t)|$, $|b''''(t)|$ are all upper bounded by $C_3$.
\end{assum}
\newpage
\begin{assum}{(Factor Model Assumption).}\label{assum_pervasiveness}
    \item[~~~~(a)] $\kappa_{A,1} \cdot p ~\leq~ \lambda_k(A^TA) ~\leq~ \kappa_{A,2}\cdot p $ for some fixed constants $\kappa_{A,1}, \kappa_{A,2} >0$ and all $1\leq k \leq K$.
    \item[~~~~(b)] $\kappa_{B,1} \cdot M ~\leq~ \lambda_k(B^TB) ~\leq~ \kappa_{B,2}\cdot M $ for some fixed constants $\kappa_{B,1}, \kappa_{B,2} >0$ and all $1\leq k \leq K$. 
    \item[~~~~(c)] $\kappa_{Z,1} ~\leq~ \lambda_k(\Sigma_Z) ~\leq~ \kappa_{Z,2} $ for some fixed constants $\kappa_{Z,1}, \kappa_{Z,2} > 0$ and all $1\leq k \leq K$.
\item[~~~~(d)]$||B_m||_2 \leq~ C_4$ for all $1 \leq m \leq M$ and for some fixed constant $C_4 > 0$.    
\end{assum}


Assumption \ref{assum_ident} ensures the identifiability of $\Theta$ as explained in Section \ref{sec_ident}. In the related literature, there exist alternative identifiability conditions. We defer the detailed discussions to \cite{lee2017improved}, \cite{bing2022adaptive}, \cite{wang2017confounder} and \cite{bai2003inferential}. Assumption \ref{assum_bound} characterizes the tail behavior of the random vectors $W$ and $Z$ and the response variable $Y$. The sub-exponential condition for $Y_m-b'(\Theta_m X + B_mZ)$ holds for most GLMs such as linear, logistic and Poisson regression.  To simplify the proof, we also assume the elements in $X$ are upper bounded by a fixed constant, which can be relaxed by allowing $C_0$ to scale with $n$ and $p$ (e.g., $C_0\asymp \sqrt{\log(np)}$ for sub-Gaussian $X^{(i)}_j$). Assumption \ref{assum_glm} on the higher order derivatives of $b(t)$ is standard for analyzing GLMs.  Finally,  Assumptions \ref{assum_pervasiveness}(a) and \ref{assum_pervasiveness}(b) are known as the pervasiveness assumption in the factor model literature \citep{fan2013large,fan2008high,chang2015high}. A concrete example of when it is satisfied is discussed in Section \ref{sec_per} of the Appendix. Assumptions \ref{assum_pervasiveness}(c) and \ref{assum_pervasiveness}(d) are also mild conditions for factor models.

We first present a theorem that characterizes the approximation bias of $F_m^*$ defined in (\ref{score_reweighted}). 
\begin{theorem}\label{thm_population}
    Under Assumptions \ref{assum_ident}- \ref{assum_pervasiveness}, for $1 \leq m \leq M$, there exists $F_m^*$ defined in (\ref{score_reweighted}) such that 
    $$
F_m^*-\Theta_m=B_m \Sigma_Z A^T\Sigma_X^{-1}+Rem_m',
    $$
where $\max_{1\leq m\leq M}\|Rem_m'\|_2=O(1/p)$. In addition, the following hold:    
    \begin{align}
    \max_{1\leq m\leq M}\EE\Big[\big(\Theta_mX + B_m Z - F_m^* X\big)^4\Big] =O\bigg( \frac{1}{p^2}\bigg)\label{eq_thm_fourthmom}
    \end{align}
    and
    \begin{align}\label{eq_thm_first_bias}
       \max_{1\leq m\leq M} ||F_m^*-\Theta_m||_2 =O \bigg(\frac{1}{\sqrt{p}}\bigg).
    \end{align}
    This further implies that
    \begin{align}
    \frac{1}{\sqrt{M}}\big|\big|F^*-\Theta\big|\big|_F =O \bigg(\frac{1}{\sqrt{p}}\bigg)~~~~~\text{\textit{and}}~~~~~\frac{1}{\sqrt{M}}\big|\big|P_B^{\perp}F^*-\Theta\big|\big|_F = O\bigg(\frac{1}{p}\bigg). \label{eq.difference}\end{align}\label{thm_expectation}
\end{theorem}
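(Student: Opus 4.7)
The plan is to Taylor-expand the score defining $F_m^*$ about the candidate first-order approximation $F_m^{(0)} := \Theta_m + B_m\Sigma_Z A^T\Sigma_X^{-1}$, and to exploit a regression-coefficient inequality together with $\lambda_{\min}(\Sigma_X)\ge 1$ to beat the naive sub-Gaussian bound. Using $\EE(Y_m\mid X,Z)=b'(\Theta_m X + B_m Z)$ and a second-order expansion of $b'$ at $F_m^*X$, equation (\ref{score_reweighted}) rearranges to $(F_m^*-\Theta_m)\Sigma_X = B_m\Sigma_Z A^T + \EE[R^* X^T]$, where $\Delta_m^* := \Theta_m X + B_m Z - F_m^*X$ and $R^* := \frac{b'''(\xi^*)}{2b''(F_m^*X)}(\Delta_m^*)^2$ with $\xi^*$ between $F_m^*X$ and $\Theta_m X + B_m Z$. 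Here $\EE(ZX^T)=\Sigma_Z A^T$ comes from the factor model (\ref{factor_model}) with $W\perp Z$. Inverting $\Sigma_X$ identifies $Rem_m' = \EE[R^* X^T]\Sigma_X^{-1}$.

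The key observation is that $\EE[R^* X^T]\Sigma_X^{-1}$ is exactly the population least-squares coefficient $\beta$ of $R^*$ on $X$, so the variance decomposition $\EE[R^{*2}] = \EE[(R^*-\beta X)^2] + \beta\Sigma_X\beta^T$ yields $\|\beta\|_2^2\le \beta\Sigma_X\beta^T/\lambda_{\min}(\Sigma_X)\le \|R^*\|_{L^2}^2/\lambda_{\min}(\Sigma_X)\le \|R^*\|_{L^2}^2$ because $\Sigma_X = A\Sigma_Z A^T+I_p \succeq I_p$. By Assumption \ref{assum_glm}, $\|R^*\|_{L^2}\lesssim\sqrt{\EE[(\Delta_m^*)^4]}$, so the target $\|Rem_m'\|_2=O(1/p)$ reduces to the fourth-moment bound $\EE[(\Delta_m^*)^4]\lesssim 1/p^2$ --- which is (\ref{eq_thm_fourthmom}) itself.

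Existence of $F_m^*$ and the fourth-moment bound are coupled, and I would resolve them together via a Newton/implicit-function-theorem argument centered at $F_m^{(0)}$. At $F_m^{(0)}$, writing $\Delta^{(0)}_m = B_m(Z - \Sigma_Z A^T\Sigma_X^{-1}X) = B_m U$ with Woodbury-computed $\EE(UU^T)=(\Sigma_Z^{-1}+A^TA)^{-1}$ of operator norm $O(1/p)$ (Assumption \ref{assum_pervasiveness}(a,c)), plus $\|B_m\|_2\le C_4$ and sub-Gaussianity of $Z,W$, gives $\EE[(\Delta^{(0)}_m)^4]\lesssim 1/p^2$ in closed form, and hence the same regression-coefficient inequality yields $\|\Sigma_X^{-1}G_m(F_m^{(0)})^T\|_2\lesssim 1/p$, where $G_m$ is the population score $F\mapsto \EE[(Y_m-b'(FX))/b''(FX)\cdot X^T]$. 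A direct calculation shows the population Hessian is $\nabla G_m(F) = -\Sigma_X - \EE[\eta(F)XX^T]$ with $|\eta(F_m^{(0)})|\lesssim |\Delta^{(0)}_m|$, and Cauchy--Schwarz then gives $|v^T\EE[\eta XX^T]v|\lesssim (v^T\Sigma_X v)/\sqrt{p}$, ensuring that $\nabla G_m$ is invertible near $F_m^{(0)}$ for large $p$. The IFT therefore produces a zero $F_m^*$ of $G_m$ in an $O(1/p)$ ball around $F_m^{(0)}$, which is the bound $\|Rem_m'\|_2=O(1/p)$. Substituting $F_m^* = F_m^{(0)}+O(1/p)$ back, the decomposition $\Delta_m^* = \alpha_m^* Z + \beta_m^* W$ with $\alpha_m^* = B_m\Sigma_Z^{-1}(\Sigma_Z^{-1}+A^TA)^{-1} - Rem_m' A$ (again Woodbury) and $\beta_m^* = \Theta_m - F_m^*$ shows $\|\alpha_m^*\|_2,\|\beta_m^*\|_2\lesssim 1/\sqrt{p}$, closing the loop on (\ref{eq_thm_fourthmom}) and (\ref{eq_thm_first_bias}).

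The Frobenius claims (\ref{eq.difference}) then follow by stacking. For the first, $\|B\Sigma_Z A^T\Sigma_X^{-1}\|_F\le\|B\|_F\|\Sigma_Z A^T\Sigma_X^{-1}\|_{\mathrm{op}}\lesssim \sqrt{M/p}$ plus $\|Rem'\|_F\lesssim \sqrt{M}/p$ gives $\|F^*-\Theta\|_F/\sqrt{M}=O(1/\sqrt{p})$. For the second, $P_B\Theta=0$ and $P_B^\perp B=0$ annihilate the leading term, so $P_B^\perp F^* - \Theta = P_B^\perp Rem'$ and $\|P_B^\perp F^*-\Theta\|_F/\sqrt{M}=O(1/p)$. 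The main obstacle is the coupled step in the third paragraph: the circular dependence between existence of $F_m^*$, $\|Rem_m'\|_2=O(1/p)$, and the fourth-moment bound is broken only by first bootstrapping at $F_m^{(0)}$ where $\Delta^{(0)}_m$ admits a closed form and then transferring via IFT, and the combination of the regression-coefficient inequality with the baseline $\lambda_{\min}(\Sigma_X)\ge 1$ is what escapes the loose $O(1/\sqrt{p})$ rate that naive entrywise bounds produce.
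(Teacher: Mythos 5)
Your proposal is correct and arrives at all four conclusions, but it takes a genuinely different route from the paper on the central step. Both arguments share the same skeleton: plug $\EE(Y_m\mid X,Z)=b'(\Theta_mX+B_mZ)$ into the weighted score, Taylor-expand $b'$ to second order, rearrange to $(F_m^*-\Theta_m)\Sigma_X=B_m\Sigma_ZA^T+\EE[R^*X^T]$, and then convert the quadratic remainder into a coefficient bound; your ``regression-coefficient inequality'' $\|\beta\|_2^2\le\beta\Sigma_X\beta^T\le\EE[R^{*2}]$ is a slightly slicker packaging of the paper's Cauchy--Schwarz step $\sup_{\|v\|_2=1}|\EE[\phi_m v]|^2\le\EE[(\cdot)^2(\Delta^*)^4]\cdot\sup_v\EE|v^T\Sigma_X^{-1/2}X|^2$, and both exploit $\Sigma_X=A\Sigma_ZA^T+I_p\succeq I_p$ in the same way. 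Where you diverge is in breaking the circularity between existence of $F_m^*$, the fourth-moment bound (\ref{eq_thm_fourthmom}), and $\|Rem_m'\|_2=O(1/p)$. The paper runs a self-consistency argument on the set $\Omega_m=\{F_m:\EE[(\Theta_mX+B_mZ-F_mX)^4]\le\gamma^4\}$: any solution of the rearranged equation satisfies $\EE[(\Delta^*)^4]\le 128c^4/p^2+C\gamma^8$, so $\gamma^4\asymp 1/p^2$ closes the loop. You instead anchor at the explicit point $F_m^{(0)}=\Theta_m+B_m\Sigma_ZA^T\Sigma_X^{-1}$, where $\Delta_m^{(0)}=B_m(Z-\tilde Z)$ has sub-Gaussian norm $O(p^{-1/2})$ (the paper's Lemma \ref{lem_tildeZ}) and the fourth moment is available in closed form, then verify the score is $O(1/p)$ and the Hessian is invertible relative to $\Sigma_X$ there, and invoke a Newton/implicit-function argument to locate $F_m^*$ within an $O(1/p)$ ball. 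Your approach buys a more honest treatment of existence (which the paper's theorem explicitly asserts but whose proof only shows that a presumed solution lies in $\Omega_m$) and avoids the fixed-point bookkeeping; the cost is that the quantitative Newton--Kantorovich step (uniform invertibility of the Hessian over the ball, which you correctly reduce to $\|\Delta_F\|_{L^2}=O(p^{-1/2})$ uniformly via $\|\delta X\|_{L^2}\le\|\delta\|_2\lambda_{\max}^{1/2}(\Sigma_X)$) is sketched rather than executed, and it is the one place a referee would ask you to fill in details. Two harmless slips: the Woodbury identity gives $I-\Sigma_ZA^T\Sigma_X^{-1}A=(\Sigma_Z^{-1}+A^TA)^{-1}\Sigma_Z^{-1}$, so your $\alpha_m^*$ has the two factors transposed (the norm bound is unaffected), and your relative Hessian perturbation $|v^T\EE[\eta XX^T]v|\lesssim(v^T\Sigma_Xv)/\sqrt{p}$ is a cleaner substitute for the truncation device the paper uses to lower-bound $\lambda_{\min}(G_m)$.
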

This theorem provides a rigorous justification of equation (\ref{eq_taylor_weighted_score}) in Section \ref{sec_ident}, where the remainder term $Rem_m'$ in $L_2$ norm is of order $O(1/p)$ uniformly over $m$, and the first-order bias is of order $O(1/\sqrt{p})$ by (\ref{eq_thm_first_bias}). More importantly, it shows the theoretical advantage of the PCA step in our estimation procedure as it reduces the inherent bias that occurs from model misspecification due to the hidden variables. As seen in (\ref{eq.difference}), using $F^*$ as a proxy of $\Theta$ inevitably incurs the approximation bias (in terms of the $L_2$ error per response) with rate $O (1/\sqrt{p})$. However, by projecting $F^*$ to the orthogonal space of $B$ via the PCA step, we can reduce the approximation bias to have a faster rate of $O(1/p)$.

The next theorem provides the bound for the stochastic error $\big|\big|\hat{F}_m - F_m^*\big|\big|_2$ and the asymptotic linear approximation of $\hat{F}_m-F_m^*$ uniformly over $1\leq m\leq M$. 

\begin{theorem}\label{thm_hatF}
Under Assumptions \ref{assum_ident}- \ref{assum_pervasiveness}, $p\sqrt{\frac{\log (p\vee M)}{n}} \log(M\vee n)=o(1)$ and $\{\log (M\vee p)\}^3=O(n)$, there exists a local maximizer $\hat{F}^{(D_j)}_m$ of $Q^{(D_j)}_m(F_m)$ for $1\leq m\leq M$ and $j\in\{1,2\}$, such that
    \begin{align}\label{eq_thm_hatF_rate}
    \max_{1\leq m\leq M}\big|\big|\hat{F}_m-F_m^*\big|\big|_2 =O_p\Bigg(\sqrt{\frac{p\log (M\vee p)}{n}}\Bigg).
    \end{align}
In addition,  for any $v\in\RR^p$ with $\|v\|_2=1$, we have
    \begin{align}\label{eq_thm_hatF_linear}
    (\hat{F}_m-F_m^*)v =\frac{1}{n}\sum_{i=1}^n \frac{Y^{(i)}_m - b'(F_m^* X^{(i)})}{b''(F_m^* X^{(i)})} v^TG_m^{-1}X^{(i)}+Rem''_m,
    \end{align}
where $G_m$ is defined in (\ref{eq_G_m}) and $\max_{1\leq m\leq M}|Rem''_m|=O_p\Big(p^{3/2}\cdot\frac{\log (p\vee M)\log (n\vee M)}{n}\Big)$.    
\end{theorem}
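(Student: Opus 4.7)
The plan is to analyze $\hat F_m^{(D_j)}$ as the solution of the score equation $\nabla Q_m^{(D_j)}(F_m)=0$ using standard M-estimator techniques, tailored to the spiked covariance structure of $\Sigma_X$. I first compute
$$\nabla Q_m^{(D_j)}(F_m) = \frac{1}{|D_j|}\sum_{i\in D_j}\frac{Y_m^{(i)} - b'(F_m X^{(i)})}{b''(F_m X^{(i)})} X^{(i)},\quad \nabla^2 Q_m^{(D_j)}(F_m) = -\frac{1}{|D_j|}\sum_{i\in D_j}\bigl(1+\zeta_m^{(i)}(F_m)\bigr) X^{(i)}X^{(i)T}.$$
By (\ref{score_reweighted}), $\EE[\nabla Q_m(F_m^*)]=0$, and by Assumption \ref{assum_glm}, $-\EE[\nabla^2 Q_m(F_m^*)]=G_m$ is positive definite. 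Under Assumption \ref{assum_pervasiveness}, $\Sigma_X = A\Sigma_Z A^T + I_p$ yields $\lambda_{\min}(G_m)\gtrsim 1$ and $\lambda_{\max}(G_m)\asymp p$.

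For the rate (\ref{eq_thm_hatF_rate}), I would first bound $\|\nabla Q_m(F_m^*)\|_2$: each coordinate $|D_j|^{-1}\sum (Y_m^{(i)}-b'(F_m^*X^{(i)}))/b''(F_m^*X^{(i)})\cdot X^{(i)}_j$ is an average of centered sub-exponential variables (Assumptions \ref{assum_bound}, \ref{assum_glm}), so Bernstein's inequality plus a union bound over $1\leq j\leq p$ and $1\leq m\leq M$ give $\max_m \|\nabla Q_m(F_m^*)\|_2 \leq \sqrt{p}\max_m \|\nabla Q_m(F_m^*)\|_\infty = O_p(\sqrt{p\log(Mp)/n})$. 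Next, establish local strong concavity of $Q_m$ on the ball $\mathcal{B}_m = \{F\in\RR^p : \|F-F_m^*\|_2 \leq r_n\}$ with $r_n\asymp\sqrt{p\log(Mp)/n}$, by decomposing $\nabla^2 Q_m(F)+G_m$ into (i) a centered empirical term (bounded in operator norm by $O_p(p\sqrt{\log(p)/n})$ via matrix Bernstein applied to $(1+\zeta^{(i)}(F_m^*))X^{(i)}X^{(i)T}$) and (ii) a Lipschitz term whose operator norm is at most $C\max_i|X^{(i)T}(F-F_m^*)|\cdot\|\hat\Sigma_X\|_{\text{op}}\lesssim \sqrt{p}\,r_n\cdot p$ using $|\zeta^{(i)}(F)-\zeta^{(i)}(F_m^*)|\leq C|X^{(i)T}(F-F_m^*)|$. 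Under the growth assumption $p\sqrt{\log(p\vee M)/n}\log(M\vee n)=o(1)$, both are $o(\lambda_{\min}(G_m))$, so $-\nabla^2 Q_m(F)\succeq \tfrac12 G_m$ on $\mathcal{B}_m$. A standard convex-analysis argument (cf.\ \cite{bing2022adaptive}) then produces a local maximizer $\hat F_m^{(D_j)}\in \mathcal{B}_m$ with $\|\hat F_m^{(D_j)}-F_m^*\|_2\lesssim \|G_m^{-1}\nabla Q_m(F_m^*)\|_2$, and averaging over folds preserves the rate.

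For the asymptotic linear expansion (\ref{eq_thm_hatF_linear}), Taylor-expand the first-order condition to obtain $\Delta_m := \hat F_m - F_m^* = \tilde G_m^{-1}\nabla Q_m(F_m^*)$, where $\tilde G_m = -\int_0^1 \nabla^2 Q_m(F_m^*+t\Delta_m)\,dt$. Then
$$v^T \Delta_m - v^T G_m^{-1}\nabla Q_m(F_m^*) = v^T G_m^{-1}(G_m - \tilde G_m)\tilde G_m^{-1}\nabla Q_m(F_m^*),$$
and $G_m - \tilde G_m$ splits into (a) the centered empirical deviation $G_m+\nabla^2 Q_m(F_m^*)$ and (b) a Lipschitz piece controlled via $|\zeta^{(i)}(F_m^*+t\Delta_m)-\zeta^{(i)}(F_m^*)|\leq C|X^{(i)T}\Delta_m|$. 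The key refinement is the sharp bound $\max_i|X^{(i)T}\Delta_m|=O_p(\sqrt{p/n}\,\log(n\vee M))$, which exploits the near-linear representation $\Delta_m\approx G_m^{-1}n^{-1}\sum_j \bar\epsilon_m^{(j)}X^{(j)}$ together with sub-exponential concentration of the scalars $X^{(i)T}G_m^{-1}X^{(j)}$ (whose variance is $O(p)$ rather than the deterministic bound $O(p)$ on the square). Combining this with Bernstein-type bounds on the bilinear form, exploiting $\|G_m^{-1}v\|_2=O(1)$ and $v^T G_m^{-1}\Sigma_X G_m^{-1}v=O(1)$, yields the claimed remainder bound $O_p(p^{3/2}\log(p\vee M)\log(n\vee M)/n)$, with uniformity over $m$ absorbed into the logarithmic factor.

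The main obstacle is attaining the sharp $p^{3/2}/n$ scaling. Crude operator-norm bounds on $G_m-\tilde G_m$ give at best $O_p(p\sqrt{\log(p)/n})$, which combined with $\|\nabla Q_m(F_m^*)\|_2=O_p(\sqrt{p\log(M)/n})$ only produces $O_p(p^{3/2}\log/\sqrt{n})$--- a factor $\sqrt{n}$ short of what is claimed. Attaining the claimed $1/n$ rate requires exploiting the bilinear structure of the remainder quadratic form and the fact that one contracted vector ($G_m^{-1}v$) is well-controlled under $\Sigma_X$, together with refined tail bounds on $\max_i|X^{(i)T}\Delta_m|$ that treat $\Delta_m$ as a centered average rather than as a worst-case direction of length $\sqrt{p\log/n}$. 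This chaining argument, combined with the uniformity over $1\leq m\leq M$, is the technically most delicate component of the proof.
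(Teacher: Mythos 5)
Your overall strategy (bound the score, establish local strong concavity on a ball of radius $r_n\asymp\sqrt{p\log(M\vee p)/n}$, then Taylor-expand the first-order condition) is the same as the paper's, but two steps do not go through as written.

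First, your local strong concavity argument has a gap precisely at the point the paper flags as the main difficulty. You bound the Lipschitz perturbation of the Hessian in \emph{unnormalized} operator norm by $C\max_i|X^{(i)T}(F-F_m^*)|\cdot\|\hat\Sigma_X\|_{\mathrm{op}}\lesssim \sqrt{p}\,r_n\cdot p = p^2\sqrt{\log(p\vee M)/n}$ and then compare it with $\lambda_{\min}(G_m)\gtrsim 1$. Under the stated condition $p\sqrt{\log(p\vee M)/n}\,\log(M\vee n)=o(1)$ this quantity is $p\cdot o(1/\log(M\vee n))$, which diverges whenever $p$ grows faster than $\log(M\vee n)$ (e.g.\ $p\asymp n^{1/4}$), so the conclusion $-\nabla^2Q_m(F)\succeq\tfrac12 G_m$ does not follow. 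The fix — and what the paper does — is to conjugate by $\Sigma_X^{-1/2}$ before taking eigenvalues: since $\lambda_{\min}(\Sigma_X)=\lambda_{\min}(A\Sigma_ZA^T+I_p)\geq 1$, one has $\lambda_{\min}(\nabla^2 L_m)\geq\lambda_{\min}(\Sigma_X^{-1/2}\nabla^2 L_m\Sigma_X^{-1/2})$, and after this normalization the empirical Gram matrix $\tfrac1n\sum_i\tilde X^{(i)}\tilde X^{(i)T}$ with $\tilde X^{(i)}=\Sigma_X^{-1/2}X^{(i)}$ has $O(1)$ operator norm, so the Lipschitz piece is only $O(p\sqrt{\log(p\vee M)/n}\,\log(M\vee n))=o(1)$. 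The paper also needs a truncation argument to lower-bound $\lambda_{\min}(\EE(1+\zeta_m^{(i)}(F_m^*))\tilde X^{(i)}\tilde X^{(i)T})$ by a constant (the weight $1+\zeta$ is not bounded below pointwise); you assert positive definiteness of $G_m$ from Assumption \ref{assum_glm} alone, which does not immediately give a quantitative lower bound.

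Second, the ``main obstacle'' you identify for the linear expansion is not real: it rests on an arithmetic slip. The product of the operator-norm Hessian deviation $O_p\bigl(p\sqrt{\log(p\vee M)/n}\bigr)$ (times $\log(n\vee M)$ for the Lipschitz piece) with the score bound $\|\nabla Q_m(F_m^*)\|_2=O_p\bigl(\sqrt{p\log(M\vee p)/n}\bigr)$ is $p^{3/2}\,\tfrac{\log(p\vee M)\log(n\vee M)}{n}$, not $p^{3/2}\log/\sqrt{n}$ — the two $n^{-1/2}$ factors multiply to $n^{-1}$. This ``crude'' product is exactly how the paper obtains the stated remainder, via $|v^TG_m^{-1}(G_m-\tilde G_m)\tilde G_m^{-1}\nabla Q_m(F_m^*)|\leq\|G_m^{-1}\|_{\mathrm{op}}\|G_m-\tilde G_m\|_{\mathrm{op}}\|\tilde G_m^{-1}\|_{\mathrm{op}}\|\nabla Q_m(F_m^*)\|_2$ together with the crude bound $\max_i|X^{(i)T}\Delta_m|\leq C_0\sqrt{p}\,\|\Delta_m\|_2$. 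The chaining argument and the refined bound $\max_i|X^{(i)T}\Delta_m|=O_p(\sqrt{p/n}\,\log(n\vee M))$ that you invoke are therefore unnecessary for the claimed rate, and in any case you do not carry them out (and the refined bound as stated is questionable, since the diagonal term $n^{-1}\bar\epsilon_m^{(i)}X^{(i)T}G_m^{-1}X^{(i)}$ alone contributes order $p/n$ up to logs).
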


Recall that the modified quasi-likelihood $Q^{(D_j)}_m(F_m)$ is non-concave and may have multiple local solutions. This theorem only applies to some local maximizer of $Q^{(D_j)}_m(F_m)$. While  the rate of convergence obtained in (\ref{eq_thm_hatF_rate}) agrees with the existing literature on M-estimation with increasing dimension \citep{portnoy1984asymptotic}, a unique challenge that had to be overcome is the fact that the covariance matrix $\Sigma_X$ has spiked eigenvalues, which is implied by the hidden variable model (\ref{factor_model}) and the pervasiveness assumption in Assumption \ref{assum_pervasiveness}. This required a more delicate analysis to control the perturbation of the Hessian matrix $\nabla^2 Q^{(D_j)}_m(F_m)$ around $F_m^*$.  


Combining the approximation error in Theorem \ref{thm_expectation} and the stochastic error in Theorem \ref{thm_hatF}, we obtain
    \begin{align}\label{eq_rate_first_order}
\frac{1}{\sqrt{M}}\big|\big|\hat F-\Theta\big|\big|_F =O \bigg(\frac{1}{\sqrt{p}}+\sqrt{\frac{p\log (M\vee p)}{n}}\bigg)=O \bigg(\frac{1}{\sqrt{p}}\bigg),
    \end{align}
where in the last step we notice that the error bound is dominated by the approximation error under the condition $p\sqrt{\frac{\log (p\vee M)}{n}} \log(M\vee n)=o(1)$ in Theorem \ref{thm_hatF}. 

Finally, the asymptotic linear expansion of $(\hat{F}_m-F_m^*)v$ in (\ref{eq_thm_hatF_linear}) shows that (\ref{eq_inf_F}) holds under the condition $p^{3/2}\cdot\frac{\log (p\vee M)\log (n\vee M)}{\sqrt{n}}=o(1)$, which validates first-order approximate inference. Since inference on $F^*$ is not the main focus of this work, we do not pursue further results along this line.

The next theorem provides the rate for the estimation error of $\hat{P}_{B}^{\perp}$ in the PCA step.
\begin{theorem}\label{thm_projection}
    Under the assumptions in Theorem \ref{thm_hatF}, if we further assume $\{\log (M\vee p)\}^5=O(n)$ and $K<n$, then we have
    \begin{align}\label{eq_thm_projection_1}
\big|\big|\hat{P}_{B}^{\perp} - P_B^{\perp}\big|\big|_F ~=~ O_p\Bigg(\frac{p}{\sqrt{M}}+\frac{1}{\sqrt{p}}+p\sqrt{\frac{\log (p\vee M)}{n}}+\sqrt{\frac{K}{n}}\Bigg).
    \end{align}
\end{theorem}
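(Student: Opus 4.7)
The plan is to invoke a Davis--Kahan sin-theta theorem for the top-$K$ eigenprojection of $\hat\Sigma$ against an ideal population target $\Sigma^\star := B\,\EE[(Z-\Gamma X)(Z-\Gamma X)^T]B^T$, a rank-$K$ matrix whose column space equals that of $B$ (and hence of $V$). Since $\|\hat P_B^\perp - P_B^\perp\|_F = \|\hat V\hat V^T - VV^T\|_F \leq \sqrt{2}\|\sin\Theta(\hat V, V)\|_F$, Davis--Kahan yields $\|\hat P_B^\perp - P_B^\perp\|_F \lesssim \|\hat\Sigma - \Sigma^\star\|_F / g$, where $g := \lambda_K(\Sigma^\star)$. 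A Woodbury calculation on $\Sigma_X = A\Sigma_Z A^T + I_p$ gives $I - \Gamma A = (\Sigma_Z A^TA + I_K)^{-1}$, so under Assumption \ref{assum_pervasiveness}(a), (c) one obtains $\|\EE[(Z-\Gamma X)(Z-\Gamma X)^T]\|_{\text{op}} \asymp 1/p$; combined with Assumption \ref{assum_pervasiveness}(b) this gives the crucial gap identity $g \asymp M/p$.

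Next, I would decompose $\hat\Sigma - \Sigma^\star$ into four pieces, each contributing one term of the stated bound after division by $g$. (i) The diagonal population noise $\EE(\epsilon\epsilon^T)$ sits outside $\Sigma^\star$ and has Frobenius norm $O(\sqrt{M})$ by Assumption \ref{assum_glm}, giving the contribution $\sqrt{M}/g = p/\sqrt{M}$. (ii) The population remainder $Rem''$ in (\ref{eq_var_pca}) arises from Taylor-expanding $b'(\cdot)/b''(\cdot)$ around $\Theta_m X + B_m Z$; using (\ref{eq_thm_fourthmom}) and Assumption \ref{assum_pervasiveness}(d) its contribution to $\|\cdot\|_F/g$ is $O(1/\sqrt{p})$. (iii) The plug-in error $\hat\Sigma - \bar\Sigma$, where $\bar\Sigma$ is the analogous matrix formed from $F^*$ rather than $\hat F$, is controlled via the first-order Taylor expansion $\hat\epsilon_m^{(i)} - \bar\epsilon_m^{(i)} \approx -(\hat F_m - F_m^*)X^{(i)}$ (valid by Assumption \ref{assum_glm}), Theorem \ref{thm_hatF}, and $\|X^{(i)}\|_2 \lesssim \sqrt{p}$ from Assumption \ref{assum_bound}; after dividing by $g$ this yields $p\sqrt{\log(p\vee M)/n}$. (iv) The sample fluctuation $\bar\Sigma - \EE(\bar\epsilon\bar\epsilon^T)$ decomposes into a rank-$K$ ``signal'' part, which concentrates at rate $(M/p)\sqrt{K/n}$ via matrix Bernstein and gives $\sqrt{K/n}$ after dividing by $g$, plus a diagonal-noise concentration part of smaller order.

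The main obstacle is term (iii). Since $\Sigma_X$ has spiked eigenvalues of order $p$, the design matrix $[X^{(1)},\ldots,X^{(n)}]$ has operator norm $\asymp \sqrt{np}$, and combined with $\|\hat F - F^*\|_F \lesssim \sqrt{Mp\log/n}$ a naive product bound would cost an additional factor of $p$. Two ingredients remove this loss. First, the cross-fitting construction (\ref{eq_sample_split}) makes $\hat F^{(D_j)}$ independent of the complementary half, so conditional on one fold the perturbation $\hat\Sigma^{(D_j)} - \bar\Sigma^{(D_j)}$ is a sum of conditionally independent matrices to which matrix Bernstein applies sharply. Second, the asymptotic linear expansion (\ref{eq_thm_hatF_linear}) allows one to split $\hat F - F^*$ into a leading stochastic term aligned with the residual noise plus a higher-order remainder of order $p^{3/2}\log/n$, enabling tighter control of the cross-terms $\bar\epsilon^{(i)}\delta^{(i)T}$ and recovering the target $p\sqrt{\log/n}$ rate. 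The secondary regularity conditions $K<n$ and $\{\log(M\vee p)\}^5 = O(n)$ ensure that the sample-level eigengap is preserved and that polylogarithmic inflations are absorbed into the main rates.
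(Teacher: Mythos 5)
Your proposal follows essentially the same route as the paper's proof: Davis--Kahan against the rank-$K$ target $\Sigma=B(\Sigma_Z^{-1}+A^TA)^{-1}B^T$ (identical to your $\Sigma^\star$ by the Woodbury identity) with eigengap $\lambda_K(\Sigma)\gtrsim M/p$, a decomposition of $\hat\Sigma-\Sigma$ whose pieces map onto the paper's $I_2$ (diagonal noise), $I_4$ (population remainder and plug-in error, handled via the $J_1,\dots,J_6$ cross-terms), and $I_1,I_3$ (sample fluctuations), together with cross-fitting and the asymptotic linear expansion of $\hat F$ to control the correlated plug-in terms. The one minor slip is your claim that the concentration of $\frac{1}{n}\sum_i\epsilon^{(i)}\epsilon^{(i)T}$ about its diagonal mean is of smaller order --- it is $O_p(M\sqrt{\log M/n})$ in Frobenius norm and hence contributes exactly at the $p\sqrt{\log(p\vee M)/n}$ level after dividing by the gap --- but since that is the same order as the term you attribute to the plug-in error, the final bound is unaffected.
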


The estimation error of the projection matrix consists of four terms. The first two terms are the asymptotic bias corresponding to the remainder term in the expansion (\ref{eq_var_pca}), and the last two terms stem from the stochastic error of $\hat{P}_{B}^{\perp}$. Under the condition $p\sqrt{\frac{\log (p\vee M)}{n}} \log(M\vee n)=o(1)$ in Theorem \ref{thm_hatF}, the stochastic error is not necessarily dominated by the asymptotic bias in (\ref{eq_thm_projection_1}). So we need to keep all four terms in (\ref{eq_thm_projection_1}). Under additional conditions $p=o(\sqrt{M})$ and $K=o(n)$, the estimator $\hat{P}_{B}^{\perp}$ is consistent in Frobenius norm.

The proof of Theorem \ref{thm_projection} relies on the Davis-Kahan Theorem (Lemma \ref{DavisKahan}) and the bound for $\|\hat\Sigma-\Sigma\|_F$, where $\Sigma=B(\Sigma_Z^{-1}+A^TA)^{-1}B^T$. Based on a more refined expansion compared to (\ref{eq_var_pca}), we can decompose the error $\hat\Sigma_{mm'}-\Sigma_{mm'}$ into pairwise interactions of 6 error terms (21 terms in total), where we further need to distinguish the analysis for the diagonal term $m=m'$ and the off-diagonal term $m\neq m'$. The resulting proof is much more technical than that for linear regression. In particular, we apply sample splitting to decorrelate the error terms in the expansion of $\hat{P}_{B}^{\perp}$, leading to a faster rate of convergence for some of the error terms.

The next theorem provides the rate for the estimation error of our final estimator $\hat{\Theta}$.
\begin{theorem}\label{thm_final}
Under the same assumptions as in Theorem \ref{thm_projection}, we have
    \begin{align*}
        \frac{1}{\sqrt{M}}\big|\big|\hat{\Theta} - \Theta\big|\big|_F&= O_p\Big(Err_1+Err_2+Err_3\Big),
    \end{align*}
where 
\begin{align*}
Err_1&=\Bigg(\frac{p}{\sqrt{M}}+\frac{1}{\sqrt{p}}+p\sqrt{\frac{\log (p\vee M)}{n}}+\sqrt{\frac{K}{n}}\Bigg) \frac{||F^*||_{\mathrm{op}}}{\sqrt{M}},\\
Err_2&= \bigg(1+\frac{p}{\sqrt{M}}\bigg)\sqrt{\frac{p\log (p\vee M)}{n}},~~\textrm{and}~~Err_3=\frac{1}{p}.
\end{align*}
\end{theorem}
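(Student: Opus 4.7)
The plan is to decompose $\hat\Theta - \Theta$ so that each piece is controlled by one of Theorems \ref{thm_expectation}--\ref{thm_projection}. Using $P_B\Theta = 0$ (Assumption \ref{assum_ident}) to write $\Theta = P_B^\perp\Theta$ and then expanding $\hat P_B^\perp = P_B^\perp + (\hat P_B^\perp - P_B^\perp)$ and $\hat F = F^* + (\hat F - F^*)$, one obtains
\[
\hat\Theta - \Theta = \underbrace{(\hat P_B^\perp - P_B^\perp) F^*}_{T_1} + \underbrace{P_B^\perp(\hat F - F^*)}_{T_2} + \underbrace{(\hat P_B^\perp - P_B^\perp)(\hat F - F^*)}_{T_3} + \underbrace{(P_B^\perp F^* - \Theta)}_{T_4}.
\]
The task is to show $\|T_j\|_F/\sqrt{M}$ is dominated by $Err_1 + Err_2 + Err_3$ for each $j$.

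Three of the four terms are immediate. $T_4$ is deterministic, and the second-order rate (\ref{eq.difference}) in Theorem \ref{thm_expectation} gives $\|T_4\|_F/\sqrt{M} = O(1/p) = Err_3$. For $T_1$, apply the matrix inequality $\|CD\|_F \leq \|C\|_F\|D\|_{\mathrm{op}}$ with $C = \hat P_B^\perp - P_B^\perp$ and $D = F^*$, so that Theorem \ref{thm_projection} combined with the factor $\|F^*\|_{\mathrm{op}}/\sqrt{M}$ reproduces exactly $Err_1$. For $T_2$, use $\|P_B^\perp\|_{\mathrm{op}} = 1$ together with $\|\hat F - F^*\|_F \leq \sqrt{M}\,\max_m\|\hat F_m - F^*_m\|_2$, and Theorem \ref{thm_hatF} delivers $\|T_2\|_F/\sqrt{M} = O_p(\sqrt{p\log(p\vee M)/n})$, which is the ``$1$''-component of $Err_2$.

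The cross term $T_3$ is where the $(p/\sqrt{M})\sqrt{p\log(p\vee M)/n}$ piece of $Err_2$ originates, and is the main bookkeeping point of the proof. Bounding
\[
\|T_3\|_F \leq \|\hat P_B^\perp - P_B^\perp\|_F\,\|\hat F - F^*\|_{\mathrm{op}} \leq \|\hat P_B^\perp - P_B^\perp\|_F \cdot \sqrt{M}\,\max_m\|\hat F_m - F^*_m\|_2
\]
and substituting Theorems \ref{thm_hatF} and \ref{thm_projection}, then dividing by $\sqrt{M}$, the leading $p/\sqrt{M}$ factor of the projection-error rate produces exactly $(p/\sqrt{M})\sqrt{p\log(p\vee M)/n}$; the remaining subleading contributions coming from the $1/\sqrt{p}$, $p\sqrt{\log(p\vee M)/n}$, and $\sqrt{K/n}$ pieces of Theorem \ref{thm_projection} are dominated by $Err_1 + Err_2 + Err_3$ under the stated conditions $p\sqrt{\log(p\vee M)/n}\log(M\vee n) = o(1)$, $\{\log(M\vee p)\}^5 = O(n)$, and $K < n$. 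No new probabilistic input is needed beyond the three cited theorems; the only delicate step is choosing the matrix-norm inequality carefully at each stage (in $T_1$, charging only $\|F^*\|_{\mathrm{op}}$ rather than $\|F^*\|_F$, which would scale with $\sqrt{M}$), so that the bias contribution $\|F^*\|_{\mathrm{op}}/\sqrt{M}$ stays isolated inside $Err_1$.
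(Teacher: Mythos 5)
Your proposal is correct and follows essentially the same route as the paper: the identical four-term decomposition $(\hat P_B^\perp - P_B^\perp)F^* + P_B^\perp(\hat F - F^*) + (\hat P_B^\perp - P_B^\perp)(\hat F - F^*) + (P_B^\perp F^* - \Theta)$, with each piece bounded by Theorems \ref{thm_projection}, \ref{thm_hatF}, and \ref{thm_population} using the same choice of norms (charging $\|F^*\|_{\mathrm{op}}$ rather than $\|F^*\|_F$ in the first term). The only cosmetic difference is that you track the subleading contributions to the cross term explicitly, whereas the paper absorbs them by noting $\|\hat P_B^\perp - P_B^\perp\|_F \lesssim 1 + p/\sqrt{M}$ under the stated conditions; both yield the same $Err_2$.
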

This theorem shows that the per-response $L_2$ estimation error of $\hat\Theta$ is bounded by three terms $Err_1, Err_2$ and $Err_3$, where $Err_1$ is inherent from the estimation error of $\hat P_B$ in Theorem \ref{thm_projection}, $Err_2$ comes from the estimation error of $\hat F$  in Theorem \ref{thm_hatF}, and $Err_3$ corresponds to the approximation error of $P_B^{\perp}F^*$ in Theorem \ref{thm_population}. When $p=o(\sqrt{M}),~K=o(n)$, $p\sqrt{\frac{\log (p\vee M)}{n}} \log(M\vee n)=o(1)$, and $||F^*||_{\text{op}}/\sqrt{M}=O(1)$, as $p,M,n, K\rightarrow \infty$, the estimator $\hat\Theta$ is consistent in the sense that $\|\hat{\Theta} - \Theta\|_F/\sqrt{M}=o(1)$. 

\begin{remark}\label{rem_rate}
To have a more refined comparison with the estimation error of $\hat F$ in (\ref{eq_rate_first_order}), we further assume $||\Theta||_{\text{op}}=O(1)$. Under this assumption, since by Theorem \ref{thm_population} we have $||F^*||_{\text{op}}\leq ||\Theta||_{\text{op}}+\|F^*-\Theta\|_F= O\big(1+\sqrt{M/p}\big)$, the rate of the estimator $\hat{\Theta}$ in Theorem \ref{thm_final} reduces to
\begin{align}\label{eq_final_rate_1}
\frac{1}{\sqrt{M}}\big|\big|\hat{\Theta} - \Theta\big|\big|_F= O_p\bigg(\frac{1}{p}+\sqrt{\frac{p}{M}}+\sqrt{\frac{p\log (p\vee M)}{n}}+\sqrt{\frac{K}{n p }}+\sqrt{\frac{p^3\log (p\vee M)}{nM}}\bigg).
\end{align}
Assuming $M\asymp p^{\alpha}$ and $K\asymp n^{\beta}$ for some positive constants $\alpha\geq 1$ and $\beta\leq 1$, the rate can be simplified to
\begin{align}\label{eq_final_rate_2}
\frac{1}{\sqrt{M}}\big|\big|\hat{\Theta} - \Theta\big|\big|_F= O_p\Bigg(\frac{1}{p^{1\wedge  \frac{\alpha-1}{2}}}+\frac{1}{p^{1/2}n^{\frac{1-\beta}{2}}}+\sqrt{\frac{p^{1\vee (3-\alpha)}\log p}{n}}\Bigg).
\end{align}
Provided $\alpha>2$ and $\beta<1$, the rate of our estimator $\hat\Theta$ in  (\ref{eq_final_rate_2}) is faster than the rate of $\hat F$ in (\ref{eq_rate_first_order}), which justifies the theoretical benefit of our proposed method over the naive MLE approach that ignores hidden variables.  
\end{remark}

Recall the notation in Section \ref{sec_inf}: $G_m$ and $\hat G_m$ are defined in (\ref{eq_G_m}) and (\ref{eq_hat_G_m}), $s_n^2=\sum_{i=1}^n \EE(u^TP_B^\perp h^{(i)})^2$ with $h^{(i)}=(h^{(i)}_1,...,h^{(i)}_M)^T$ and $h^{(i)}_m=\bar \epsilon^{(i)}_mv^T G^{-1}_mX^{(i)}$, and $\hat s_n^2=\sum_{i=1}^n (u^T\hat P_B^\perp\hat h^{(i)})^2$ with $\hat h^{(i)}=(\hat h^{(i)}_1,...,\hat h^{(i)}_M)^T$ and $\hat h^{(i)}_m=\hat \epsilon^{(i)}_m v^T \hat G^{-1}_mX^{(i)}$. Finally, we establish the limiting distribution of the estimator $u^T\hat\Theta v$ in the following theorem. 

\begin{theorem}\label{thm_inference}
Under the same assumptions as in Theorem \ref{thm_projection}, if we further assume $K=o(n)$, $p=o(\sqrt{M})$ and $\EE(u^TP_B^\perp h^{(i)})^2\geq C$ for some constant $C>0$, then for any $u\in\RR^M$ and $v\in\RR^p$ with $\|u\|_2=\|v\|_2=1$,  
\begin{align}\label{thm_inference_1}
\sup_t\Bigg|\PP\Big(\frac{u^T(\hat\Theta-P_B^\perp F^*)v}{s_n/\sqrt{n}}\leq t\Big)-\Phi(t)\Bigg|&~\leq~ C'(\delta_1+\delta_2+\delta_3) 
\end{align}
for some constant $C'>0$, where $\Phi(\cdot)$ is the c.d.f of a standard normal distribution, and the three error terms in (\ref{thm_inference_1}) are given by 
$$
\delta_1=\frac{R_n^3\big\{(\log M)^{9/2}\vee K^{3/2}\big\}}{\sqrt{n}}+R_n\sqrt{K} p^{3/2}\frac{\log (p\vee M)\log (n\vee M)}{\sqrt{n}},
$$
$$
\delta_2=R_n \|F^* v\|_2 \bigg(\sqrt{\frac{n}{pM}}+p\sqrt{\frac{\log (p\vee M)}{M}}+\frac{p\sqrt{n}}{M}\bigg),
$$
and
$$
\delta_3=R_n \bigg(\frac{1}{\sqrt{p}}+\frac{p}{\sqrt{M}}+\frac{p^{5/2}\log (p\vee M)\log (n\vee M)}{\sqrt{Mn}}\bigg),
$$
with $\|u\|_1\leq R_n$ for some $R_n>0$. Finally, the asymptotic variance $s_n^2/n$ can be consistently estimated by $\hat s_n^2/n$, i.e.,
\begin{align}\label{thm_inference_2}
\Bigg|\frac{\hat{s}_n^2}{n}-\frac{s_n^2}{n}\Bigg|=O_p\Bigg(R_n^2\Big\{\log (M\vee n)\Big\}^2\sqrt{K} \bigg\{\frac{p}{\sqrt{M}}+\frac{1}{\sqrt{p}}+p\sqrt{\frac{K\log (p\vee M)}{n}}\log \big(M\vee n\big)\bigg\}\Bigg).
\end{align}

\end{theorem}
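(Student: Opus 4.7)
The plan is to decompose $u^T(\hat\Theta - P_B^\perp F^*)v$ into a leading i.i.d.\ sum plus remainder terms, apply the classical univariate Berry-Esseen inequality to the main sum, and then convert each stochastic remainder into a Kolmogorov-metric error via the standard truncation device. Starting from
\begin{align*}
u^T(\hat\Theta - P_B^\perp F^*)v = u^T\hat P_B^\perp(\hat F - F^*)v + u^T(\hat P_B^\perp - P_B^\perp)F^* v,
\end{align*}
I substitute the asymptotic linearization $(\hat F_m - F_m^*)v = n^{-1}\sum_i h_m^{(i)} + Rem''_m$ from Theorem \ref{thm_hatF}, and split $\hat P_B^\perp = P_B^\perp + (\hat P_B^\perp - P_B^\perp)$ to obtain four terms: $T_1 = n^{-1}\sum_i u^T P_B^\perp h^{(i)}$, $T_2 = n^{-1}\sum_i u^T(\hat P_B^\perp - P_B^\perp)h^{(i)}$, $T_3 = u^T\hat P_B^\perp Rem''$, and $T_4 = u^T(\hat P_B^\perp - P_B^\perp)F^* v$. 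By construction of $F^*$ in (\ref{score_reweighted}), $\EE[h_m^{(i)}] = 0$, so $T_1$ is an i.i.d.\ mean-zero sum with total variance $s_n^2/n$.

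Applying the classical Berry-Esseen inequality to the standardized $T_1$ yields an error of order $\EE|u^T P_B^\perp h^{(1)}|^3 /\{\mathrm{Var}(u^TP_B^\perp h^{(1)})\}^{3/2}\sqrt{n}$. Bounding $|u^T P_B^\perp h^{(1)}| \leq \|u\|_1\|P_B^\perp h^{(1)}\|_\infty \leq R_n\|P_B^\perp h^{(1)}\|_\infty$, using the sub-exponentiality of $\bar\epsilon_m^{(1)}$ from Assumption \ref{assum_bound} and the uniform boundedness of $v^T G_m^{-1}X^{(1)}$ (Assumptions \ref{assum_bound}, \ref{assum_glm}, and pervasiveness control of $G_m$), together with a maximal inequality over $m\in\{1,\dots,M\}$, produces the $R_n^3(\log M)^{9/2}/\sqrt{n}$ piece; the $K^{3/2}$ alternative inside $\delta_1$ comes from the dual bound $\|P_B^\perp h^{(1)}\|_2 \leq \|h^{(1)}\|_2$ when the rank-$K$ structure of $P_B$ makes it sharper. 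The remainders are handled via Theorems \ref{thm_population}, \ref{thm_hatF}, and \ref{thm_projection}: $T_4 \leq \|u^T(\hat P_B^\perp - P_B^\perp)\|_2\|F^* v\|_2$ combined with Theorem \ref{thm_projection} delivers $\delta_2$; $T_3 \leq \|\hat P_B^\perp u\|_1\cdot \max_m|Rem''_m| \lesssim R_n\sqrt{K+1}\cdot p^{3/2}\log(p\vee M)\log(n\vee M)/n$, giving the second piece of $\delta_1$; and $T_2$ combines Theorem \ref{thm_projection} with the $O_p(\sqrt{K/n})$-scale of $n^{-1}\sum_i h^{(i)}$, contributing to $\delta_1 \vee \delta_3$. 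Each bound of the form $|X_n - Y_n| \leq \tau_n$ with probability at least $1 - \alpha_n$ converts into a Kolmogorov-norm error of order $\tau_n + \alpha_n$ by the uniform $1/\sqrt{2\pi}$ density bound for $\Phi$.

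For the variance-consistency claim (\ref{thm_inference_2}), I telescope $\hat s_n^2 - s_n^2$ through four plug-in substitutions: $P_B^\perp \to \hat P_B^\perp$ (via Theorem \ref{thm_projection}), $G_m^{-1}\to\hat G_m^{-1}$ (via a Neumann expansion after bounding $\|\hat G_m - G_m\|_{\mathrm{op}}$ through a Taylor expansion in $\hat F_m - F_m^*$ and Theorem \ref{thm_hatF}), $\bar\epsilon_m^{(i)}\to\hat\epsilon_m^{(i)}$ (another Taylor expansion in $\hat F_m - F_m^*$), and finally $\EE\to n^{-1}\sum_i$ (Bernstein's inequality under Assumption \ref{assum_bound}). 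The main obstacle will be $T_2$, and to a lesser extent $T_4$: because $\hat P_B^\perp$ depends on the full data, the naive Cauchy-Schwarz bound $\|\hat P_B^\perp - P_B^\perp\|_F \cdot \|n^{-1}\sum_i h^{(i)}\|_2$ is far too loose. The resolution is to exploit the cross-fitted construction in (\ref{eq_sample_split}) from Algorithm \ref{GHive}: the projection estimated on one fold is independent of the scores on the other, enabling conditional mean-zero arguments that preserve a $\sqrt{K/n}$ rate in place of $\sqrt{M/n}$. A secondary subtlety is tracking the dependence on $\|u\|_1 \leq R_n$ through every maximal inequality, since the polylog exponent in $\delta_1$ is tight only after carefully separating $\log M$ from $\log n$ contributions inside the sub-exponential concentration steps.
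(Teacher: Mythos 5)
Your decomposition and the Berry--Esseen treatment of the leading term $T_1$ match the paper's proof (the paper writes $u^T(\hat\Theta-P_B^\perp F^*)v=I_1+I_2+I_3$ with $I_1=u^TP_B^\perp(\hat F-F^*)v$, $I_2=u^T(\hat P_B^\perp-P_B^\perp)F^*v$, $I_3=u^T(\hat P_B^\perp-P_B^\perp)(\hat F-F^*)v$, and bounds $\EE|u^TP_B^\perp h^{(i)}|^3\leq\{\EE(u^TP_B^\perp h^{(i)})^4\}^{3/4}\lesssim R_n^3\{(\log M)^{9/2}\vee K^{3/2}\}$ exactly as you sketch). However, there is a genuine gap in how you control every term involving $\hat P_B^\perp-P_B^\perp$. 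You propose to bound $T_4=u^T(\hat P_B^\perp-P_B^\perp)F^*v$ by $\|u^T(\hat P_B^\perp-P_B^\perp)\|_2\,\|F^*v\|_2$ and invoke Theorem \ref{thm_projection}. Since $\|u\|_2=1$, that route only gives $\|\hat P_B^\perp-P_B^\perp\|_{\mathrm{op}}\leq\|\hat P_B^\perp-P_B^\perp\|_F=O_p\bigl(p/\sqrt{M}+1/\sqrt{p}+p\sqrt{\log(p\vee M)/n}+\sqrt{K/n}\bigr)$, so $\sqrt{n}\,T_4\lesssim\|F^*v\|_2\bigl(p\sqrt{n}/\sqrt{M}+\sqrt{n/p}+\cdots\bigr)$, which exceeds the target $\delta_2=R_n\|F^*v\|_2\bigl(\sqrt{n/(pM)}+p\sqrt{\log(p\vee M)/M}+p\sqrt{n}/M\bigr)$ by a factor of order $\sqrt{M}$ (compare $\sqrt{n/p}$ with $\sqrt{n/(pM)}$). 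The paper explicitly flags this in Remark \ref{rem.sharper}: the Frobenius-norm control of Theorem \ref{thm_projection} is \emph{not} sharp enough here. What is actually needed is the columnwise bound $\max_j\|(\hat P_B-P_B)e_j\|_2\lesssim\eta\sqrt{p/M}$ of Lemma \ref{lem_proj_L2}, with $\eta=\sqrt{p/M}+1/p+\sqrt{p\log(M\vee p)/n}$, proved by an entirely separate argument: one writes $\hat B=\hat V_K\hat D_K\sqrt{M}=BH+(\text{noise})\hat V_K\hat D_K^{-1}/\sqrt{M}$, bounds $\|e_j^T(\hat B-BH)\|_2$ row by row using the off-diagonal estimates from the proof of Theorem \ref{thm_projection}, and then propagates this through $\hat P_B=\hat B(\hat B^T\hat B)^{-1}\hat B^T$. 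Combined with $|I_2|\leq\|u\|_1\max_j\|(\hat P_B-P_B)e_j\|_2\|F^*v\|_2$, this gains the missing $\sqrt{p/M}$ factor and is what produces $\delta_2$; the same lemma is indispensable for $I_3$ (your $T_2$ plus the cross term in $T_3$) to obtain $\delta_3$, and for $\|u^T(\hat P_B^\perp-P_B^\perp)\|_1\lesssim R_n\eta\sqrt p$ in the variance-consistency step. Your proposal contains no substitute for this lemma, so the claimed rates cannot be reached.

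A secondary, related issue: your handling of $T_2=n^{-1}\sum_i u^T(\hat P_B^\perp-P_B^\perp)h^{(i)}$ asserts an $O_p(\sqrt{K/n})$ scale for $n^{-1}\sum_i h^{(i)}$ via a conditional mean-zero cross-fitting argument, but $\|n^{-1}\sum_i h^{(i)}\|_2=O_p(\sqrt{M/n})$ entrywise-summed, and $\hat P_B^\perp$ is built from both folds, so no clean conditional independence is available for this particular product. The paper does not use such an argument: it bounds $I_3$ deterministically by $\|u\|_1\max_j\|(\hat P_B-P_B)e_j\|_2\,\|(\hat F-F^*)v\|_2$ with $\|(\hat F-F^*)v\|_2\lesssim\sqrt{M}\bigl(n^{-1/2}+p^{3/2}\log(p\vee M)\log(n\vee M)/n\bigr)$, again leaning on Lemma \ref{lem_proj_L2}. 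The rest of your outline (the $\xi_1,\xi_2$ remainders from the linearization contributing $R_n\sqrt{K}p^{3/2}\log(p\vee M)\log(n\vee M)/\sqrt{n}$, and the telescoping of $\hat s_n^2-s_n^2$) is consistent with the paper's argument in structure, but each of those steps also silently requires $\|u^T\hat P_B^\perp\|_1\lesssim R_n\sqrt{K}$, which once more rests on the columnwise projection bound you have not supplied.
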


We characterize the accuracy of the Gaussian approximation of $u^T\hat\Theta v$ via the Berry–Esseen bound (\ref{thm_inference_1}), where $\delta_1$ corresponds to the Gaussian approximation error of $\hat F$ which comes from the asymptotic linear expansion (\ref{eq_thm_hatF_linear}) in Theorem \ref{thm_hatF}, $\delta_2$ corresponds to the estimation error of $\hat P_B$, and $\delta_3$ corresponds to the product error of $\hat F$ and $\hat P_B$. To simplify the Berry–Esseen bound (\ref{thm_inference_1}), we further assume that $R_n=O(1)$, $K=O(1)$ and $\|F^* v\|_2=O(1)$. Then  the Berry–Esseen bound (\ref{thm_inference_1}) reduces to
\begin{align}\label{thm_inference_3}
\frac{(\log M)^{9/2}}{\sqrt{n}}+p^{3/2}\frac{\log (p\vee M)\log (n\vee M)}{\sqrt{n}}+\sqrt{\frac{n}{pM}}+p\sqrt{\frac{\log (p\vee M)}{M}}+\frac{p\sqrt{n}}{M}+\frac{1}{\sqrt{p}}.
\end{align}
Assuming $M\asymp n^{r_M}$ and $p\asymp n^{r_p}$ for some positive constants $r_M$ and $r_p$, the above bound goes to 0 when $r_p< 1/3$ and $r_M> (1-r_p)\vee (\frac{1}{2}+r_p)$ hold.  The condition $r_p<1/3$ on the number of covariates $p$ is comparable to the requirement on the dimensionality for M-estimation or the intrinsic dimensionality for sparse GLMs. The condition $r_M> (1-r_p)\vee (\frac{1}{2}+r_p)$ is unique in our hidden variable model, which implies that the number of responses $M$ needs to be large enough such that we can borrow information from different responses to better estimate $P_B$. It can also be viewed as a kind of blessing of dimensionality, as the error (\ref{thm_inference_3}) generally decreases with larger $M$.

Finally, as long as the asymptotic variance $s_n^2/n$ can be consistently estimated by $\hat s_n^2/n$ shown by (\ref{thm_inference_2}), the second-order approximate confidence interval proposed in Section \ref{sec_inf} yields the desired coverage probability for $u^T(P_B^\perp F^*)v$.

\section{Simulation Results}\label{sec_sim}
Here we present our simulation results, which can be divided into three categories: the approximation bias of $F^*$ and $P_B^{\perp}F^*$, the estimation error of $\hat{\Theta}$, and statistical inference of $\hat{\Theta}$. We first present the data generating mechanism, and then discuss each result in the above categories. 
\subsection{The Data Generating Mechanism}
To satisfy Assumption \ref{assum_pervasiveness}, we first generate $pK$ i.i.d. $N(0,1)$ random variables to construct the matrix $A\in\RR^{p\times K}$ and normalize each of the $p$ rows to have an $L_2$ norm of 1. The usage of Gaussian random variables to satisfy Assumption \ref{assum_pervasiveness} is formally justified in Section \ref{sec_per} in the Appendix. We then generate $n$ i.i.d  random vectors $Z\sim N(\mathbf{0}_K,\Sigma_Z)$, where $\Sigma_Z$ is a circulant matrix with $1$'s on the diagonal and a decay rate of $-0.5$. Each component of $W$ is  drawn independently from $N(0,1)$ and we compute $X = AZ + W$ to generate the observed covariate matrix $X \in \RR^{p \times n}$. Similarly, each component of the hidden coefficient matrix $B\in\RR^{M\times K}$ is i.i.d. $N(0,1)$ and we normalize each of the $M$ rows to have an $L_2$ norm of 1. Then we multiply $B$ by a positive scalar $\eta \in \RR$, where $\eta$ is a parameter we can tweak to alter the influence of the hidden variables. A larger $\eta$ value corresponds to larger confounding from the hidden variables $Z$. Each element of the coefficient matrix $\Theta\in\RR^{M \times p}$ is i.i.d. $ N(0,1)$ and we normalize each row to have an $L_2$ norm of 1. We project $\Theta$ onto the orthogonal column space of $B$ to satisfy Assumption \ref{assum_ident} and get the final value for $\Theta$. Lastly, we generate  $n$ i.i.d. copies of $Y_m\in\{0,1\}$ from a Bernoulli distribution with $\PP(Y_m=1|X,Z)=\exp(\Theta_mX + B_mZ)/[1+\exp(\Theta_mX + B_mZ)]$ for each $1 \leq m \leq M$. Throughout the simulations, we set $K=3$. 

\subsection{The Methods}
We consider three variants of our \textsc{G-hive} algorithm. \textsc{data driven G-hive} is the proposed method that uses a data driven estimate of $K$ mentioned in Remark \ref{rem_1}. The other two are oracle type estimators, \textsc{Oracle(K) G-hive}, corresponding to the algorithm with the true value of $K$ given and \textsc{Oracle(P) G-hive}, the algorithm with the true projection matrix $P_B$ given. These two oracle type estimators illustrate the impact of estimating $K$ with $\hat{K}$ and the impact of estimating $P_B^{\perp}$ with $\hat{P}_B^{\perp}$ in our method. The baseline method we compared against was the naive maximum likelihood estimator that ignores the hidden variables. This is denoted as \textsc{naive mle}  and was implemented with the \texttt{glm} function in \texttt{R}.

\subsection{Evaluating the Approximation Bias}
As seen from Theorem \ref{thm_population}, the first-order approximation bias $\big|\big|F^*-\Theta\big|\big|_F/\sqrt{M}$ and the projected second-order approximation bias $\big|\big|P_B^{\perp}F^*-\Theta\big|\big|_F/\sqrt{M}$ decays with $p$ with order $O(1/\sqrt{p})$ and $O(1/p)$, respectively. This characterizes the inherent and unavoidable gap between the \say{true} parameter in the misspecified GLM and the true parameter in the correctly specified GLM. We set $M=3$, $\eta=10$, and the results were averaged over $r=20$ repetitions. As obtaining the \say{true} $F^{*}$ amounts to finding the solution to the estimating equation (\ref{score_reweighted}) for each $1 \leq m \leq M$ which does not have a closed form solution, we instead compute the solution to the modified quasi-likelihood function in (\ref{MLE}) with an extremely large $n=2\times10^5$ via the Monte Carlo approach. The results are shown in the left graph in Figure \ref{fig_1}. It is apparent that as $p$ increases, both forms of the approximation bias indeed decay with the projected bias being much smaller than the non-projected bias. Surprisingly, the projected approximation bias is very close to 0, even if $p$ is as small as 3. This confirms the theory in Theorem \ref{thm_population} and also validates the projection step in our \textsc{G-hive} method.
\begin{figure}[h!]
\centering
\includegraphics[width=1\linewidth]{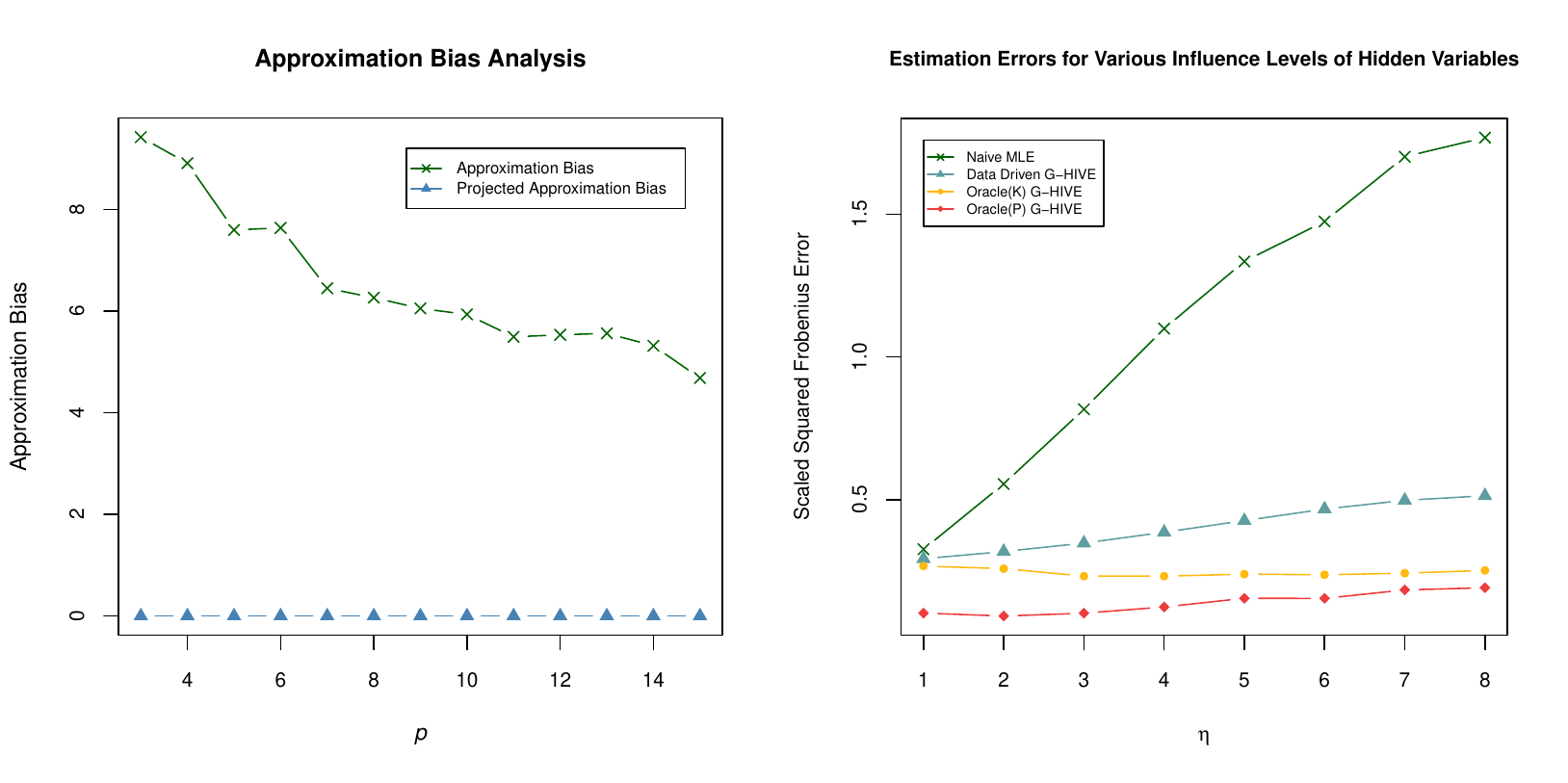}
\caption{The (left) graph shows the approximation bias and the projected approximation bias which correspond to $||F^*-\Theta||_F/\sqrt{M}$ and $||P_B^{\perp}F^*-\Theta||_F/\sqrt{M}$, respectively, in the $M=3,~K=3,~\eta=10$ setting with $p \in \{3,...,15\}$ and the number of repetitions being $r=20$. The (right) graph shows the estimation error $||\widehat{\Theta}-\Theta||_F^2/\sqrt{pM}$ where we vary $\eta$ to take values in $\{1,2,...,8\}$ and have the setting of $p=M=4,~K=3,~ n=100$ averaged over $r=500$ repetitions.}\label{fig_1}
\end{figure}

\subsection{Evaluating the Estimation Error of $\hat\Theta$}
Next we evaluate the estimation error of $\hat{\Theta}$ in three scenarios: (1) when we vary the level of influence of the hidden variables through the magnitude of $\eta$, (2) when we increase the sample size $n$, and (3) when we increase the dimension of the response $M$. For (1) we vary $\eta$ to take values in $\{1,2,...,8\}$ and have the setting of $p=M=4,~K=3,~ n=100$ averaged over $r=500$ repetitions. The results are shown in the right graph in Figure \ref{fig_1}. Recall that the larger the $\eta$ value, the larger the effect of the hidden variables and thus, a more challenging simulation setting. It is not surprising that the \textsc{naive mle} method performs gradually worse, as the magnitude of $\eta$ increases, the MLE is obtained with respect to a model that is becoming more and more misspecified. In contrast, the three \textsc{G-hive} based methods are more robust to the effect of $\eta$, which verifies that the signature projection step in \textsc{G-hive} mitigates the effects of misalignment between the misspecified model and the true model. Lastly, it is reasonable to see an increase in performance as we supply more model information to the estimators such as the true $K$ value or the true projection matrix $P_B$.

For (2), to verify the consistency of our estimator, we gradually increase $n$ to take values from $100$ to $400$ in increments of $50$ and average over $r=200$ repetitions with the model setting being identical otherwise ($M=p=4,~K=3,~\eta=4$). The results are shown in the left graph of Figure \ref{fig_2}, and it is apparent that all four methods show an improvement in performance as $n$ increases.

For (3), to explore performance in high-dimensional response settings, we vary $M$ to take values in $\{4,8,12,16,20\}$ and have the setting of $p=4,~K=3,~ n=200$ averaged over $r=100$ repetitions. The results are shown in the right graph of Figure \ref{fig_2} and they indicate that as $M$ grows, the estimation error slightly increases for all four methods. This coincides with the estimation error of our estimator discussed in Remark \ref{rem_rate} in that the term $\sqrt{p\log (p \vee M)/n}$ in (\ref{eq_final_rate_1}) scales with $\sqrt{\log M}$.  The three \textsc{G-hive} estimators uniformly outperforming the \textsc{naive mle} method shows that \textsc{G-hive} is viable and competitive for a wide range of response dimension values.

\begin{figure}[h!]
\centering
\includegraphics[width=1\linewidth]{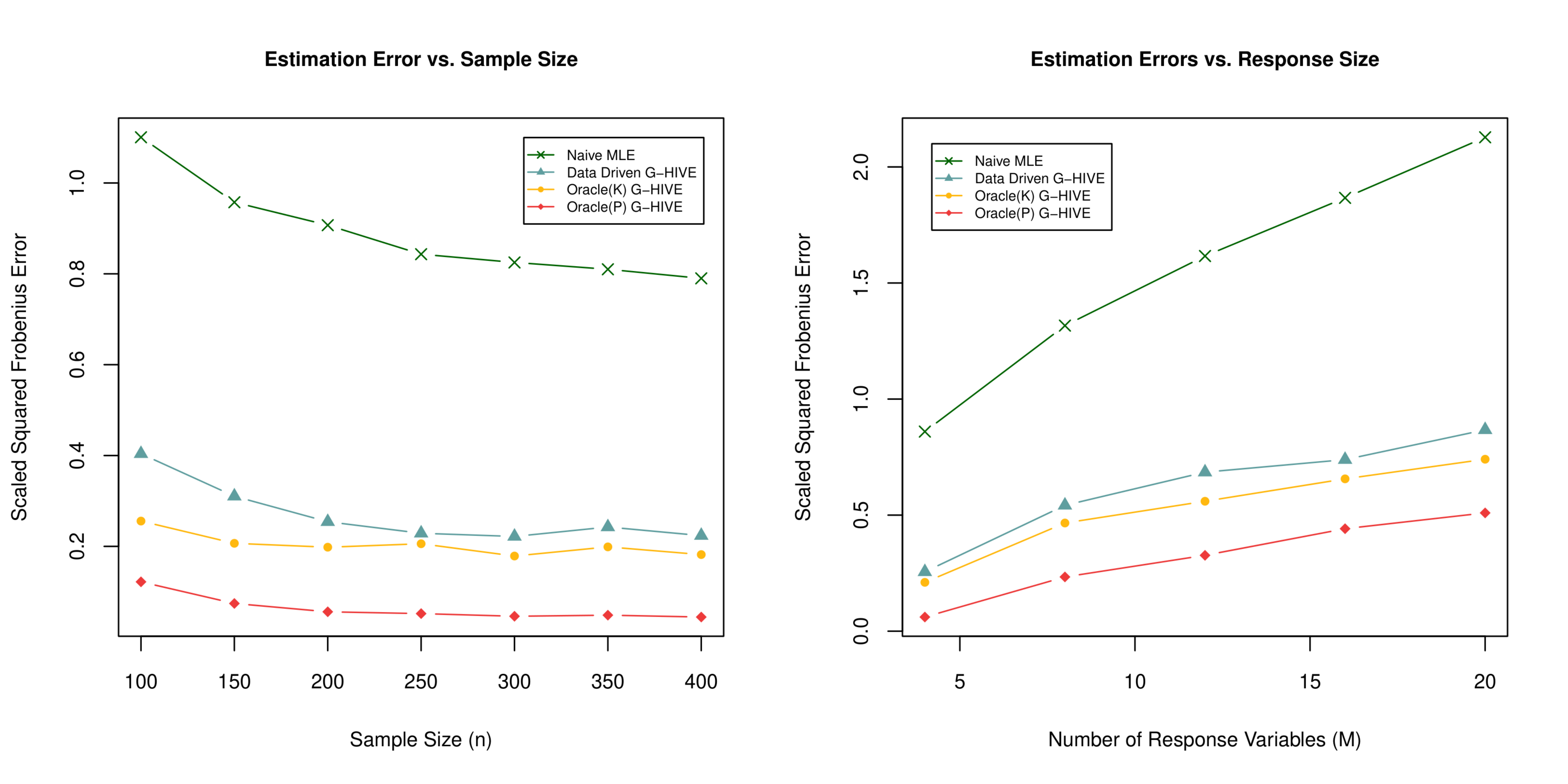}
\caption{The (left) graph shows the estimation error $||\widehat{\Theta}-\Theta||_F^2/\sqrt{pM}$ when we vary $n$ to take values from $100$ to $400$ in increments of $50$ and have the setting of $p=M=4,~K=3,~ \eta=4$ averaged over $r=200$ repetitions. The (right) graph shows the estimation error $||\widehat{\Theta}-\Theta||_F^2/\sqrt{pM}$ when we vary $M$ to take values in $\{4,8,12,16,20\}$ and have the setting of $p=4,~K=3,~ n=200$ averaged over $r=100$ repetitions.}\label{fig_2}
\end{figure}

\subsection{Evaluating the asymptotic normality of $\hat\Theta$}
Lastly we evaluate the asymptotic normality of $\hat{\Theta}$ with our proposed \textsc{data driven G-hive}. We are in the similar setting of $M=p=4,~K=3,~\eta=4$ and have results for $n=40$ and $n=70$ averaged over $r=100$ repetitions with $\alpha=0.05$. While Theorem \ref{thm_inference} only provides coverage guarantees for $ P^{\perp}_B F^{*}$, we also include the coverage probability pertaining to the true $\Theta$. Additionally, while Theorem \ref{thm_inference} provides inference guarantees for $u^T P^{\perp}_B F^{*} v$ for any real vectors $u \in \RR^M,~ v \in \RR^p$, for simplicity, we fix $u=v=[1,0,0,0]^T$ corresponding to \((P^{\perp}_B{F^*})_{11}\). We include in Table \ref{table_1} the estimated standard error $\hat{s}_n/\sqrt{n}$, the confidence interval lengths, and the coverage probabilities for both our \textsc{data driven G-hive} method and the \textsc{naive mle} method implemented with the \texttt{glm} function in \texttt{R}. For \textsc{naive mle}, the constructed confidence intervals are overly narrow which is to be expected for Wald confidence intervals for misspecified models. Coverage probabilities worsen as the sample size increases, which is to be expected as model misspecification bias is left unaddressed. Thus, ignoring the hidden variables when fitting the GLM yields misleading inference results. On the other hand, for \textsc{data driven G-hive}, the coverage probabilities closely match the $1-\alpha$ level, showing the validity of the constructed confidence interval. It is interesting to see that the coverage probability for the true $\Theta$ is close to $1-\alpha$ as well. This implies that while the approximation bias dominates the stochastic error of the estimator and prevents us from having standard inference results for $\Theta$ as shown in (\ref{eq_inf_F_invalid}), in practice, for large enough $n$, the inference results for $P^{\perp}_B F^{*}$ can be used as a proxy for $\Theta$. 

\begin{table}[h!]
\centering
\begin{tabular}{lccccc}
\toprule
& ~~~~\(n\)~~~ & ~CI Length~ & ~Std. Err.~ & Coverage for \((P^{\perp}_B F^*)_{11}\) & Coverage for \(\Theta_{11}\) \\
\midrule
\multirow{2}{*}{\textsc{data driven g-hive}} 
& 40 & 7.77 & 1.98 & 0.98 & 0.98 \\
& 70 & 6.40 & 1.63 & 0.99 & 0.99 \\
\midrule
\multirow{2}{*}{\textsc{naive mle}} 
& 40 & 1.75 & 0.45 & 0.77 & 0.75 \\
& 70 & 1.11 & 0.28 & 0.62 & 0.61 \\
\bottomrule
\end{tabular}
\caption{Inference results showing the length of the confidence interval, standard error of the estimates, and the coverage probabilities for \(u^\top P^{\perp}_B F^{*} v\) and \(u^\top \Theta v\) at level \( \alpha = 0.05 \) for \(u^\top = v^\top = [1,0,0,0]\), averaged over \(r=100\) repetitions.}
\label{table_1}
\end{table}



All in all, these simulation results highlight the soundness of the theory and demonstrate the feasibility and advantages of our \textsc{G-hive} method.

\section{Real Data Analysis}
We apply our \textsc{data driven G-hive} procedure to a dataset from \cite{chicco2019computational} regarding mesothelioma, a type of lung cancer. Specifically, this dataset consists of real electronic health records on 324 patients in Turkey of which 96 are diagnosed with mesothelioma and 228 are not. The dataset includes 33 explanatory variables including age, platelet count, white blood cell count, etc., but in order to facilitate the analysis, we included the explanatory variables that were shown to be meaningful in terms of reducing the mean square error (MSE) in \cite{chicco2019computational}. 
We removed categorical variables and variables that were highly correlated (a correlation coefficient $\geq 0.91$). We further removed 2 explanatory variables, \say{asbestos exposure} and \say{duration of asbestos exposure} and considered them to be hidden confounders and treated these variables as our unobserved $Z$ variables. In the end, we ran the data analysis on 9 continuous covariates. 

It is known in the medical and biology literature that asbestos, a term applied to mineral species that occur in fibrous forms, can cause chronic inflammation \citep{board2006asbestos}. It is also well known that an inflammatory response causes changes in the components of our blood such as white blood cell count, etc. Thus, it is straightforward to see that asbestos related explanatory variables closely affect other explanatory variables in the dataset (white blood cell count (WCC), etc.). Also, according to \cite{chicco2019computational}, long exposure to asbestos makes mesothelioma very likely. Thus, asbestos related variables affect the response variable \say{diagnosis of mesothelioma} as well. Hence, if we remove asbestos related variables from the dataset, we can consider them hidden confounders that affect both the observed covariates and the response variables and as a result align the real dataset with the model setup we have for our method, \textsc{G-hive}. 

To construct a multivariate response data structure, we include three symptom related response variables from the dataset (\say{chest ache}, \say{dyspnoea}, \say{patient's ability to perform normal tasks}) along with the main response of interest, \say{diagnosis of mesothelioma}. Thus, we have $M=4,~p=9,~n=324$ for our setting. All of the explanatory variables were standardized to have 0 mean and unit variance prior to the data analysis. The resulting $\widehat{\Theta}_{1\cdot}$, i.e. the coefficients pertaining to the diagnosis of mesothelioma are shown below:

\begin{table}[ht]
  \centering
  \caption{The coefficient values relating the explanatory variables to the main response variable (diagnosis of mesothelioma) obtained with \textsc{naive mle} and \textsc{G-hive} in the lung cancer dataset.}
  \label{table_tensor}
  \resizebox{\textwidth}{!}{%
    \begin{tabular}{@{}lrrrrrrrrrr@{}}
      \toprule
      Method & Lung Side & WCC & Platelets & Sedim. & Albumin & Glucose & PLD & Pleural Prot. & Pleural Thick.\\
      \midrule
      \textsc{naive mle} & 0.2057  & -0.1091 & -0.2896 & 0.0895 & 0.1009 & 0.0613 & -0.0496 & -0.1503 & 0.0861 \\
      \textsc{G-hive} & 0.1941 & -0.1116 & -0.4111 & 0.1759 & 0.1148 & 0.0706 & 0.0889 & -0.1395 & 0.1496 \\
      \bottomrule
    \end{tabular}%
  }
\end{table}

As it is impossible to know the ground truth, we rely on the results provided in \cite{chicco2019computational} to gauge the accuracy of our method. The authors in \cite{chicco2019computational} conclude \say{lung side} and 
\say{platelet count} to be the two most important variables in classifying whether a patient has mesothelioma or not, and this is consistent with the findings with our method, even in the setting of having the asbestos related variables considered hidden confounders and removed. In terms of magnitude, the coefficient values pertaining to \say{lung side} and \say{platelet count} are larger than the other features. Since all of the features were standardized beforehand, this is a good indication that our method produces reasonable results. Additionally, in \cite{chicco2019computational}, the authors claim that there is a positive correlation between \say{lung side} and the \say{diagnosis of mesothelioma,} while there is a negative correlation between \say{platelet count} and the \say{diagnosis of mesothelioma.} Our results are aligned with this fact from the literature as well, since the corresponding estimates given in Table \ref{table_tensor} are positive and negative, respectively. While the \textsc{naive mle} method showed similar results, the biggest difference was in the magnitude of the most important covariate, \say{Platelet,} for which our \textsc{G-hive} method better represented the stark effect. Thus, our results appear to be in line with the current medical literature, even in the presence of hidden confounders. 

We also applied our \textsc{G-hive} procedure to analyze another NHANES dataset (\cite{nhanes2017}). We focused on the general estimation ability of \textsc{G-hive} and also highlighted the effect of hidden variables in the context of confounding. The detailed results and discussion are deferred to Appendix \ref{sec_NHANES}.

\section{Discussion}

In this paper we introduced \textsc{G-hive}, a unified framework and implementable pipeline for estimation and approximate inference in multivariate response GLMs with hidden variables that combines a novel bias correcting step with reweighted estimating equations and a spectral decomposition based projection step. More specifically, we define a novel pseudo-parameter $F^*$ via an inverse variance reweighted score, then remove its leading bias term by projecting onto the orthogonal complement of the latent factor column space via PCA on the covariance matrix of reweighted residuals. Theoretically, we derive the convergence rates of the first and second-order approximations, $F^*$ and $P_B^{\perp}F^*$, to the true parameter, $\Theta$. We also establish convergence rates for the estimation error of $\hat F$, $\hat P_B$, and the final proposed estimator, $\hat\Theta=\hat P_B^{\perp}\hat F$. A Berry–Esseen bound that leads to valid Gaussian inference for linear combinations of $P_B^{\perp}F^*$ is also derived. Empirically, our simulation results show that \textsc{G-hive} is much more robust to confounding compared to the baseline method in multiple $p,M,n$ settings, and these robustness and deconfounding benefits of \textsc{G-hive} were shown to extend to real-data analyses with lung cancer data and the NHANES dataset. 

Our approach relies on standard but substantive assumptions. Directions for extending the current work include handling the general $\Sigma_W$ setting and the high-dimensional $p>n$ setting via regularized estimators, but we leave
these topics for future study.


\bibliographystyle{ims}
\bibliography{ref.bib}

\newpage

\appendix

\bigskip
\bigskip

\noindent {\bf \Large Appendix}\\

\noindent In the appendix, we use $C$ as a generic constant, which can be different in different lines. 


\section{Remark on Assumption \ref{assum_pervasiveness}} \label{sec_per}
In this section we provide an example of when Assumption \ref{assum_pervasiveness}
 is satisfied. Suppose each element in the $M \times K$ matrix $B$ is i.i.d. from a centered Gaussian distribution with a bounded variance. Without loss of generality, let's assume the variance $\sigma^2=1$. By standard high probability bounds for Gaussian covariance estimation (\cite{vershynin2018high}) and Weyl's inequality (\cite{horn1994topics}), it can be shown that the smallest eigenvalue of $B^TB$ is on the order of $M$. The same logic can be applied to $A$ to show that its smallest eigenvalue is on the order of $p$ with high probability. Standard Gaussian covariance estimation bounds give us with probability $\geq 1-2e^{-t}$ that
\begin{align*}
    \Bigg|\Bigg|\frac{1}{M}\sum_{i=1}^M B^TB - \EE\big[B^TB\big]\Bigg|\Bigg|_{\text{op}} ~&\leq~ C \cdot \Bigg( \sqrt{\frac{K+t}{M}} ~+~ \frac{K + t}{M}\Bigg) \\
    \Rightarrow~~\Bigg|\Bigg|\frac{1}{M}\sum_{i=1}^M B^TB - I_K\Bigg|\Bigg|_{\text{op}} ~&\leq~ C \cdot \Bigg( \sqrt{\frac{K+t}{M}} ~+~ \frac{K + t}{M}\Bigg)
\end{align*}
 for some constant $C > 0$ that does not depend on $M$. This holds because we can regard the $M$ columns in $B^T$ as independent realizations of $N(0,I_K)$. Then we have from Weyl's inequality that
 \begin{align*}
    \lambda_{\min}\bigg(\frac{1}{M}B^TB\bigg) ~&\geq~ \lambda_{\min}\bigg(\frac{1}{M}B^TB - I_K\bigg)  + \lambda_{\min}\bigg(I_K\bigg) \\
    ~&\geq~ -\Bigg|\Bigg|\frac{1}{M}\sum_{i=1}^M B^TB - I_K\Bigg|\Bigg|_{\text{op}}+\lambda_{\min}\bigg(I_K\bigg) \\
    ~&\geq~ 1 ~-~ C \cdot \Bigg( \sqrt{\frac{K+t}{M}} ~+~ \frac{K + t}{M}\Bigg)
 \end{align*}
 where the last line holds with probability $\geq 1-2e^{-t}$. Choosing $ t = \log M$, it is apparent that 
 \begin{align*}
     \lambda_{\min}(B^TB) ~&\geq~ M ~-~ C\Big(\sqrt{M\log M} ~+~ \log M ~+~ K\Big) \\
     ~&\gtrsim~ M
 \end{align*}

\section{Collection of Proofs}\label{sec_theory_proof}

\begin{lemma}\label{lem_tildeZ}
Under Assumptions \ref{assum_ident}- \ref{assum_pervasiveness}, we have $$
||B_m(Z-\tilde{Z})||_{\psi_2} \leq c/\sqrt{p}$$ 
for some fixed constant $c>0$, where $\tilde{Z}:=\Sigma_ZA^T(A\Sigma_Z A^T + I_p)^{-1}X$.  In addition, $\Sigma_X^{-1/2}X$ is a centered  sub-Gaussian random vector with bounded sub-Gaussian norm.  
\end{lemma}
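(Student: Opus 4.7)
The plan is to reduce $B_m(Z-\tilde{Z})$ to a linear combination of the sub-Gaussian vectors $Z$ and $W$, and then bound the resulting coefficient vectors using the pervasiveness condition on $A$. First I would rewrite $\tilde{Z}$ in a more convenient form via the Woodbury-type identity
\[
\Sigma_Z A^T(A\Sigma_Z A^T+I_p)^{-1}=(\Sigma_Z^{-1}+A^TA)^{-1}A^T,
\]
which is verified by multiplying both sides on the right by $(A\Sigma_Z A^T+I_p)$. Substituting $X=AZ+W$ and letting $M=\Sigma_Z^{-1}+A^TA$, a direct calculation yields
\[
Z-\tilde Z \;=\; M^{-1}\Sigma_Z^{-1}Z \;-\; M^{-1}A^T W .
\]

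Given this decomposition, I would use the standard fact that for any deterministic vector $v$ and sub-Gaussian random vector $U$ one has $\|v^TU\|_{\psi_2}\lesssim \|v\|_2\|U\|_{\psi_2}$. Applied to the display above this gives
\[
\|B_m(Z-\tilde Z)\|_{\psi_2}\;\lesssim\; \|B_m M^{-1}\Sigma_Z^{-1}\|_2\,\sigma_Z+\|B_m M^{-1}A^T\|_2\,\sigma_W .
\]
Assumption \ref{assum_pervasiveness}(a) gives $\lambda_{\min}(M)\geq \lambda_{\min}(A^TA)\geq \kappa_{A,1}\,p$, so $\|M^{-1}\|_{\mathrm{op}}\leq (\kappa_{A,1}p)^{-1}$. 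Combining this with $\|B_m\|_2\leq C_4$ (Assumption \ref{assum_pervasiveness}(d)), $\|\Sigma_Z^{-1}\|_{\mathrm{op}}\leq \kappa_{Z,1}^{-1}$ (Assumption \ref{assum_pervasiveness}(c)), and $\|A\|_{\mathrm{op}}\leq \sqrt{\kappa_{A,2} p}$ (Assumption \ref{assum_pervasiveness}(a)), the two coefficient norms are $O(1/p)$ and $O(1/\sqrt{p})$ respectively. The latter dominates, yielding $\|B_m(Z-\tilde Z)\|_{\psi_2}\leq c/\sqrt{p}$ for a constant depending only on $C_4,\kappa_{A,1},\kappa_{A,2},\kappa_{Z,1},\sigma_Z,\sigma_W$.

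For the second claim, I would write $\Sigma_X^{-1/2}X=\Sigma_X^{-1/2}AZ+\Sigma_X^{-1/2}W$, which is clearly centered. For any unit vector $v\in\RR^p$,
\[
\|v^T\Sigma_X^{-1/2}X\|_{\psi_2}\lesssim \|A^T\Sigma_X^{-1/2}v\|_2\,\sigma_Z+\|\Sigma_X^{-1/2}v\|_2\,\sigma_W \leq \|\Sigma_X^{-1/2}A\|_{\mathrm{op}}\,\sigma_Z+\|\Sigma_X^{-1/2}\|_{\mathrm{op}}\,\sigma_W.
\]
Since $\Sigma_X=A\Sigma_ZA^T+I_p\succeq I_p$, we get $\|\Sigma_X^{-1/2}\|_{\mathrm{op}}\leq 1$. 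For the other factor, Assumption \ref{assum_pervasiveness}(c) gives $AA^T\preceq \kappa_{Z,1}^{-1}A\Sigma_Z A^T\preceq \kappa_{Z,1}^{-1}\Sigma_X$, so $\|\Sigma_X^{-1/2}A\|_{\mathrm{op}}^2=\|\Sigma_X^{-1/2}AA^T\Sigma_X^{-1/2}\|_{\mathrm{op}}\leq \kappa_{Z,1}^{-1}$. Taking the supremum over unit $v$ finishes this part with an absolute bound.

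The work is essentially all in the algebraic manipulation of the first step (getting the Woodbury-style identity and the clean form of $Z-\tilde Z$); once that is in place, the rest is a direct application of the operator-norm bounds afforded by Assumption \ref{assum_pervasiveness}. The mild subtlety to watch for is that the $O(1/p)$ contribution from the $Z$-channel is of smaller order than the $O(1/\sqrt p)$ contribution from the $W$-channel, so the final rate $c/\sqrt p$ is sharp as stated.
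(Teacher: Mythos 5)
Your proposal is correct and follows essentially the same route as the paper: both rewrite $\tilde Z$ via the Woodbury/block-inverse identity to obtain $Z-\tilde Z=(\Sigma_Z^{-1}+A^TA)^{-1}\Sigma_Z^{-1}Z-(\Sigma_Z^{-1}+A^TA)^{-1}A^TW$, bound the two coefficient vectors by $O(1/p)$ and $O(1/\sqrt p)$ using the pervasiveness assumption, and handle $\Sigma_X^{-1/2}X$ by bounding $\|\Sigma_X^{-1/2}A\|_{\mathrm{op}}$ and $\|\Sigma_X^{-1/2}\|_{\mathrm{op}}$. Your treatment of the second claim is in fact slightly more explicit than the paper's (which only sketches it), but the argument is the same.
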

\begin{proof}
From the definition of $\tilde{Z}:=\Sigma_ZA^T(A\Sigma_Z A^T + I_p)^{-1}X$ and the latent factor model $X=AZ+W$, we have the following:
\begin{align*}
    B_m(Z-\tilde{Z}) ~&=B_m(I-\Sigma_ZA^T(A\Sigma_Z A^T + I_p)^{-1}A)Z-B_m\Sigma_ZA^T(A\Sigma_Z A^T + I_p)^{-1}W\\
    &=B_m(\Sigma_Z-\Sigma_ZA^T(A\Sigma_Z A^T + I_p)^{-1}A\Sigma_Z)\Sigma_Z^{-1}Z-B_m\Sigma_ZA^T(A\Sigma_Z A^T + I_p)^{-1}W\\
    &=B_m(\Sigma_Z^{-1}+ A^TA)^{-1}\Sigma_Z^{-1}Z-B_m(\Sigma_Z^{-1}+ A^TA)^{-1}A^TW,
\end{align*}
where the last line follows from the block matrix inverse formula. Since $W$ and $Z$ are sub-Gaussian random vectors, we merely need to compute the Euclidean norms of $B_m(\Sigma_Z^{-1}+ A^TA)^{-1}A^T$ and $B_m(\Sigma_Z^{-1}+ A^TA)^{-1}\Sigma_Z^{-1}$,
\begin{align*}
    ||B_m(\Sigma_Z^{-1}+ A^TA)^{-1}A^T||_2 ~&=~ \sqrt{B_m(\Sigma_Z^{-1}+ A^TA)^{-1}A^TA(\Sigma_Z^{-1}+ A^TA)^{-1}B_m^T}\\
    ~&\leq~ ||B_m||_2 \cdot ||(\Sigma_Z^{-1}+ A^TA)^{-1}A^T||_{\text{op}} \\
    ~&\leq~ ||B_m||_2 \cdot ||(\Sigma_Z^{-1}+ A^TA)^{-1}||_{\text{op}}\cdot ||A||_{\text{op}}\\
    ~&\leq~ ||B_m||_2\cdot\frac{\sqrt{\lambda_{\max}(A^TA)}}{\lambda_{\min}(A^TA) + 1/\lambda_{\max}(\Sigma_Z)}\\
    ~&\lesssim \frac{1}{\sqrt{p}},\\
    ||B_m(\Sigma_Z^{-1}+ A^TA)^{-1}\Sigma_Z^{-1}||_2 ~&\leq~ ||B_m||_2 \cdot ||(\Sigma_Z^{-1}+ A^TA)^{-1}\Sigma_Z^{-1}||_{\text{op}}\\
    ~&\leq~ ||B_m||_2 \cdot ||(\Sigma_Z^{-1}+ A^TA)^{-1}||_{\text{op}}\cdot||\Sigma_Z^{-1}||_{\text{op}} \\
    ~&\lesssim \frac{1}{p}.
\end{align*}
Note that $\Sigma_X^{-1/2}X=\Sigma_X^{-1/2}AZ+\Sigma_X^{-1/2}W$. Following a similar derivation, we can show that $\|\Sigma_X^{-1/2}A\|_{\textrm{op}}$ and $\|\Sigma_X^{-1/2}\|_{\textrm{op}}$ are upper bounded by a constant. Thus, $\Sigma_X^{-1/2}X$ is a centered  sub-Gaussian vector. This completes the proof.
\end{proof}

\subsection{Proof for Theorem \ref{thm_expectation}}
\begin{proof}
    For ease of notation, for this proof, we denote $F_m^*$ as $F_m$. Recall from (\ref{true_error}) that 
    $$\epsilon_m := \frac{Y_m-b'(\Theta_m X + B_mZ)}{b''(\Theta_m X + B_mZ)}.$$
    Thus, $Y_m = b'(\Theta_m X + B_mZ)+\epsilon_m\cdot b''(\Theta_m X + B_mZ)$. Plugging this into (\ref{score_reweighted}) which is
    $$\EE\bigg[\bigg\{\frac{Y_m - b'(F_mX)}{b''(F_m X)}\bigg\}X^T \bigg]=0$$ we get
    \begin{align*}
        \EE\Bigg[\bigg\{\frac{b'(\Theta_m X + B_mZ)-b'(F_m X)}{b''(F_m X)}~+~ \epsilon_m\cdot\frac{b''(\Theta_m X + B_mZ)}{b''(F_m X)}\bigg\}X^T\Bigg]~=~0.
    \end{align*}
    The second term is $0$ as $\EE(\epsilon_m|X,Z)=0$. By Taylor expansion we have
    \begin{align*}
        b'(\Theta_m X + B_mZ)= b'(F_m X) ~+~ b''(F_m X)\cdot(\Theta_m X + B_m Z - F_m X) ~+~ \frac{b'''(\delta_m)}{2}\cdot(\Theta_m X + B_m Z - F_m X)^2
    \end{align*}
    for some intermediate $\delta_m$ between $F_m X$ and $\Theta_m X + B_m Z$. Rearranging we get
    \begin{align}
        \EE\Big[(\Theta_m - F_m&)XX^T + B_mZX^T\Big] ~+~\EE\bigg[\frac{b'''(\delta_m)}{2b''(F_m X)}\cdot(\Theta_m X + B_m Z - F_m X)^2\cdot X^T\bigg]~=~0, \nonumber
        \end{align}    
        \begin{align}
        F_m - \Theta_m ~&=~ B_m\cdot\EE\big[ZX^T\big]\cdot\Big\{\EE\big(XX^T\big)\Big\}^{-1} ~+~ \EE\bigg[\frac{b'''(\delta_m)}{2b''(F_m X)}\cdot(\Theta_m X + B_m Z - F_m X)^2\cdot X^T\bigg]\cdot\Big\{\EE\big(XX^T\big)\Big\}^{-1} \nonumber \\
        ~&=~B_m\Sigma_Z A^T(A\Sigma_Z A^T + I_p)^{-1}~+~\EE\bigg[\frac{b'''(\delta_m)}{2b''(F_m X)}\cdot(\Theta_m X + B_m Z - F_m X)^2\cdot X^T\bigg]\cdot\Sigma_X^{-1}. \label{eq.2}
    \end{align}
    We define the following set 
    \begin{align}
        \Omega_{m}:=~ \bigg\{F_m \in \RR^{1 \times p}~:~ \EE\Big[\big(\Theta_mX + B_m Z - F_m X\big)^4\Big] ~\leq~ \gamma^4 \bigg\}, \label{class}
    \end{align}
    where $\gamma>0$ is to be specified later on. In the following, we will show that $F_m$ implicitly given by (\ref{eq.2}) belongs to $\Omega_{m}$. To this end, we plug (\ref{eq.2}) into the condition in (\ref{class}) to obtain the $\gamma$ that will satisfy this inequality. 
    \begin{align*}
        \big(\Theta_mX + B_m Z - F_m X\big) ~&=~ (\Theta_m - F_m)X + B_m Z\\
        ~&=~ -B_m\Sigma_Z A^T(A\Sigma_Z A^T + I_p)^{-1}X ~+~ B_m Z \\
        ~&~ \hspace{3cm} -~\EE\bigg[\frac{b'''(\delta_m)}{2b''(F_m X)}\cdot(\Theta_m X + B_m Z - F_m X)^2\cdot X^T\bigg]\cdot\Sigma_X^{-1}X \\
        ~&=~ B_m(Z-\tilde{Z}) ~-~\EE\bigg[\frac{b'''(\delta_m)}{2b''(F_m X)}\cdot(\Theta_m X + B_m Z - F_m X)^2\cdot X^T\bigg]\cdot\Sigma_X^{-1}X, 
        \end{align*}
    where $\tilde{Z}:=\Sigma_ZA^T(A\Sigma_Z A^T + I_p)^{-1}X$. Using the inequality $(a+b)^2 \leq 2a^2 + 2b^2$  twice, we get:
    \begin{align}
        \big(\Theta_mX + B_m Z - F_m X\big)^2 &\leq~ 2\cdot\Big\{ B_m(Z-\tilde{Z})\Big\}^2 \nonumber \\
        &\hspace{1cm}~+~ 2\cdot\Bigg\{ \EE\bigg[\frac{b'''(\delta_m)}{2b''(F_m X)}\cdot(\Theta_m X + B_m Z - F_m X)^2\cdot (\Sigma_X^{-1/2}X)^T\bigg]\cdot\Sigma_X^{-1/2}X\Bigg\}^2 \nonumber \\
        \big(\Theta_mX + B_m Z - F_m X\big)^4 &\leq~ 8\cdot\Big\{ B_m(Z-\tilde{Z})\Big\}^4 \nonumber \\
        &\hspace{1cm}~+~ 8\cdot\Bigg\{ \EE\bigg[\frac{b'''(\delta_m)}{2b''(F_m X)}\cdot(\Theta_m X + B_m Z - F_m X)^2\cdot (\Sigma_X^{-1/2}X)^T\bigg]\cdot\Sigma_X^{-1/2}X\Bigg\}^4 \nonumber
        \end{align}
    and     
        \begin{align}
        \EE\Big[\big(\Theta_mX + B_m Z - F_m X\big)^4\Big] &\leq~ 8\cdot\EE\bigg[\Big\{ B_m(Z-\tilde{Z})\Big\}^4\bigg] \nonumber \\ &~+~ 8\cdot\EE\Bigg[\Bigg\{ \EE\bigg[\frac{b'''(\delta_m)}{2b''(F_m X)}\cdot(\Theta_m X + B_m Z - F_m X)^2\cdot (\Sigma_X^{-1/2}X)^T\bigg]\cdot\Sigma_X^{-1/2}X\Bigg\}^4\Bigg].  \label{eq. two terms}
    \end{align}
    The first term in (\ref{eq. two terms}) can be bounded by noting the sub-Gaussian property $||B_m(Z-\tilde{Z})||_{\psi_2} \leq c/\sqrt{p}$ in Lemma \ref{lem_tildeZ},
    \begin{align}\label{sub_G norm}
        \underset{q \geq 1}{\sup}~\frac{1}{\sqrt{q}}\bigg\{\EE\Big|B_m(Z-\tilde{Z})\Big|^q\bigg\}^{1/q} ~&\leq~ \frac{c}{\sqrt{p}},~~\textrm{with $q=4$ implying}~~
        \EE\Big\{B_m(Z-\tilde{Z})\Big\}^4 ~\leq~ \frac{16c^4}{p^2}.
    \end{align}
    For the second term in (\ref{eq. two terms}), Lemma \ref{lem_tildeZ} implies $\Sigma_X^{-1/2}X$ is a centered, sub-Gaussian random vector, where $\Sigma_X^{-1/2}X$ has a bounded $\psi_2$-norm, i.e., that $||v^T(\Sigma_X^{-1/2}X)||_{\psi_2}\leq c_x$ for any $||v||_2=1$ for some fixed constant $c_x > 0$. Then, following the definition of the sub-Gaussian norm, like in (\ref{sub_G norm}), we derive:
    \begin{align}
        \underset{||v||_2=1}{\sup}~\bigg\{\EE\Big|v^T(\Sigma_X^{-1/2}X)\Big|^4\bigg\}~\leq~16c_x^4,~~\textrm{and}~~        \underset{||v||_2=1}{\sup}~\bigg\{\EE\Big|v^T(\Sigma_X^{-1/2}X)\Big|^2\bigg\}~\leq~2c_x^2. \label{sup_subG}
    \end{align}
For ease of notation, let us denote
    \[
    \phi_m:=\frac{b'''(\delta_m)}{2b''(F_m X)}\cdot(\Theta_m X + B_m Z - F_m X)^2\cdot (\Sigma_X^{-1/2}X)^T.
    \]
    Then we have the following algebraic derivation:
    \begin{align}
        &\EE\Bigg[\Bigg\{ \EE\bigg[\frac{b'''(\delta_m)}{2b''(F_m X)}\cdot(\Theta_m X + B_m Z - F_m X)^2\cdot (\Sigma_X^{-1/2}X)^T\bigg]\cdot\Sigma_X^{-1/2}X\Bigg\}^4\Bigg] \nonumber \\
        &\hspace{2cm}=\|\EE(\phi_m)\|_2^4\cdot  \EE\Bigg\{\frac{\EE(\phi_m)}{\|\EE(\phi_m)\|_2}\Sigma_X^{-1/2}X\Bigg\}^4\nonumber\\
        &\hspace{2cm}\leq \|\EE(\phi_m)\|_2^4 \cdot \underset{||v||_2=1}{\sup}~\bigg\{\EE\Big|v^T(\Sigma_X^{-1/2}X)\Big|^4\bigg\}\nonumber\\
        &\hspace{2cm}\leq \|\EE(\phi_m)\|_2^4 \cdot 16c_x^4, \label{eq.two terms 2}
    \end{align}
    where the last step follows from (\ref{sub_G norm}).  The first term in (\ref{eq.two terms 2}) can be bounded by:
    \begin{align*}
        \Big|\Big| \EE\big[\phi_m\big]\Big|\Big|_2^4 ~&=~ \underset{||v||_2=1}{\sup} \bigg|\EE\Big[\phi_m v\Big]\bigg|^4 \\
        ~&=~ \underset{||v||_2=1}{\sup} \bigg|\EE\Big[ \frac{b'''(\delta_m)}{2b''(F_m X)}\cdot(\Theta_m X + B_m Z - F_m X)^2\cdot (\Sigma_X^{-1/2}X)^T v\Big]\bigg|^4 \\
         ~&\leq\Bigg\{\EE\bigg[ \Big\{\frac{b'''(\delta_m)}{2b''(F_m X)}\Big\}^2\cdot(\Theta_m X + B_m Z - F_m X)^4 \bigg] \cdot \underset{||v||_2=1}{\sup} \EE\Big[ (\Sigma_X^{-1/2}X)^T v \Big]^2 \Bigg\}^2 \\
         ~&\leq \Big(\frac{C_3}{2C_1}\Big)^4 \cdot \gamma^8 \cdot 4 c_x^4,
    \end{align*}
    where the third line follows from the Cauchy-Schwartz inequality and the last step follows from (\ref{sup_subG}).

    Thus, combining these results with (\ref{eq. two terms}), we get the following
    \begin{align*}
        \EE\Big[\big(\Theta_mX + B_m Z - F_m X\big)^4\Big] ~&\leq~ \frac{128 c^4}{p^2}~+~ \Big(\frac{c_x^2 \cdot C_3}{C_1}\Big)^4\cdot \gamma^8.
    \end{align*}
By setting $\gamma^4=C/p^2$ for some constant $C$ sufficiently large (where $C$ does not depend on $m$), we have $\frac{128 c^4}{p^2}+(\frac{c_x^2 \cdot C_3}{C_1})^4\cdot \gamma^8\leq \gamma^4$, which implies $F_m^* \in \Omega_{m}$. 

In the following, we will bound $||F_m - \Theta_m||_2^2$.  Note from line (\ref{eq.2}), we have
\begin{align*}
    ||F_m - \Theta_m||_2^2 ~&\leq~ 2\cdot\bigg\{\Big|\Big|B_m\Sigma_Z A^T(A\Sigma_Z A^T + I_p)^{-1}\Big|\Big|_2^2 \\
    ~& \hspace{1cm} +~ \bigg|\bigg|\EE\bigg[\frac{b'''(\delta_m)}{2b''(F_m X)}\cdot(\Theta_m X + B_m Z - F_m X)^2\cdot (\Sigma_X^{-1/2}X)^T\bigg]\cdot\Sigma_X^{-1/2}\bigg|\bigg|_2^2 \Bigg\}.
\end{align*}
Note that from Assumption \ref{assum_pervasiveness}, the first term can be upper bounded by
\begin{align*}
    \big|\big|B_m\Sigma_Z A^T(A\Sigma_Z A^T + I_p)^{-1}\big|\big|_2^2 ~&=~ \big|\big|B_m(\Sigma_Z^{-1} + A^TA)^{-1}A^T\big|\big|_2^2 \\
    ~&\leq~ ||B_m||_2^2 \cdot \big|\big|(\Sigma_Z^{-1} + A^TA)^{-1}A^TA(\Sigma_Z^{-1} + A^TA)^{-1}\big|\big|_{\text{op}}\\
    ~&=~ ||B_m||_2^2 \cdot \Big[\lambda_{\max}\big((\Sigma_Z^{-1} + A^TA)^{-1}\big)\Big]^2 \cdot \lambda_{\max}(A^TA)\\
    ~&=~\frac{||B_m||_2^2 \cdot \lambda_{\max}(A^TA)}{\Big[\lambda_{\min}(\Sigma_Z^{-1} + A^TA)\Big]^2} \\
    ~&\leq~ \frac{C_4^2 \cdot \kappa_{A,2}\cdot p} {\Big[(1/\kappa_{Z,2}) + \kappa_{A,1}\cdot p\Big]^2}\\
    ~&\lesssim~ \frac{1}{p}.
\end{align*}
The second term can be upper bounded by
\begin{align}
    &\bigg|\bigg|\EE\bigg[\frac{b'''(\delta_m)}{2b''(F_m X)}\cdot(\Theta_m X + B_m Z - F_m X)^2\cdot (\Sigma_X^{-1/2}X)^T\bigg]\cdot\Sigma_X^{-1/2}\bigg|\bigg|_2^2 \nonumber \\
    ~&\leq \bigg|\bigg|\EE\bigg[\frac{b'''(\delta_m)}{2b''(F_m X)}\cdot(\Theta_m X + B_m Z - F_m X)^2\cdot (\Sigma_X^{-1/2}X)^T\bigg]\bigg|\bigg|_2^2 \cdot \lambda_{\max}(\Sigma_X^{-1}) \nonumber \\
    ~&=~ \underset{||v||_2=1}{\sup}~\Bigg|\EE\bigg[\frac{b'''(\delta_m)}{2b''(F_m X)}\cdot(\Theta_m X + B_m Z - F_m X)^2\cdot (\Sigma_X^{-1/2}X)^T v\bigg]\Bigg|^2\cdot \lambda_{\max}(\Sigma_X^{-1})\nonumber \\
    ~&\leq~\EE\bigg[\Big(\frac{b'''(\delta_m)}{2b''(F_m X)}\Big)^2\cdot(\Theta_m X + B_m Z - F_m X)^4 \bigg]\cdot \underset{||v||_2=1}{\sup} \EE|(\Sigma_X^{-1/2}X)^T v|^2\cdot \lambda_{\max}(\Sigma_X^{-1})\nonumber \\
    ~&\lesssim~ \frac{1}{p^2}, \label{eq.rate}
\end{align}
where in the last step Assumption \ref{assum_glm} implies $\Big|\frac{b'''(\delta_m)}{2b''(F_m X)}\Big|$ is bounded, $\EE[\big(\Theta_mX + B_m Z - F_m X\big)^4]\lesssim 1/p^2$ due to $F_m^* \in \Omega_{m}$,  and we also apply Assumption \ref{assum_pervasiveness}. Thus, we have $||F_m - \Theta_m||_2^2 \lesssim (1/p)$ uniformly over $m$, and $Rem'$ (the term in (\ref{eq.rate})) satisfies $\max_{1\leq m\leq M}\|Rem_m'\|^2_2=O(1/p^2)$. 

Compiling this into a matrix with $M$ rows, we get the following:
\begin{align}
F - \Theta ~&=~ B\Sigma_Z A^T(A\Sigma_Z A^T + I_p)^{-1} ~+~ \begin{bmatrix}
    \EE\Big[\frac{b'''(\delta_1)}{2b''(F_1 X)}\cdot(\Theta_1 X + B_1 Z - F_1 X)^2\cdot (\Sigma_X^{-1/2}X)^T\Big] \\
    \EE\Big[\frac{b'''(\delta_2)}{2b''(F_2 X)}\cdot(\Theta_2 X + B_2 Z - F_2 X)^2\cdot (\Sigma_X^{-1/2}X)^T\Big]\\
    : \\
    \EE\Big[\frac{b'''(\delta_M)}{2b''(F_M X)}\cdot(\Theta_M X + B_M Z - F_M X)^2\cdot (\Sigma_X^{-1/2}X)^T\Big]
\end{bmatrix}\cdot\Sigma_X^{-1/2} \nonumber \\
~&:=~ B\Sigma_Z A^T(A\Sigma_Z A^T + I_p)^{-1} ~+~ R\color{black}{\cdot\Sigma_X^{-1/2}}. \label{eq.misspecification}
\end{align}
It is trivial to note that 
$$\big|\big|F - \Theta\big|\big|_F^2 ~\lesssim~ \frac{M}{p} ~+~ \frac{M}{p^2}$$
since we get the rates for $||F_m - \Theta_m||_2^2$ uniformly over $1 \leq m \leq M$. To get the last result in the theorem, we multiply both sides of (\ref{eq.misspecification}) by $P_B^{\perp}$. This eliminates the first term as the $P_B^{\perp} B = 0$. Thus, we are left with 
\begin{align*}
    P_B^{\perp}F - P_B^{\perp}\Theta ~&=~ P_B^{\perp}F - \Theta ~=~ P_B^{\perp} R\cdot\Sigma_X^{-1/2},\\
     \big|\big|P_B^{\perp}F - \Theta \big|\big|_F^2 ~&\leq~\big|\big|P_B^{\perp}\big|\big|_{\text{op}}^2\cdot \big|\big|R\Sigma_X^{-1/2}\big|\big|_F^2  \\
     ~&=~ \big|\big|R\Sigma_X^{-1/2}\big|\big|_F^2\\
     ~&\lesssim~ \frac{M}{p^2},
\end{align*}
where the last line follows from the derivation in (\ref{eq.rate}). This concludes the proof.
\end{proof}

\subsection{Proof of Theorem \ref{thm_hatF}}
Since sample splitting does not change the proof, for simplicity we omit sampling splitting in the proof and define the estimator $\hat F_m$ as the local maximizer of $Q_m(F)$ using all $n$ data points. Let $L_m(F)=-Q_m(F)$. Our goal is to show that there exists a local minimizer $\hat \Delta_m$ of $L_m(F_m^*+\Delta)$ such that $\hat \Delta_m\in \mathcal{C}$ for all $1\leq m\leq M$, where $\mathcal{C}=\{\Delta\in\RR^p: \|\Delta\|_2\leq r\}$ and $r=C\sqrt{p\log (M\vee p)/n}$ for some constant $C$ large enough. To this end, it suffices to show that the event
\begin{align*}
\cap_{1\leq m\leq M} \Big\{\inf_{\Delta\in \partial\mathcal{C}}L_m(F_m^*+\Delta)-L_m(F_m^*)>0\Big\}
\end{align*}
holds with probability tending to 1, where  $\partial\mathcal{C}=\{\Delta\in\RR^p: \|\Delta\|_2= r\}$. Applying the mean value theorem, we have for any  $\Delta\in \partial\mathcal{C}$, 
\begin{align*}
L_m(F_m^*+\Delta)-L_m(F_m^*)&=\nabla L_m(F_m^*)\Delta+\frac{1}{2}\Delta^T\nabla^2 L_m(F_m^*+t\Delta)\Delta\\
&\geq -\|\nabla L_m(F_m^*)\|_2 r+\frac{1}{2}r^2\lambda_{\min}(\nabla^2 L_m(F_m^*+t\Delta))\\
&\geq -C'\sqrt{\frac{p\log (M\vee p)}{n}}\cdot C\sqrt{\frac{p\log (M\vee p)}{n}}+\frac{1}{2}C^2\frac{p\log (M\vee p)}{n}\cdot C''
\end{align*}
under the following two events, 
$$
E_1=\Big\{\max_{1\leq m\leq M}\|\nabla L_m(F_m^*)\|_2\leq C'\sqrt{\frac{p\log (M\vee p)}{n}} \Big\},~~E_2=\Big\{\min_{1\leq m\leq M}\lambda_{\min}(\nabla^2 L_m(F_m^*+t\Delta))\geq C''\Big\},
$$
where $C'$ and $C''$ are constants. Provided $C>2C'/C''$, we obtain that $L_m(F_m^*+\Delta)-L_m(F_m^*)>0$. In the following, we will show that $\PP(E_1)\rightarrow 1$ and $\PP(E_2)\rightarrow 1$. Recall that 
$$
\nabla L_m(F_m^*)=-\frac{1}{n}\sum_{i=1}^n \bigg\{\frac{Y^{(i)}_m - b'(F_m^* X^{(i)})}{b''(F_m^* X^{(i)})}\bigg\}X^{(i)}.
$$
By definition we notice that $\nabla L_m(F_m^*)$ has mean zero. Since $b''(F_m^* X^{(i)})\geq C_1$, all entries of $X^{(i)}$ are bounded, and $\|Y_m - b'(F_m^* X)\|_{\psi_1}\leq \|Y_m- b'(\Theta_mX+B_m Z)\|_{\psi_1}+ \|b'(\Theta_mX+B_m Z)- b'(F_m^* X)\|_{\psi_1}$ is bounded, Bernstein inequality implies that $\PP\Big(|(\nabla L_m(F_m^*))_j|\geq C'\sqrt{\frac{\log (M\vee p)}{n}}\Big)\leq (p\vee M)^{-3}$ for some constant $C'$. Applying the union bound, we can show that 
$$
\max_{1\leq m\leq M}\|\nabla L_m(F_m^*)\|_\infty\leq C'\sqrt{\frac{\log (M\vee p)}{n}}
$$
with high probability, which further implies $\PP(E_1)\rightarrow 1$. 

For the event $E_2$, let us denote $\tilde F_m=F_m^*+t\Delta$ and $\zeta^{(i)}_m(\tilde F_m)=\frac{(Y^{(i)}_m - b'(\tilde F_m X^{(i)}))b'''(\tilde F_m X^{(i)})}{\{b''(\tilde F_m X^{(i)})\}^2}$. We first normalize $\nabla^2 L_m(\tilde F_m)$ since $X$ has spiked eigenvalues under our factor model assumption. Specifically, 
\begin{align*}
\lambda_{\min}(\nabla^2 L_m(\tilde F_m))&=\lambda_{\min}(\Sigma_X^{-1/2}\Sigma_X^{1/2}\nabla^2 L_m(\tilde F_m)\Sigma_X^{1/2}\Sigma_X^{-1/2})\\
&\geq \lambda_{\min}(\Sigma_X^{-1/2}\nabla^2 L_m(\tilde F_m)\Sigma_X^{-1/2}),
\end{align*}
since the smallest eigenvalue of $\Sigma_X=A\Sigma_ZA^T+I$ is no smaller than 1. For notational simplicity, we set $\nabla^2 \tilde L_m(\tilde F_m)=\Sigma_X^{-1/2}\nabla^2 L_m(\tilde F_m)\Sigma_X^{-1/2}$. Then we have
\begin{align*}
\nabla^2 \tilde L_m(\tilde F_m)&=\frac{1}{n}\sum_{i=1}^n \Big(1+\zeta^{(i)}_m(\tilde F_m)\Big)\tilde X^{(i)}\tilde X^{(i)T}\\
&=\frac{1}{n}\sum_{i=1}^n \Big(1+\zeta^{(i)}_m( F^*_m)\Big)\tilde X^{(i)}\tilde X^{(i)T}+\frac{1}{n}\sum_{i=1}^n \Big(\zeta^{(i)}_m(\tilde F_m)-\zeta^{(i)}_m(F^*_m)\Big)\tilde X^{(i)}\tilde X^{(i)T}\\
&=\EE(1+\zeta^{(i)}_m(F^*_m))\tilde X^{(i)}\tilde X^{(i)T}+I_1+I_2,
\end{align*}
where $\tilde X^{(i)}=\Sigma_X^{-1/2}X^{(i)}$, and
\begin{align*}
I_1~&=~\frac{1}{n}\sum_{i=1}^n \Big(1+\zeta^{(i)}_m( F^*_m)\Big)\tilde X^{(i)} \tilde X^{(i)T}-\EE(1+\zeta^{(i)}_m( F^*_m))\tilde X^{(i)}\tilde X^{(i)T}\\
I_2~&=~\frac{1}{n}\sum_{i=1}^n \Big(\zeta^{(i)}_m(\tilde F_m)-\zeta^{(i)}_m(F^*_m)\Big)\tilde X^{(i)}\tilde X^{(i)T}.
\end{align*}
By Lemma \ref{lem_tildeZ}, $\tilde X^{(i)}$ is a sub-Gaussian vector, so $\tilde X_j^{(i)} \tilde X_k^{(i)}$ is sub-exponential. Therefore, due to the boundedness of $|b''|,~|b'''|$ and the fact that $Y_m^{(i)}-b'(F_m^* X^{(i)})$ is sub-exponential from Assumption \ref{assum_bound} and \ref{assum_glm}, $(1+\zeta^{(i)}_m( F^*_m))\tilde X_j^{(i)} \tilde X_k^{(i)}$  is $1/2$-sub-exponential. Lemma \ref{GenBern} implies that, 
$$
\max_{1\leq m\leq M}\|I_1\|_{\max} \lesssim \sqrt{\frac{\log (p\vee M)}{n}},
$$
provided $\{\log (M\vee p)\}^3=O(n)$, which implies 
$$
\max_{1\leq m\leq M}\|I_1\|_F\lesssim p\sqrt{\frac{\log (p\vee M)}{n}}
$$
with high probability. Furthermore, writing out $\zeta_m^{(i)}$ and $\zeta_m^{(i)'}$, under Assumption \ref{assum_glm}, it is apparent that $\Big|\zeta_m^{(i)'}\Big|$ is bounded by constants and a random term, $F_mX^{(i)}$. Thus, $\zeta^{(i)}_m(F_m)$ is Lipschitz in terms of $F_mX^{(i)}$. Using this fact, the Mean Value Theorem, and the rate for the maximum of a sub-exponential random variable, we get
$$
\max_{1\leq i\leq n}\max_{1\leq m\leq M}|\zeta^{(i)}_m(\tilde F_m)-\zeta^{(i)}_m(F^*_m)|\lesssim  \|\Delta\|_1 \log (M\vee n), 
$$
This then implies that 
\begin{align*}
\max_{1\leq m\leq M}\|I_2\|_{\text{op}}&\lesssim \|\Delta\|_1 \log (M\vee n)\cdot \Big\|\frac{1}{n}\sum_{i=1}^n\tilde X^{(i)}\tilde X^{(i)T}\Big\|_{\text{op}}\\
&\lesssim p\sqrt{\frac{\log (p\vee M)}{n}} \log (M\vee n) \cdot \bigg(\|I_p\|_{\text{op}}+\sqrt{\frac{p}{n}}\bigg)\\
&\lesssim p\sqrt{\frac{\log (p\vee M)}{n}} \log (M\vee n), 
\end{align*}
where the second lines follows from plugging in our choice of rate for $||\Delta||_2 = r=\sqrt{p\log(p \vee M)/n}$ and the moment bound $\EE\|\frac{1}{n}\sum_{i=1}^n\tilde X^{(i)}\tilde X^{(i)T}-I_p\|_{\text{op}}\lesssim \|I_p\|_{\text{op}}\sqrt{p/n}$. For details of the latter, see the proof of Lemma \ref{thm_third_term}. Combining these results, Weyl's inequality implies
$$
\Big|\lambda_{\min}\big(\nabla^2 \tilde L_m(\tilde F_m)\big)-\lambda_{\min}\big(\EE(1+\zeta^{(i)}_m(F^*_m))\tilde X^{(i)} \tilde X^{(i)T}\big)\Big|\leq \|I_1\|_{\text{op}}+\|I_2\|_{\text{op}}\lesssim p\sqrt{\frac{\log (p\vee M)}{n}}\log (M\vee n),
$$
uniformly over $m$. Therefore, 
\begin{align}\label{eq_lower_eigen}
\lambda_{\min}\big(\nabla^2\tilde L_m(\tilde F_m)\big)\geq \lambda_{\min}\big(\EE(1+\zeta^{(i)}_m(F^*_m))\tilde X^{(i)}\tilde X^{(i)T}\big)-p\sqrt{\frac{\log (p\vee M)}{n}}\log (M\vee n).
\end{align}
Finally, we focus on 
\begin{align*}
\EE(1+\zeta^{(i)}_m(F^*_m))&\tilde X^{(i)}\tilde X^{(i)T}\\
&=\EE\bigg(1+\frac{(b'(\Theta_mX^{(i)}+B_m Z^{(i)}) - b'( F^*_m X^{(i)}))b'''( F^*_m X^{(i)})}{\{b''( F^*_m X^{(i)})\}^2}\bigg)\tilde X^{(i)}\tilde X^{(i)T}\\
&=\EE\bigg(1+\frac{(b''(\xi)(\Theta_mX^{(i)}+B_m Z^{(i)}- F^*_m X^{(i)})b'''( F^*_m X^{(i)})}{\{b''( F^*_m X^{(i)})\}^2}\bigg)\tilde X^{(i)}\tilde X^{(i)T},
\end{align*}
where $\xi$ is an intermediate value between $\Theta_mX^{(i)}+B_m Z^{(i)}$ and $F^*_m X^{(i)}$. 
Notice that
\begin{equation}\label{eq_upper_taylor}
    \Bigg|\frac{(b''(\xi)(\Theta_mX^{(i)}+B_m Z^{(i)}- F^*_m X^{(i)})b'''( F^*_m X^{(i)})}{\{b''( F^*_m X^{(i)})\}^2}\Bigg|\leq C|\Theta_mX^{(i)}+B_m Z^{(i)}- F^*_m X^{(i)}|,
\end{equation}
for some constant $C$. To lower bound $\lambda_{\min}(\EE(1+\zeta^{(i)}_m(F^*_m))\tilde X^{(i)}\tilde X^{(i)T})$, we apply the following truncation technique. By Weyl's inequality again, we have 
\begin{align}
\lambda_{\min}\Big(\EE(1+\zeta^{(i)}_m(F^*_m))\tilde X^{(i)}\tilde X^{(i)T}\Big)&\geq \lambda_{\min}\Big(\EE(1+\zeta^{(i)}_m(F^*_m)I(|\Psi_m|\leq \frac{1}{2C}))\tilde X^{(i)}\tilde X^{(i)T}\Big)\nonumber\\
&~~~~~~~-\Big\|\EE\Big(\zeta^{(i)}_m(F^*_m)I(|\Psi_m|\geq \frac{1}{2C})\tilde X^{(i)}\tilde X^{(i)T}\Big)\Big\|_F,\label{eq_lower_eigen2}
\end{align}
where $\Psi_m=\Theta_mX^{(i)}+B_m Z^{(i)}- F^*_m X^{(i)}$. Thus, 
$$
\lambda_{\min}\Big(\EE(1+\zeta^{(i)}_m(F^*_m)I(|\Psi_m|\leq \frac{1}{2C}))\tilde X^{(i)}\tilde X^{(i)T}\Big)\geq \lambda_{\min}\Big(\EE(1-\frac{1}{2})\tilde X^{(i)}\tilde X^{(i)T}\Big)= 1/2.$$
For the second term on the right hand side of (\ref{eq_lower_eigen2}), we have
\begin{align*}
\Big\|\EE\Big(\zeta^{(i)}_m(F^*_m)I(|\Psi_m|\geq \frac{1}{2C})\tilde X^{(i)}\tilde X^{(i)T}\Big)\Big\|^2_F&\lesssim \EE\Big(\Psi_m^2I(|\Psi_m|\geq \frac{1}{2C})\sum_{1\leq j, k\leq p}(\tilde X^{(i)}_j\tilde X^{(i)}_k)^2\Big)\\
&\leq \Big\{\EE\Psi_m^4 I(|\Psi_m|\geq \frac{1}{2C})\Big\}^{1/2} \Big\{\EE [\sum_{1\leq j, k\leq p}(\tilde X^{(i)}_j\tilde X^{(i)}_k)^2]^2\Big\}^{1/2} \\
&\leq p^2\Big\{\EE|\Psi_m|^8\Big\}^{1/4}\cdot \Big\{\PP(|\Psi_m|\geq \frac{1}{2C})\Big\}^{1/4}\\
&\lesssim p^2\cdot p^{-1} \exp(-p/C)
\end{align*}
where the second line follows from the Cauchy-Schwarz inequality, the third line is from the sub-Gaussian property of $\tilde X_j^{(i)}$ and the Cauchy-Schwarz inequality again, and the last step is due to $\|\Psi_m\|_{\psi_2}\lesssim p^{-1/2}$ implied by the proof of Theorem \ref{thm_expectation}. Plugging these results into (\ref{eq_lower_eigen2}), we have
\begin{align}\label{eq_min_eigen_EH}
\lambda_{\min}\Big(\EE(1+\zeta^{(i)}_m(F^*_m))\tilde X^{(i)}\tilde X^{(i)T}\Big)\geq C,
\end{align}
which implies $\PP(E_2)\rightarrow 1$ in view of (\ref{eq_lower_eigen}). As a by-product, we have
\begin{align}\label{eq_min_eigen_G}
\lambda_{\min}(G_m)\geq \lambda_{\min}\Big(\EE(1+\zeta^{(i)}_m(F^*_m))\tilde X^{(i)}\tilde X^{(i)T}\Big)\geq C.
\end{align}

The rest of the proof is to show (\ref{eq_thm_hatF_linear}). Recall that $\nabla L_m(\hat F_m)=0$. This combined with the mean value theorem gives us that 
$$
(\hat F_m-F_m^*)^T=\bigg\{\frac{1}{n}\sum_{i=1}^n (1+\tilde\zeta^{(i)}_m)X^{(i)}X^{(i)T}\bigg\}^{-1}\frac{1}{n}\sum_{i=1}^n \frac{Y^{(i)}_m - b'(F_m^* X^{(i)})}{b''(F_m^* X^{(i)})} X^{(i)},
$$
where $\tilde\zeta^{(i)}_m=\zeta^{(i)}_m(\tilde F_m)$. Then for any $v\in\RR^p$,
\begin{align}
&(\hat F_m-F_m^*)v\nonumber\\
&~~=\frac{1}{n}\sum_{i=1}^n \frac{Y^{(i)}_m - b'(F_m^* X^{(i)})}{b''(F_m^* X^{(i)})} v^TG_m^{-1}X^{(i)}\nonumber\\
&~~~~~~+v^T\bigg[\Big(\frac{1}{n}\sum_{i=1}^n (1+\zeta^{(i)}_m)X^{(i)}X^{(i)T}\Big)^{-1}-G_m^{-1}\bigg]\frac{1}{n}\sum_{i=1}^n \frac{Y^{(i)}_m - b'(F_m^* X^{(i)})}{b''(F_m^* X^{(i)})} X^{(i)}\label{eq_linear_1}\\
&~~~~~~+v^T\bigg[\Big(\frac{1}{n}\sum_{i=1}^n (1+\tilde\zeta^{(i)}_m)X^{(i)}X^{(i)T}\Big)^{-1}-\Big(\frac{1}{n}\sum_{i=1}^n (1+\zeta^{(i)}_m)X^{(i)}X^{(i)T}\Big)^{-1}\bigg]\frac{1}{n}\sum_{i=1}^n \frac{Y^{(i)}_m - b'(F_m^* X^{(i)})}{b''(F_m^* X^{(i)})} X^{(i)},\label{eq_linear_2}
\end{align}
where $\zeta^{(i)}_m=\zeta^{(i)}_m(F^*_m)$. So it remains to control the two terms in (\ref{eq_linear_1}) and (\ref{eq_linear_2}) respectively. From the analysis of the term $I_1$ and event $E_1$ defined previously, using the identity $A^{-1} - B^{-1} = -A^{-1}(A-B)B^{-1}$, the term (\ref{eq_linear_1}) is upper bounded by
\begin{align*}
&\max_{1\leq m\leq M}\Bigg\|\Big(\frac{1}{n}\sum_{i=1}^n (1+\zeta^{(i)}_m)X^{(i)}X^{(i)T}\Big)^{-1}-G_m^{-1}\Bigg\|_{\text{op}}\Bigg\|\frac{1}{n}\sum_{i=1}^n \frac{Y^{(i)}_m - b'(F_m^* X^{(i)})}{b''(F_m^* X^{(i)})} X^{(i)}\Bigg\|_2\\
&\lesssim \max_{1\leq m\leq M}\Bigg\|\Big(\frac{1}{n}\sum_{i=1}^n (1+\zeta^{(i)}_m)X^{(i)}X^{(i)T}\Big)^{-1}\Bigg\|_{\text{op}}\Bigg\|\frac{1}{n}\sum_{i=1}^n (1+\zeta^{(i)}_m)X^{(i)}X^{(i)T}-G_m\Bigg\|_{\text{op}} \|G_m^{-1}\|_{\text{op}}\sqrt{\frac{p\log (M\vee p)}{n}}\\
&\lesssim \max_{1\leq m\leq M}\Bigg\|\frac{1}{n}\sum_{i=1}^n (1+\zeta^{(i)}_m)X^{(i)}X^{(i)T}-G_m\Bigg\|_{F} \sqrt{\frac{p\log (M\vee p)}{n}}\\
&\lesssim~ \frac{p^{3/2}\log (M\vee p)}{n}
\end{align*}
where we use (\ref{eq_min_eigen_G}), combined with the concentration bound for the term $I_1$ above and Weyl's inequality to show that the smallest eigenvalue of $\frac{1}{n}\sum_{i=1}^n (1+\zeta^{(i)}_m)X^{(i)}X^{(i)T}$ is also lower bounded by a constant. Following a similar argument, we can rewrite (\ref{eq_linear_2}) as
\begin{align*}
&\Big\|v^T\Sigma_X^{-1/2}(\tilde H^{-1}-\hat H^{-1})\Sigma_X^{-1/2}\frac{1}{n}\sum_{i=1}^n \frac{Y^{(i)}_m - b'(F_m^* X^{(i)})}{b''(F_m^* X^{(i)})} X^{(i)}\Big\|_2\\
&~~~~~~~~~~~~~~~~=\Big\|v^T\Sigma_X^{-1/2}\tilde H^{-1}(\hat H-\tilde H)\hat H^{-1}\Sigma_X^{-1/2}\frac{1}{n}\sum_{i=1}^n \frac{Y^{(i)}_m - b'(F_m^* X^{(i)})}{b''(F_m^* X^{(i)})} X^{(i)}\Big\|_2\\
&~~~~~~~~~~~~~~~~\lesssim \|\tilde H^{-1}\|_{\text{op}} \|\hat H-\tilde H\|_{\text{op}}\|\hat H^{-1}\|_{\text{op}}\sqrt{\frac{p\log (M\vee p)}{n}}
\end{align*}
where $\tilde H=\frac{1}{n}\sum_{i=1}^n (1+\tilde\zeta^{(i)}_m)\tilde X^{(i)}\tilde X^{(i)T}$
and $\hat H=\frac{1}{n}\sum_{i=1}^n (1+\zeta^{(i)}_m)\tilde X^{(i)}\tilde X^{(i)T}$. Recall that from the analysis of the term $I_2$ above we have 
$$
\|\hat H-\tilde H\|_{\text{op}}\lesssim p\sqrt{\frac{\log (p\vee M)}{n}} \log (M\vee n)=o_p(1),
$$
and from the analysis of the term $I_1$
, we have
$$
\|\hat H-\EE \hat H\|_{\text{op}}\leq \|\hat H-\EE \hat H\|_{F}\lesssim p\sqrt{\frac{\log (p\vee M)}{n}}=o_p(1),
$$
where the minimum eigenvalue of $\EE \hat H$ is lower bounded by a constant as shown in  (\ref{eq_min_eigen_EH}). Thus, (\ref{eq_linear_2}) is upper bounded by $O_p\big(p^{3/2}\frac{\log (p\vee M)\log (n\vee M)}{n}\big)$. This completes the proof of (\ref{eq_thm_hatF_linear}).

\subsection{Proof for Theorem \ref{thm_projection}}
\begin{proof}
Recall that the three residuals are
$$\hat{\epsilon}_m = \frac{Y_m - b'(\hat{F}_mX)}{b''(\hat{F}_mX)},~~\bar{\epsilon}_m = \frac{Y_m - b'(F_m^* X)}{b''(F_m^* X)},~~\epsilon_m = \frac{Y_m-b'(\Theta_m X + B_mZ)}{b''(\Theta_m X + B_mZ)}.$$
We use the upper-script $(i)$ to indicate the r.v from the $i$th sample. For simplicity, we use $\EE_n \epsilon$ to denote $\frac{1}{n}\sum_{i=1}^n \epsilon^{(i)}$ and $\EE_n (\epsilon|X)$ to denote $\frac{1}{n}\sum_{i=1}^n \EE(\epsilon^{(i)}|X^{(i)})$. Finally, due to sample splitting, for simplicity we can just equivalently assume that $\hat{F}_m$ is independent of $Y^{(i)}$ and $X^{(i)}$. 

We also have $\hat P^{\perp}_B=I-\hat V\hat V^T$, $P^{\perp}_B=I-VV^T$, where $\hat{V}$ is the first $K$ eigenvectors of $\hat \Sigma = \frac{1}{n} \hat\epsilon^{(i)}\hat\epsilon^{(i)T}$ and $V$ is the first $K$ left singular vectors of $B$ and is also the first $K$ eigenvectors of $\EE[B(Z-\tilde Z)(Z- \tilde Z)^TB^T] = B(\Sigma_Z^{-1}+A^TA)^{-1}B^T$. We thus recall, define, and denote $\Sigma=B(\Sigma_Z^{-1}+A^TA)^{-1}B^T$ and $\hat\Sigma=\EE_n \hat\epsilon^{\otimes 2}$ accordingly.

By the Davis-Kahan theorem from Lemma \ref{DavisKahan}, we get the following inequality:
$$
\|\hat V O-V\|_F\lesssim \frac{\|\hat\Sigma-\Sigma\|_F}{\lambda_K(\Sigma)-\lambda_{K+1}(\Sigma)},
$$
where $O$ is some orthogonal matrix. It is easily seen that
\begin{align*}
    \|\hat{P}_{B}^{\perp}-P_{B}^{\perp}\|_{F} &~=~ \|\hat{V}\hat{V}^T - VV^T\|_F\\
    &~=~ \|\hat{V}\hat{V}^T - V{O}^T\hat{V}^T + V{O}^T\hat{V}^T - VV^T\|_F \\
    & ~=~ \|(\hat{V}{O}-V){O}^T\hat{V}^T + V(O^T\hat V^T-V^T)\|_F\\
    &~\leq ~2\|\hat{V}O-V\|_F.
\end{align*}
Since $\lambda_{K+1}(\Sigma)=0$ and $\lambda_K(\Sigma)\geq CM/p$ by Lemma \ref{thm_first_term}, we obtain that
\begin{equation}\label{eq_thm_projection_pf_1}
\|\hat{P}_{B}^{\perp}-P_{B}^{\perp}\|_{F}\lesssim \frac{p}{M}\|\hat\Sigma-\Sigma\|_F.
\end{equation}
It remains to bound $\|\hat\Sigma-\Sigma\|_F$. Note that we can decompose $\|\hat\Sigma-\Sigma\|_F$ as follows:
\begin{equation}\label{eq_thm_projection_pf_2}
\|\hat\Sigma-\Sigma\|_F\leq I_1+I_2+I_3+I_4,
\end{equation}
where 
$$
I_1=\|(\EE_n-\EE)\epsilon^{\otimes 2}\|_F,~~~I_2=\|\EE\epsilon^{\otimes 2}\|_F, ~~~I_3=\|(\EE_n-\EE)\{B(Z-\tilde Z)\}^{\otimes 2}\|_F,
$$
$$
I_4=\|\EE_n[\hat \epsilon^{\otimes 2}-\epsilon^{\otimes 2}-\{B(Z-\tilde Z)\}^{\otimes 2}]\|_F,
$$
and $\tilde Z$ is defined in Lemma \ref{lem_tildeZ}. Lemmas \ref{thm_sec_term} and \ref{thm_third_term} imply
$$
I_1=O_p\Bigg(M\sqrt{\frac{\log M}{n}}\Bigg), ~~I_3=O_p\Bigg(\frac{M}{p}\sqrt{\frac{K}{n}}\Bigg).
$$
Since $Y_m$ and $Y_{m'}$ are independent given $X$ and $Z$, $\EE\epsilon^{\otimes 2}$ is a diagonal matrix. Combined with Assumption \ref{assum_glm}, we have
$$
I_2=O_p\big(\sqrt{M}\big).
$$
The rest of the proof is to bound the last term $I_4$. By writing 
$$
\hat\epsilon_j= \epsilon_j+(\bar\epsilon_j-\epsilon_j)+(\hat\epsilon_j-\bar\epsilon_j),
$$
we obtain via Taylor expansion that
\begin{align*}
\bar\epsilon_j-\epsilon_j&=-(1+\zeta_j)(F_j^*X-\Theta_jX-B_jZ)+\tilde \eta_j (F_j^*X-\Theta_jX-B_jZ)^2\\
&=(1+\zeta_j)[B_j(Z-\tilde Z)-\bar\phi_j\Sigma^{-1/2}X]+\tilde \eta_j (F_j^*X-\Theta_jX-B_jZ)^2\\
&=\underbrace{B_j(Z-\tilde Z)}_{J_{2j}}+\underbrace{\zeta_jB_j(Z-\tilde Z)}_{J_{3j}}-\underbrace{(1+\zeta_j)\bar\phi_j\Sigma^{-1/2}X}_{J_{4j}}+\underbrace{\tilde \eta_j (F_j^*X-\Theta_jX-B_jZ)^2}_{J_{5j}},
\end{align*}
where 
$$
\zeta_j=\frac{(Y_j - b'(\Theta_j X + B_j Z))b'''(\Theta_j X + B_j Z)}{\{b''(\Theta_j X + B_j Z\}^2},
$$
\[
\phi_j=\frac{b'''(\delta_j)}{2b''(F^*_j X)}\cdot(\Theta_j X + B_j Z - F^*_j X)^2\cdot (\Sigma_X^{-1/2}X)^T
\]
is defined in the proof of Theorem \ref{thm_expectation} with $\bar\phi_j=\EE(\phi_j)$, and
$$
\tilde \eta_j=-\frac{[-b''(t_j)b'''(t_j)+(Y_j-b'(t_j))b''''(t_j)]\{b''(t_j)\}^2-2b''(t_j)b'''(t_j)(Y_j-b'(t_j))b'''(t_j)}{\{b''(t_j\}^4}
$$
where $t_j$ is some intermediate value between $F_j^*X$ and $\Theta_j X + B_j Z$. In addition, we have
$$
\hat\epsilon_j-\bar\epsilon_j=-\underbrace{(1+\tilde\zeta_j)(\hat F_j-F_j^*)X}_{J_{6j}},
$$
where  $\tilde \zeta_j$ is defined in the same way as $\zeta_j$ with $F_j^*X$ replaced by some intermediate value  between $F_j^*X$ and $\hat F_j X$. Summarizing all the terms above, for a fixed observation $(i)$, looking at the column vector constructed by combining all $1 \leq j \leq M$, we have $\hat\epsilon= \sum_{s=1}^6 J_s$, where $J_1=\epsilon$ and $J_s$ is the column vector consisting of $J_{sj}$ for $s\geq 2$. Recall that we have $I_4 = \big|\big|\EE_n\big[(J_1+...+J_6)^{\otimes 2} \big] - \EE_n\big[J_1^{\otimes 2}\big] - \EE_n\big[J_2^{\otimes 2}\big]\big|\big|_F$. Thus, we have 
\begin{align}\label{eq_thm_projection_pf_I4}
I_4\leq \sum_{s=3}^6 \|\EE_n J_s^{\otimes 2}\|_F+\sum_{1\leq s\neq t\leq 6} \|\EE_n J_s J_t^T\|_F.
\end{align}
For each term on the right hand side, we consider the diagonal and off-diagonal terms separately. Using the superscript $(i)$ to explicitly denote the $i$-th observation, note that each diagonal term in $\EE_n J_sJ_t^T$ will have form $\big|\sum_{i=1}^n J_{sj}^{(i)}J_{tj}^{(i)}\big|$ and each off-diagonal term will have form $\big|\sum_{i=1}^n J_{sj}^{(i)}J_{tk}^{(i)}\big|$ where $1 \leq j \neq k \leq M$. Similarly, each diagonal term in $\EE_n J_s^{
\otimes 2
}$ will have form $\big|\sum_{i=1}^n J_{sj}^{(i)2}\big|$ while each off diagonal term will have form $\big|\sum_{i=1}^n J_{sj}^{(i)}J_{sk}^{(i)}\big|$ where $1 \leq j \neq k \leq M$. We start from the off-diagonal terms. For $j\neq k$ and $s=3$, we have $\EE_n J_{sj}J_{sk}=\EE_n \zeta_j\zeta_kB_j(Z-\tilde Z)^{\otimes 2}B_k^T$. Since $\zeta_j\zeta_k$ has mean 0 conditioned on $X,Z$ and $\zeta_j\zeta_k$ is $1/2$-sub-exponential, Lemma \ref{GenBern} implies that conditioned on $X, Z$, 
$$
\max_{j\neq k}|\EE_n J_{sj}J_{sk}|\lesssim \sqrt{\frac{\log M}{n}} \max_{1\leq i\leq n}\max_{j\neq k}|B_j(Z^{(i)}-\tilde Z^{(i)})^{\otimes 2}B_k^T|,
$$
provided $(\log M)^3/n=O(1)$. By Lemma \ref{lem_tildeZ}, we know that $\|B_m(Z-\tilde{Z})\|_{\psi_2} \leq c/\sqrt{p}$, which implies $B_j(Z^{(i)}-\tilde Z^{(i)})^{\otimes 2}B_k^T$ is sub-exponential with norm of order $1/p$. By the tail bound for the maximum of sub-exponential r.v, we can show that 
\begin{align}\label{eq_thm_projection_pf_I4_1}
\max_{j\neq k}|\EE_n J_{3j}J_{3k}|\lesssim \sqrt{\frac{\log M}{n}} \frac{\log (n\vee M)}{p}.
\end{align}
For $s=4$, 
$$
\max_{j\neq k}|\EE_n J_{sj}J_{sk}|\leq \max_{1\leq j\leq M} \Big\{\EE_n (1+\zeta_j)^2(\bar\phi_j\Sigma^{-1/2}X)^2\Big\}^{1/2}\max_{1\leq k\leq M} \Big\{\EE_n (1+\zeta_k)^2(\bar\phi_k\Sigma^{-1/2}X)^2\Big\}^{1/2}.
$$
We use the same logic to bound $A_{j}:=(1+\zeta_j)^2(\bar\phi_j\Sigma^{-1/2}X)^2$. Again,  $(1+\zeta_j)^2$ is $1/2$-sub-exponential and Lemma \ref{GenBern} implies that conditioned on $X, Z$,
$$
\max_{1\leq j\leq M}|\EE_n A_j-\EE_n(A_j|X,Z)|\lesssim \sqrt{\frac{\log M}{n}} \max_{1\leq i\leq n}\max_{j}(\bar\phi_j\Sigma^{-1/2}X^{(i)})^2\lesssim \sqrt{\frac{\log M}{n}} \frac{\log (n\vee M)}{p^2},
$$
provided $(\log M)^3/n=O(1)$, where we use the fact that $\|\bar\phi_j\Sigma^{-1/2}X^{(i)}\|_{\psi_2}\lesssim 1/p$. In addition,   
\begin{align*}
\max_{1\leq j\leq M}|\EE_n (A_j|X,Z)|&\lesssim \max_{1\leq j\leq M}\EE_n (\bar\phi_j\Sigma^{-1/2}X)^2\\
&= \max_{1\leq j\leq M}\Big[(\EE_n-\EE) (\bar\phi_j\Sigma^{-1/2}X)^2+\EE (\bar\phi_j\Sigma^{-1/2}X)^2\Big]. 
\end{align*}
By the sub-Gaussian property, $\EE (\bar\phi_j\Sigma^{-1/2}X)^2\lesssim 1/p^2$. The Bernstein inequality implies 
$$
\max_{1\leq j\leq M}|(\EE_n-\EE) (\bar\phi_j\Sigma^{-1/2}X)^2|\lesssim \sqrt{\frac{\log M}{n}} \frac{1}{p^2}.
$$
Therefore, 
$$
\max_{1\leq j\leq M}|\EE_n (A_j|X,Z)|\lesssim \frac{1}{p^2},
$$
which implies 
$$
\max_{1\leq j\leq M}|\EE_n A_j|\lesssim \sqrt{\frac{\log M}{n}} \frac{\log (n\vee M)}{p^2}+\frac{1}{p^2}\lesssim \frac{1}{p^2}.
$$
Finally, we obtain
\begin{align}\label{eq_thm_projection_pf_I4_2}
\max_{j\neq k}|\EE_n J_{4j}J_{4k}|\lesssim \frac{1}{p^2}. 
\end{align}
For $s=5$, we have 
$$
\max_{j\neq k}|\EE_n J_{sj}J_{sk}|\leq \max_{1\leq j\leq M}\{\EE_n A_{5j}\}^{1/2}\max_{1\leq k\leq M}\{\EE_n A_{5k}\}^{1/2},
$$ 
where $A_{5j}=\tilde \eta^2_j (F_j^*X-\Theta_jX-B_jZ)^4$. Following a similar argument, 
\begin{align*}
\max_{1\leq j\leq M}|\EE_n A_{5j}-\EE_n(A_{5j}|X,Z)|&\lesssim \sqrt{\frac{\log M}{n}} \max_{1\leq i\leq n}\max_{j}(F_j^*X^{(i)}-\Theta_jX^{(i)}-B_jZ^{(i)})^4\\
&\lesssim \sqrt{\frac{\log M}{n}} \frac{(\log (n\vee M))^2}{p^2},
\end{align*}
where we know from the proof of Theorem \ref{thm_expectation} that $\|F_j^*X-\Theta_jX-B_jZ\|_{\psi_2}\lesssim p^{-1/2}$. We can similarly show that
$$
\max_{1\leq j\leq M}|\EE_n (A_{5j}|X,Z)|\lesssim \sqrt{\frac{\log M}{n}} \frac{1}{p^2}+\frac{1}{p^2}\lesssim \frac{1}{p^2},
$$
and therefore under the assumption $(\log M)^5/n=O(1)$, 
$$
\max_{1\leq j\leq M}|\EE_n A_{5j}|\lesssim \sqrt{\frac{\log M}{n}} \frac{(\log (n\vee M))^2}{p^2}+\frac{1}{p^2}\lesssim \frac{1}{p^2}.
$$
Finally, we obtain
\begin{align}\label{eq_thm_projection_pf_I4_3}
\max_{j\neq k}|\EE_n J_{5j}J_{5k}|\lesssim \frac{1}{p^2}. 
\end{align}
For $s=6$, 
$$
\max_{j\neq k}|\EE_n J_{sj}J_{sk}|\leq \max_{1\leq j\leq M}\{\EE_n A_{6j}\}^{1/2}\max_{1\leq k\leq M}\{\EE_n A_{6k}\}^{1/2},
$$ 
where $A_{6j}=(1+\tilde\zeta_j)^2\{(\hat F_j-F_j^*)X\}^2$. Due to sample splitting, given $X, Z$ and $\hat F$, the r.v. $(1+\tilde\zeta_j)^2$ is $1/2$-sub-exponential and Lemma \ref{GenBern} implies that 
\begin{align*}
\max_{1\leq j\leq M}|\EE_n A_{6j}-\EE_n(A_{6j}|X,Z,\hat F)|&\lesssim \sqrt{\frac{\log M}{n}} \max_{1\leq i\leq n}\max_{j}\{(\hat F_j-F_j^*)X^{(i)}\}^2\\
&\lesssim \sqrt{\frac{\log M}{n}} \max_{j}\|\hat F_j-F_j^*\|_1^2\\
&\lesssim \sqrt{\frac{\log M}{n}} \frac{p^2\log (p\vee M)}{n},
\end{align*}
where the last line follows from Theorem \ref{thm_hatF}. Together with 
$$
\EE_n(A_{6j}|X,Z,\hat F)\lesssim \EE_n \{(\hat F_j-F_j^*)X\}^2\lesssim  \frac{p\log (p\vee M)}{n},
$$
we obtain that 
$$
\max_{1\leq j\leq M}|\EE_n A_{6j}|\lesssim \sqrt{\frac{\log M}{n}} \frac{p^2\log (p\vee M)}{n}+\frac{p\log (p\vee M)}{n}\lesssim \frac{p\log (p\vee M)}{n},
$$
as we assume $p\sqrt{\frac{\log (p\vee M)}{n}}=o(1)$. As a result,
\begin{align}\label{eq_thm_projection_pf_I4_4}
\max_{j\neq k}|\EE_n J_{6j}J_{6k}|\lesssim \frac{p\log (p\vee M)}{n}. 
\end{align}
Next, consider $s=1$ and $t=2$ in (\ref{eq_thm_projection_pf_I4}). Again, given $X$ and $Z$, Lemma \ref{GenBern} (or the Bernstein inequality in Lemma \ref{Bern}) implies
$$
\max_{j\neq k}|\EE_n J_{sj}J_{tk}|=\max_{j\neq k}|\EE_n \epsilon_j B_k(Z-\tilde Z)|\lesssim \sqrt{\frac{\log M}{n}} \max_{1\leq i\leq n}\max_{1\leq k\leq M}|B_k(Z^{(i)}-\tilde Z^{(i)})|.
$$
Since $\|B_m(Z-\tilde{Z})\|_{\psi_2} \leq c/\sqrt{p}$, by the tail bound for the maximum of sub-Gaussian r.v, we can show that 
\begin{align}\label{eq_thm_projection_pf_I4_5}
\max_{j\neq k}|\EE_n J_{1j}J_{2k}|\lesssim \sqrt{\frac{\log M}{n}} \sqrt{\frac{\log (n\vee M)}{p}}.
\end{align}
For $s=1$ and $t=3$, 
\begin{align}
\max_{j\neq k}|\EE_n J_{1j}J_{3k}|&=\max_{j\neq k}|\EE_n \epsilon_j \zeta_k B_k(Z-\tilde Z)|\nonumber\\
&\lesssim \sqrt{\frac{\log M}{n}} \max_{1\leq i\leq n}\max_{1\leq k\leq M}|B_k(Z^{(i)}-\tilde Z^{(i)})|\nonumber\\
&\lesssim\sqrt{\frac{\log M}{n}} \sqrt{\frac{\log (n\vee M)}{p}}.\label{eq_thm_projection_pf_I4_6}
\end{align}
For $s=1$ and $t=4$, since $\epsilon_j (1+\zeta_k)$ is mean 0 and $1/2$-sub-exponential given $X$ and $Z$, invoking Lemma \ref{GenBern} we have
\begin{align}
\max_{j\neq k}|\EE_n J_{1j}J_{4k}|&=\max_{j\neq k}|\EE_n \epsilon_j (1+\zeta_k) \bar\phi_k\Sigma^{-1/2}X|\nonumber\\
&\lesssim \sqrt{\frac{\log M}{n}} \max_{1\leq i\leq n}\max_{1\leq k\leq M}|\bar\phi_k\Sigma^{-1/2}X^{(i)}|\nonumber\\
&\lesssim\sqrt{\frac{\log M}{n}} \frac{\sqrt{\log (n\vee M)}}{p}.\label{eq_thm_projection_pf_I4_7}
\end{align}
For $s=1$ and $t=5$,
\begin{align}
\max_{j\neq k}|\EE_n J_{1j}J_{5k}|&=\max_{j\neq k}|\EE_n \epsilon_j \tilde \eta_k (F_k^*X-\Theta_kX-B_kZ)^2|\nonumber\\
&\lesssim \sqrt{\frac{\log M}{n}} \max_{1\leq i\leq n}\max_{1\leq k\leq M}|(F_k^*X^{(i)}-\Theta_kX^{(i)}-B_kZ^{(i)})^2|\nonumber\\
&\lesssim\sqrt{\frac{\log M}{n}}\frac{\log (n\vee M)}{p}.\label{eq_thm_projection_pf_I4_8}
\end{align}
For $s=1$ and $t=6$, due to sample slitting, conditioned on $X,Z$ and $\hat F$, $\epsilon_j (1+\tilde\zeta_k)$ is mean 0 and $1/2$-sub-exponential. Therefore 
\begin{align}
\max_{j\neq k}|\EE_n J_{1j}J_{6k}|&=\max_{j\neq k}|\EE_n \epsilon_j (1+\tilde\zeta_k)(\hat F_k-F_k^*)X|\nonumber\\
&\lesssim \sqrt{\frac{\log M}{n}} \max_{1\leq i\leq n}\max_{1\leq k\leq M}|(\hat F_k-F_k^*)X^{(i)}|\nonumber\\
&\lesssim\sqrt{\frac{\log M}{n}}\cdot p\sqrt{\frac{\log (p\vee M)}{n}}.\label{eq_thm_projection_pf_I4_9}
\end{align}
where the last line follows from Theorem \ref{thm_hatF}, the assumption that $X$ is bounded, and the fact that $||v||_1 \leq \sqrt{p}||v||_2$ for any $p$-vector $v$.
For $s=2$ and $t=3$, 
\begin{align}
\max_{j\neq k}|\EE_n J_{2j}J_{3k}|&=\max_{j\neq k}|\EE_n \zeta_k B_j(Z-\tilde Z)^{\otimes 2}B_k^T|\nonumber\\
&\lesssim \sqrt{\frac{\log M}{n}} \max_{1\leq i\leq n}\max_{j\neq k}|B_k(Z^{(i)}-\tilde Z^{(i)})^{\otimes 2}B_j^T|\nonumber\\
&\lesssim\sqrt{\frac{\log M}{n}} {\frac{\log (n\vee M)}{p}}.\label{eq_thm_projection_pf_I4_10}
\end{align}
For $s=2$ and $t=4$, 
\begin{align}
\max_{j\neq k}|\EE_n J_{2j}J_{4k}|&=\max_{j\neq k}|\EE_n B_j(Z-\tilde Z)(1+\zeta_k) \bar\phi_k\Sigma^{-1/2}X|\nonumber\\
&\leq \max_{j\neq k}\Big\{|\EE_n B_j(Z-\tilde Z)\bar\phi_k\Sigma^{-1/2}X|+|\EE_n B_j(Z-\tilde Z)\zeta_k \bar\phi_k\Sigma^{-1/2}X|\Big\}.\nonumber
\end{align}
We notice that by the definition of $\tilde Z$, $B_j(Z-\tilde Z)\bar\phi_k\Sigma^{-1/2}X$ is mean 0 and sub-exponential with sub-exponential norm of order $p^{-3/2}$. Given $X$ and $Z$, $\zeta_k$ is sub-exponential with bounded sub-exponential norm. Thus,
$$
\max_{j\neq k}|\EE_n B_j(Z-\tilde Z)\bar\phi_k\Sigma^{-1/2}X|\lesssim \sqrt{\frac{\log M}{n}} \frac{1}{p^{3/2}},
$$
and
\begin{align*}
\max_{j\neq k}|\EE_n B_j(Z-\tilde Z)\zeta_k \bar\phi_k\Sigma^{-1/2}X|&\lesssim \sqrt{\frac{\log M}{n}}  \max_{1\leq i\leq n}\max_{j\neq k}|B_j(Z^{(i)}-\tilde Z^{(i)})\bar\phi_k\Sigma^{-1/2}X^{(i)}|\\
&\lesssim\sqrt{\frac{\log M}{n}} {\frac{\log (n\vee M)}{p^{3/2}}}.
\end{align*}
Combining above two terms, 
\begin{align}
\max_{j\neq k}|\EE_n J_{2j}J_{4k}|&\lesssim\sqrt{\frac{\log M}{n}} {\frac{\log (n\vee M)}{p^{3/2}}}.\label{eq_thm_projection_pf_I4_11}
\end{align}
For $s=2$ and $t=5$, 
\begin{align*}
\max_{j\neq k}|\EE_n J_{2j}J_{5k}|&=\max_{j\neq k}|\EE_n B_j(Z-\tilde Z)\tilde \eta_k (F_k^*X-\Theta_kX-B_kZ)^2|\nonumber\\
&\leq \max_{j\neq k}\sqrt{\EE_n \{B_j(Z-\tilde Z)\}^2} \sqrt{\EE_n\tilde \eta^2_k (F_k^*X-\Theta_kX-B_kZ)^4}.
\end{align*}
Since $\{B_j(Z-\tilde Z)\}^2$ is sub-exponential with norm of order $1/p$, we have
$$
\max_j|(\EE_n-\EE) \{B_j(Z-\tilde Z)\}^2|\lesssim \sqrt{\frac{\log M}{n}} \frac{1}{p},
$$
and $\EE \{B_j(Z-\tilde Z)\}^2\leq C/p$, which implies  $\EE_n \{B_j(Z-\tilde Z)\}^2\lesssim 1/p$. The previous argument (for $s=5$) implies $\EE_n\tilde \eta^2_k (F_k^*X-\Theta_kX-B_kZ)^4\lesssim 1/p^2$. As a result,
\begin{align}
\max_{j\neq k}|\EE_n J_{2j}J_{5k}|&\lesssim{\frac{1}{p^{3/2}}}.\label{eq_thm_projection_pf_I4_12}
\end{align}
Similarly, for $s=2$ and $t=6$, 
\begin{align}
\max_{j\neq k}|\EE_n J_{2j}J_{6k}|&= \max_{j\neq k}|\EE_n B_j(Z-\tilde Z)(1+\tilde\zeta_k)(\hat F_k-F_k^*)X|\nonumber\\
&\leq \max_{j\neq k}\sqrt{\EE_n \{B_j(Z-\tilde Z)\}^2} \sqrt{\EE_n(1+\tilde\zeta_k)^2\{(\hat F_k-F_k^*)X\}^2}\nonumber\\
&\lesssim \frac{1}{p^{1/2}}\sqrt{\frac{p\log (p\vee M)}{n}}.\label{eq_thm_projection_pf_I4_13}
\end{align}
For $s=3$ and $t=4$, 
\begin{align}
\max_{j\neq k}|\EE_n J_{3j}J_{4k}|&= \max_{j\neq k}|\EE_n \zeta_j B_j(Z-\tilde Z)(1+\zeta_k) \bar\phi_k\Sigma^{-1/2}X|\nonumber\\
&\lesssim\sqrt{\frac{\log M}{n}} {\frac{\log (n\vee M)}{p^{3/2}}},\label{eq_thm_projection_pf_I4_14}
\end{align}
where we use the same argument used for $s=2$ and $t=4$ as $\zeta_j(1+\zeta_k)$ is mean 0 and $1/2$-sub-exponential. 

For $s=3$ and $t=5$, $\zeta_j\tilde \eta_k$ is mean 0 and $1/2$-sub-exponential given $X,Z$, and thus
\begin{align}
\max_{j\neq k}|\EE_n J_{3j}J_{5k}|&= \max_{j\neq k}|\EE_n \zeta_j B_j(Z-\tilde Z)\tilde \eta_k (F_k^*X-\Theta_kX-B_kZ)^2|\nonumber\\
&\lesssim\sqrt{\frac{\log M}{n}} {\frac{(\log (n\vee M))^{3/2}}{p^{3/2}}}.\label{eq_thm_projection_pf_I4_15}
\end{align}
For $s=3$ and $t=6$, due to sample splitting, $\zeta_j(1+\tilde \zeta_k)$ is mean 0 and $1/2$-sub-exponential given $X,Z$ and $\hat F$, and thus
\begin{align}
\max_{j\neq k}|\EE_n J_{3j}J_{6k}|&= \max_{j\neq k}|\EE_n \zeta_j B_j(Z-\tilde Z)(1+\tilde \zeta_k) (F_k^*-\hat F_k)X|\nonumber\\
&\lesssim\sqrt{\frac{\log M}{n}}  \max_{1\leq i\leq n}\max_{1\leq j\leq M} |B_j(Z^{(i)}-\tilde Z^{(i)})| \max_{1\leq i\leq n} \max_{1\leq k\leq M}|(\hat F_k-F_k^*)X^{(i)}|\nonumber\\
&\lesssim\sqrt{\frac{\log M}{n}} \sqrt{{\frac{\log (n\vee M)}{p}}} \cdot p\sqrt{\frac{\log (p\vee M)}{n}}.\label{eq_thm_projection_pf_I4_16}
\end{align}
For $s=4$ and $t=5$,
\begin{align}
\max_{j\neq k}|\EE_n J_{4j}J_{5k}|&= \max_{j\neq k}|\EE_n (1+\zeta_j) \bar\phi_j\Sigma^{-1/2}X\tilde \eta_k (F_k^*X-\Theta_kX-B_kZ)^2|\nonumber\\
&\leq \max_{j\neq k}\sqrt{\EE_n (1+\zeta_j)^2 (\bar\phi_j\Sigma^{-1/2}X)^2}\sqrt{\EE_n \tilde \eta_k^2 (F_k^*X-\Theta_kX-B_kZ)^4}\nonumber\\
&\lesssim\frac{1}{p^2},\label{eq_thm_projection_pf_I4_17}
\end{align}
where the last step follows from the analysis for $s=4$ and $s=5$ above. 

For $s=4$ and $t=6$, we apply the same argument to derive
\begin{align}
\max_{j\neq k}|\EE_n J_{4j}J_{5k}|&= \max_{j\neq k}|\EE_n (1+\zeta_j) \bar\phi_j\Sigma^{-1/2}X(1+\tilde\zeta_k)(\hat F_k-F_k^*)X|\nonumber\\
&\leq \max_{j\neq k}\sqrt{\EE_n (1+\zeta_j)^2 (\bar\phi_j\Sigma^{-1/2}X)^2}\sqrt{\EE_n (1+\tilde\zeta_k)^2\{(\hat F_k-F_k^*)X\}^2}\nonumber\\
&\lesssim\frac{1}{p}\sqrt{\frac{p\log (p\vee M)}{n}}.\label{eq_thm_projection_pf_I4_18}
\end{align}
Finally, for $s=5$ and $t=6$
\begin{align}
\max_{j\neq k}|\EE_n J_{4j}J_{5k}|&= \max_{j\neq k}|\EE_n \tilde \eta_j (F_j^*X-\Theta_jX-B_jZ)^2(1+\tilde\zeta_k)(\hat F_k-F_k^*)X|\nonumber\\
&\leq \max_{j\neq k}\sqrt{\EE_n \tilde \eta_j^2 (F_j^*X-\Theta_jX-B_jZ)^4}\sqrt{\EE_n (1+\tilde\zeta_k)^2\{(\hat F_k-F_k^*)X\}^2}\nonumber\\
&\lesssim\frac{1}{p}\sqrt{\frac{p\log (p\vee M)}{n}}.\label{eq_thm_projection_pf_I4_19}
\end{align}
We have considered all the off-diagonal terms on the right hand side of (\ref{eq_thm_projection_pf_I4}). For the diagonal terms, $j=k$, we have the following bounds. For brevity, we skip the intermediate steps.   
\begin{align*}
    (s,t)=(3,3):~&\max_{1\leq j\leq M}\EE_n \zeta_j^2 \{B_j(Z-\tilde Z)\}^2 ~\lesssim~ \frac{1}{p} \\
    (s,t)=(4,4):~&\max_{1\leq j\leq M}\EE_n (1+\zeta_j)^2 \{\bar\phi_j\Sigma^{-1/2}X\}^2 ~\lesssim~ \frac{1}{p^2} \\
    (s,t)=(5,5):~&\max_{1\leq j\leq M}\EE_n \tilde \eta_j^2 (F_j^*X-\Theta_jX-B_jZ)^4\lesssim \frac{1}{p^2} \\
    (s,t)=(6,6):~&\max_{1\leq j\leq M}\EE_n (1+\tilde\zeta_j)^2\{(\hat F_j-F_j^*)X\}^2\lesssim \frac{p\log (p\vee M)}{n}\\
    (s,t)=(1,2):~&\max_{1\leq j\leq M}|\EE_n \epsilon_j B_j(Z-\tilde Z)|\lesssim \sqrt{\frac{\log M}{n}} \sqrt{\frac{\log (n\vee M)}{p}}\\
    (s,t)=(1,3):~&\max_{1\leq j\leq M}|\EE_n \epsilon_j\zeta_j B_j(Z-\tilde Z)|\lesssim \frac{1}{p^{1/2}} \\
    (s,t)=(1,4):~&\max_{1\leq j\leq M}|\EE_n \epsilon_j (1+\zeta_j) \bar\phi_j\Sigma^{-1/2}X|\lesssim \frac{1}{p}\\
    (s,t)=(1,5):~&\max_{1\leq j\leq M}|\EE_n \epsilon_j\tilde \eta_j (F_j^*X-\Theta_jX-B_jZ)^2|\lesssim \frac{1}{p}&\\
    (s,t)=(1,6):~&\max_{1\leq j\leq M}|\EE_n \epsilon_j (1+\tilde\zeta_j)(\hat F_j-F_j^*)X|\lesssim \sqrt{\frac{p\log (p\vee M)}{n}}\\
    (s,t)=(2,3):~&\max_{1\leq j\leq M}|\EE_n \zeta_j (B_j(Z-\tilde Z))^2|\lesssim \sqrt{\frac{\log M}{n}} \frac{\log (n\vee M)}{p} \\
    (s,t)=(2,4):~&\max_{1\leq j\leq M}|\EE_n (1+\zeta_j) B_j(Z-\tilde Z)\bar\phi_j\Sigma^{-1/2}X |\lesssim \frac{1}{p^{3/2}} \\
    (s,t)=(2,5):~&\max_{1\leq j\leq M}|\EE_n B_j(Z-\tilde Z)\tilde \eta_j (F_j^*X-\Theta_jX-B_jZ)^2|\lesssim \frac{1}{p^{3/2}} \\
    (s,t)=(2,6):~&\max_{1\leq j\leq M}|\EE_n (1+\tilde\zeta_j) B_j(Z-\tilde Z) (\hat F_j-F_j^*)X|\lesssim \sqrt{\frac{\log (p\vee M)}{n}}\\
    (s,t)=(3,4):~&\max_{1\leq j\leq M}|\EE_n \zeta_j B_j(Z-\tilde Z)(1+\zeta_j)\bar\phi_j\Sigma^{-1/2}X|\lesssim \frac{1}{p^{3/2}}\\
    (s,t)=(3,5):~&\max_{1\leq j\leq M}|\EE_n  \zeta_j B_j(Z-\tilde Z)\tilde \eta_j (F_j^*X-\Theta_jX-B_jZ)^2|\lesssim \frac{1}{p^{3/2}} \\
    (s,t)=(3,6):~&\max_{1\leq j\leq M}|\EE_n \zeta_j B_j(Z-\tilde Z)(1+\tilde\zeta_j)(\hat F_j-F_j^*)X|\lesssim \sqrt{\frac{\log (p\vee M)}{n}}
\end{align*}
\begin{align*}
     (s,t)=(4,5):~&\max_{1\leq j\leq M}|\EE_n (1+\zeta_j)\bar\phi_j\Sigma^{-1/2}X\tilde \eta_j (F_j^*X-\Theta_jX-B_jZ)^2|\lesssim \frac{1}{p^{2}}\\
     (s,t)=(4,6):~&\max_{1\leq j\leq M}|\EE_n (1+\zeta_j)\bar\phi_j\Sigma^{-1/2}X (1+\tilde\zeta_j)(\hat F_j-F_j^*)X|\lesssim \sqrt{\frac{\log (p\vee M)}{np}} \\
     (s,t)=(5,6):~&\max_{1\leq j\leq M}|\EE_n \tilde \eta_j (F_j^*X-\Theta_jX-B_jZ)^2 (1+\tilde\zeta_j)(\hat F_j-F_j^*)X|\lesssim \sqrt{\frac{\log (p\vee M)}{np}}
\end{align*}
It can be seen that the leading error among all the diagonal terms is 
$$
\frac{1}{p^{1/2}}+\sqrt{\frac{\log (p\vee M)}{n}}\Bigg(\sqrt{p}+\sqrt{\frac{\log (n\vee M)}{p}}+\frac{\log (n\vee M)}{p}\Bigg)=o_p(1). 
$$
By collecting the order of the errors in (\ref{eq_thm_projection_pf_I4_1})--(\ref{eq_thm_projection_pf_I4_19}), the leading error among all the off-diagonal terms is 
$$
\frac{1}{p^{3/2}}+\sqrt{\frac{\log (p\vee M)}{n}}\Bigg[1+\bigg(\frac{\log (n\vee M)}{p}\bigg)^{3/2}\Bigg]. 
$$
Plugging these above results into (\ref{eq_thm_projection_pf_I4}), we derive
\begin{align*}
I_4\lesssim &\sqrt{M}\Bigg\{\frac{1}{p^{1/2}}+\sqrt{\frac{\log (p\vee M)}{n}}\Bigg[\sqrt{p}+\sqrt{\frac{\log (n\vee M)}{p}}+\frac{\log (n\vee M)}{p}\Bigg]\Bigg\}\\
&~~+ M\Bigg\{\frac{1}{p^{3/2}}+\sqrt{\frac{\log (p\vee M)}{n}}\Bigg[1+\bigg(\frac{\log (n\vee M)}{p}\bigg)^{3/2}\Bigg]\Bigg\}.
\end{align*}
since there are $M$ diagonal terms and $O(M^2)$ off diagonal terms and thus we multiply the corresponding square roots when computing the Frobenius norm for $I_4$. From (\ref{eq_thm_projection_pf_1}) and (\ref{eq_thm_projection_pf_2}), after removing redundant terms, we finally obtain under $p\sqrt{p \log (p \vee M)/n}\cdot\log(n \vee M) = o(1)$ (assumed in Theorem \ref{thm_hatF}) and $\log (n \vee M) = O(p^3)$ the following:
$$
\|\hat{P}_{B}^{\perp}-P_{B}^{\perp}\|_{F}\lesssim \frac{p}{M}\|\hat\Sigma-\Sigma\|_F\lesssim \frac{p}{\sqrt{M}}+\frac{1}{\sqrt{p}}+p\sqrt{\frac{\log (p\vee M)}{n}}+\sqrt{\frac{K}{n}}.
$$
This completes the proof. 
\end{proof}

\subsection{Proof for Theorem \ref{thm_final}}
\begin{proof}
    For $\hat{\Theta}:=\hat{P}_B^{\perp}\hat{F}$, we have the following derivation:
    \begin{align*}
        \hat{\Theta}-\Theta~&=~ (\hat{P}_B^{\perp}- P_B^{\perp})\hat{F} + P_B^{\perp}(\hat{F}-F^*) + (P_B^{\perp}F^* - \Theta) \\
        &= (\hat{P}_B^{\perp}- P_B^{\perp})F^*+(\hat{P}_B^{\perp}- P_B^{\perp})(\hat{F}-F^*) + P_B^{\perp}(\hat{F}-F^*) + (P_B^{\perp}F^* - \Theta). 
 \end{align*}
 By Theorems \ref{thm_projection}, \ref{thm_hatF} and \ref{thm_population}, we have
        \begin{align*}
        \frac{1}{\sqrt{M}}\big|\big|\hat{\Theta}-\Theta \big|\big|_F ~&\leq~
        ||\hat{P}_B^{\perp}- P_B^{\perp}||_F\cdot \frac{||F^*||_{\text{op}}}{\sqrt{M}} ~+~ (\|\hat{P}_B^{\perp}- P_B^{\perp}\|_F+||P_B^{\perp}||_{\text{op}})\cdot\frac{1}{\sqrt{M}}||\hat{F}-F^*||_F \\
        &~~~~+~ \frac{1}{\sqrt{M}}||P_B^{\perp}F^* - \Theta||_F
        \\
&\lesssim \Bigg(\frac{p}{\sqrt{M}}+\frac{1}{\sqrt{p}}+p\sqrt{\frac{\log (p\vee M)}{n}}+\sqrt{\frac{K}{n}}\Bigg) \frac{||F^*||_{\text{op}}}{\sqrt{M}}+\bigg(1+\frac{p}{\sqrt{M}}\bigg)\sqrt{\frac{p\log (p\vee M)}{n}}+\frac{1}{p}.
    \end{align*}
\end{proof}

\subsection{Proof for Theorem \ref{thm_inference}}
\begin{proof}
For any $u\in\RR^{M}$ and $v\in\RR^p$ with $\|u\|_2=1$ and $\|v\|_2=1$, we can write
\begin{align}\label{eq_thm_inference_pf_0}
u^T(\hat\Theta-P_B^\perp F^*)v&=u^TP_B^\perp(\hat F-F^*)v+u^T(\hat P_B^\perp-P_B^\perp)\hat F v\nonumber\\
&=\underbrace{u^TP_B^\perp(\hat F-F^*)v}_{I_1}+\underbrace{u^T(\hat P_B^\perp-P_B^\perp) F^* v}_{I_2}+\underbrace{u^T(\hat P_B^\perp-P_B^\perp)(\hat F-F^*) v}_{I_3}.
\end{align}
For the first term $I_1$, by (\ref{eq_linear_2}) we can rewrite $I_1$ as
\begin{align}\label{eq_thm_inference_pf_1}
I_1&=\frac{1}{n}\sum_{i=1}^n u^TP_B^\perp h^{(i)}+\frac{1}{n}\sum_{i=1}^n u^TP_B^\perp \xi_1^{(i)}+\frac{1}{n}\sum_{i=1}^n u^TP_B^\perp \xi_2^{(i)},
\end{align}
where $\xi_1^{(i)}, \xi_2^{(i)}\in \RR^M$ with the $m$th entry being $\bar\epsilon^{(i)}_m v^T\Delta_{m1}X^{(i)}$ and $\bar\epsilon^{(i)}_m v^T\Delta_{m2}X^{(i)}$. Here,
\begin{align*}
   \Delta_{m1}&=(\frac{1}{n}\sum_{i=1}^n (1+\zeta^{(i)}_m)X^{(i)}X^{(i)T})^{-1}-G_m^{-1} \\
   \Delta_{m2}&=(\frac{1}{n}\sum_{i=1}^n (1+\tilde\zeta^{(i)}_m)X^{(i)}X^{(i)T})^{-1}-(\frac{1}{n}\sum_{i=1}^n (1+\zeta^{(i)}_m)X^{(i)}X^{(i)T})^{-1}
\end{align*}
where the same notation in (\ref{eq_linear_2}) is used here. In the following, we will first apply the Berry–Esseen theorem to the first term in (\ref{eq_thm_inference_pf_1}). Let $O_i=u^TP_B^\perp h^{(i)}$. The Berry–Esseen theorem yields that 
\begin{align}\label{eq_thm_inference_pf_2}
\sup_t\Bigg|\PP\bigg(\frac{\sum_{i=1}^n O_i}{s_n}\leq t\bigg)-\Phi(t)\Bigg|&\leq C\frac{\sum_{i=1}^n \rho_i}{s_n^3}, 
\end{align}
where $s_n^2=\sum_{i=1}^n \EE O_i^2$ and $\rho_i=\EE|O_i|^3$. Since $\EE O_i^2\geq C$, we have $s_n\geq \sqrt{Cn}$. By Hölder's inequality, $\rho_i\leq \{\EE O_i^4\}^{3/4}$. For $\EE O_i^4$, we first note that
\begin{align}\label{eq_thm_inference_pf_uP}
\|u^T P_B\|_2\leq \|u^T B\|_2 \|(B^TB)^{-1}B^T\|_{\text{op}}\leq \|u\|_1 \|B\|_{\infty, 2}\sqrt{\lambda_{\max}(B^TB)^{-1}}\lesssim R_n \sqrt{\frac{K}{M}},
\end{align}
where the last step follows from the factor model assumption, $\|u\|_1\leq R_n$ and the fact that each entry of $B$ is bounded by $C_4$ and therefore $\|B\|_{\infty, 2}\lesssim \sqrt{K}$. Recall that $h^{(i)}_m=\bar \epsilon^{(i)}_mv^T G^{-1}_mX^{(i)}$, where $\bar \epsilon^{(i)}_m$ is sub-exponential, and that $v^T G^{-1}_mX^{(i)}=v^T \Sigma_X^{-1/2}\tilde G^{-1}_m\tilde X^{(i)}$ is sub-Gaussian with bounded sub-Gaussian norm. Also, $\tilde X^{(i)}=\Sigma_X^{-1/2}X^{(i)}$ is sub-Gaussian with bounded sub-Gaussian norm, $\tilde G_m=\EE(1+\zeta_m^{(i)})\tilde X^{(i)}\tilde X^{(i)T}$ satisfies (\ref{eq_min_eigen_G}), and $\lambda_{\min}(\Sigma_X)\geq 1$ (WLOG). Writing $u^TP_B^\perp h^{(i)}=u^Th^{(i)}-u^TP_B h^{(i)}$,  we can show that   
\begin{align*}
\EE O_i^4&\lesssim \EE(u^T h^{(i)})^4+\EE(u^TP_B h^{(i)})^4\\
&\lesssim R_n^4 \EE(\|h^{(i)}\|^4_\infty)+ \EE[\|u^TP_B\|_2^4\|h^{(i)}\|_2^4]\\
&\lesssim R_n^4 ((\log M)^6\vee K^2).
\end{align*}
Therefore, (\ref{eq_thm_inference_pf_2}) implies
\begin{align}\label{eq_thm_inference_pf_3}
\sup_t\Bigg|\PP\bigg(\frac{\sum_{i=1}^n O_i}{s_n}\leq t\bigg)-\Phi(t)\Bigg|&\leq C\frac{R_n^3((\log M)^{9/2}\vee K^{3/2})}{\sqrt{n}}. 
\end{align}
For the second term in (\ref{eq_thm_inference_pf_1}), we have
\begin{align*}
\Big|\frac{1}{n}\sum_{i=1}^n u^TP_B^\perp \xi_1^{(i)}\Big|&\leq \Big|\frac{1}{n}\sum_{i=1}^n u^T \xi_1^{(i)}\Big|+\Big|\frac{1}{n}\sum_{i=1}^n u^TP_B \xi_1^{(i)}\Big|\\
&\leq \|u\|_1\max_m \Big|v^T\Delta_{m1}\frac{1}{n}\sum_{i=1}^n \bar\epsilon_m^{(i)} X^{(i)}\Big|+\|u^TP_B\|_2 \bigg\{\sum_{m=1}^M |v^T\Delta_{m1}\frac{1}{n}\sum_{i=1}^n \bar\epsilon_m^{(i)} X^{(i)}|^2\bigg\}^{1/2}. 
\end{align*}
By the proof of Theorem \ref{thm_hatF}, 
$$
\max_m \Big\|\frac{1}{n}\sum_{i=1}^n \bar\epsilon_m^{(i)} X^{(i)}\Big\|_\infty \lesssim \sqrt{\frac{\log (p\vee M)}{n}},
$$
and
\begin{align*}
\max_{m}\|\Delta_{m1}\|_{\text{op}}&=\max_{m}\Big\|(\frac{1}{n}\sum_{i=1}^n (1+\zeta^{(i)}_m)X^{(i)}X^{(i)T})^{-1}\Big\|_{\text{op}}\Big\|\frac{1}{n}\sum_{i=1}^n (1+\zeta^{(i)}_m)X^{(i)}X^{(i)T}-G_m\Big\|_{\text{op}}\|G_m^{-1}\|_{\text{op}}\\
&\lesssim p\sqrt{\frac{\log (p\vee M)}{n}}.
\end{align*}
As a result, we can show that
\begin{align*}
\|u\|_1\max_m \Big|v^T\Delta_{m1}\frac{1}{n}\sum_{i=1}^n \bar\epsilon_m^{(i)} X^{(i)}\Big|&\leq \|u\|_1\max_m \|v\|_2\|\Delta_{m1}\|_{\text{op}} \Big\|\frac{1}{n}\sum_{i=1}^n \bar\epsilon_m^{(i)} X^{(i)}\Big\|_2\\
&\leq \|u\|_1\max_m \|v\|_2\|\Delta_{m1}\|_{\text{op}} \Big\|\frac{1}{n}\sum_{i=1}^n \bar\epsilon_m^{(i)} X^{(i)}\Big\|_\infty p^{1/2}\\
&\lesssim R_n p^{3/2}\frac{\log (p\vee M)}{n}.
\end{align*}
Following a similar argument, 
\begin{align*}
\|u^TP_B\|_2 \bigg\{\sum_{m=1}^M |v^T\Delta_{m1}\frac{1}{n}\sum_{i=1}^n \bar\epsilon_m^{(i)} X^{(i)}|^2\bigg\}^{1/2}&\lesssim R_n \sqrt{\frac{K}{M}}\sqrt{M}\max_m \Big|v^T\Delta_{m1}\frac{1}{n}\sum_{i=1}^n \bar\epsilon_m^{(i)} X^{(i)}\Big|\\
&\lesssim R_n\sqrt{K} p^{3/2}\frac{\log (p\vee M)}{n},
\end{align*}
which implies that
\begin{align*}
\Big|\frac{1}{n}\sum_{i=1}^n u^TP_B^\perp \xi_1^{(i)}\Big|&\lesssim  R_n\sqrt{K} p^{3/2}\frac{\log (p\vee M)}{n}.
\end{align*}
For the third term in (\ref{eq_thm_inference_pf_1}), we have
\begin{align*}
\Big|\frac{1}{n}\sum_{i=1}^n u^TP_B^\perp \xi_2^{(i)}\Big|&\leq \Big|\frac{1}{n}\sum_{i=1}^n u^T \xi_2^{(i)}\Big|+\Big|\frac{1}{n}\sum_{i=1}^n u^TP_B \xi_2^{(i)}\Big|\\
&\leq \|u\|_1\max_m \Big|v^T\Delta_{m2}\frac{1}{n}\sum_{i=1}^n \bar\epsilon_m^{(i)} X^{(i)}\Big|+\|u^TP_B\|_2 \bigg\{\sum_{m=1}^M |v^T\Delta_{m2}\frac{1}{n}\sum_{i=1}^n \bar\epsilon_m^{(i)} X^{(i)}|^2\bigg\}^{1/2}\\
&\lesssim R_n\sqrt{K} p^{3/2}\frac{\log (p\vee M)\log (n\vee M)}{n}.
\end{align*}
Combined with (\ref{eq_thm_inference_pf_3}), we obtain the Berry–Esseen bound for $I_1$, 
\begin{align}\label{eq_thm_inference_pf_4}
\sup_t\Bigg|\PP\bigg(\frac{I_1}{s_n/n}\leq t\bigg)-\Phi(t)\Bigg|&\leq C\frac{R_n^3((\log M)^{9/2}\vee K^{3/2})}{\sqrt{n}}+R_n\sqrt{K} p^{3/2}\frac{\log (p\vee M)\log (n\vee M)}{\sqrt{n}}. 
\end{align}
It remains to bound $I_2$ and $I_3$. We note that by Lemma \ref{lem_proj_L2}
\begin{align*}
|I_2|\leq \|u\|_1 \|(\hat P_B-P_B) F^* v\|_\infty&\leq \|u\|_1 \max_j\|(\hat P_B-P_B)e_j\|_2 \|F^* v\|_2\\
&\lesssim R_n \|F^* v\|_2 \eta \sqrt{\frac{p}{M}},
\end{align*}
where $\eta$ is defined in (\ref{eq_eta}) and derived in Remark \ref{rem.sharper}. We then have
\begin{align*}
|I_3|\leq \|u\|_1 \|(\hat P_B-P_B) (\hat F-F^*) v\|_\infty&\leq \|u\|_1 \max_j\|(\hat P_B-P_B)e_j\|_2 \|(\hat F-F^*) v\|_2\\
&\lesssim R_n\eta \sqrt{\frac{p}{M}} \sqrt{M}\Big[\sqrt{\frac{1}{n}}+p^{3/2}\frac{\log (p\vee M)\log (n\vee M)}{n}\Big].
\end{align*}
As a result, the contribution to the  Berry–Esseen bound from $I_2$ and $I_3$ is given by
$$
\sqrt{n} I_2\lesssim R_n \|F^* v\|_2 \bigg(\sqrt{\frac{n}{pM}}+p\sqrt{\frac{\log (p\vee M)}{M}}+\frac{p\sqrt{n}}{M}\bigg),
$$
and 
$$
\sqrt{n} I_3\lesssim R_n \bigg(\frac{1}{\sqrt{p}}+\frac{p}{\sqrt{M}}+\frac{p\log (p\vee M)\log (n\vee M)}{\sqrt{n}}+\frac{p^{5/2}\log (p\vee M)\log (n\vee M)}{\sqrt{Mn}}\bigg).
$$
To show (\ref{thm_inference_2}), we decompose
\begin{align*}
\Bigg|\frac{s_n^2}{n}-\frac{\hat s_n^2}{n}\Bigg|&\leq \Bigg|\frac{1}{n}\sum_{i=1}^n \{(u^TP_B^\perp h^{(i)})^2-\EE(u^TP_B^\perp h^{(i)})^2\}\Bigg|+\Bigg|\frac{1}{n}\sum_{i=1}^n \{(u^TP_B^\perp h^{(i)})^2-(u^T\hat P_B^\perp\hat h^{(i)})^2\}\Bigg|,
\end{align*}
where we denote the above two terms as $J_1$ and $J_2$. Let 
$$
Q_i=(u^TP_B^\perp h^{(i)})^2-\EE(u^TP_B^\perp h^{(i)})^2.
$$ 
By the Markov inequality we have 
$$
J_1\lesssim \Big\{\EE\Big(\frac{1}{n}\sum_{i=1}^n Q_i\Big)^2\Big\}^{1/2}=\bigg\{\frac{\EE Q_i^2}{n}\bigg\}^{1/2}\lesssim \frac{R_n^2 ((\log M)^3\vee K)}{\sqrt{n}}.
$$
In addition, we can further decompose $J_2$ as
\begin{align*}
J_2&\leq \Bigg|\frac{1}{n}\sum_{i=1}^n \{(u^TP_B^\perp h^{(i)})^2-(u^TP_B^\perp\tilde h^{(i)})^2\}\Bigg|+\Bigg|\frac{1}{n}\sum_{i=1}^n \{(u^TP_B^\perp\tilde h^{(i)})^2-(u^T\hat P_B^\perp\hat h^{(i)})^2\}\Bigg|,
\end{align*}
where $\tilde h^{(i)}=(\tilde h^{(i)}_1,...,\tilde h^{(i)}_M)^T$ with $\tilde h^{(i)}_m=\bar \epsilon^{(i)}_m v^T \hat G^{-1}_mX^{(i)}$. We denote the above two terms as $J_{21}$ and $J_{22}$, respectively. We first note that
\begin{align}
\|u^TP_B^\perp\|_1&\leq \|u\|_1+ \|u^TP_B\|_1 \lesssim R_n + \sqrt{M}\|u^TP_B\|_2
\lesssim R_n \sqrt{K},\label{eq_thm_inference_pf_uP_max}
\end{align}
where we use (\ref{eq_thm_inference_pf_uP}). With a slight abuse of notation, let $\hat H_m=\frac{1}{n}\sum_{i=1}^n (1+\hat\zeta^{(i)}_m)\tilde X^{(i)}\tilde X^{(i)T}$, $\tilde H_m=\frac{1}{n}\sum_{i=1}^n (1+\zeta^{(i)}_m)\tilde X^{(i)}\tilde X^{(i)T}$ and $H_m=\EE (1+\zeta^{(i)}_m)\tilde X^{(i)}\tilde X^{(i)T}$, where $\tilde X^{(i)}=\Sigma_X^{-1/2} X^{(i)}$. Recall from the proof of Theorem \ref{thm_hatF} that 
\begin{align}\label{eq_hat_H}
\max_m \|\hat H_m-\tilde H_m\|_{\text{op}}\lesssim p\sqrt{\frac{\log (p\vee M)}{n}} \log (M\vee n)=o_p(1),
\end{align}
and 
\begin{align}\label{eq_tilde_H}
\max_m \|\tilde H_m- H_m\|_{\text{op}}\leq \|\tilde H_m-H_m\|_{F}\lesssim p\sqrt{\frac{\log (p\vee M)}{n}}=o_p(1),
\end{align}
and recall that the smallest and largest eigenvalues of $H_m$ are lower and upper bounded by constants as shown in  (\ref{eq_min_eigen_EH}). As a result, 
\begin{align*}
J_{21}&= \Bigg|\frac{1}{n}\sum_{i=1}^n u^TP_B^\perp (h^{(i)}-\tilde h^{(i)})\cdot u^TP_B^\perp (h^{(i)}+\tilde h^{(i)})\Bigg|\\
&=\Bigg|\frac{1}{n}\sum_{i=1}^n \sum_{m=1}^M\sum_{s=1}^M (u^TP_B^\perp)_m (u^TP_B^\perp)_s \bar \epsilon^{(i)}_m \bar \epsilon^{(i)}_s v^T(G_m^{-1}-\hat G_m^{-1})X^{(i)}X^{(i)T}(G_s^{-1}+\hat G_s^{-1})v\Bigg|\\
&=\Bigg|\frac{1}{n}\sum_{i=1}^n \sum_{m=1}^M\sum_{s=1}^M (u^TP_B^\perp)_m (u^TP_B^\perp)_s \bar \epsilon^{(i)}_m \bar \epsilon^{(i)}_s v^T\Sigma_X^{-1/2}\hat H_m^{-1}(\hat H_m-H_m)H_m^{-1}\tilde X^{(i)}\tilde X^{(i)T}\\
&~~~~~~~~~~~~~~~~~~~~~~~~~~~~~~~~~~~~~~~~~~~~~~~~~~~~~~~~~~~~~~~~\quad\quad\quad H_s^{-1}(\hat H_s+H_s)\hat H_s^{-1}\Sigma_X^{-1/2}v\Bigg|.
\end{align*}
To bound the above term, we first consider
\begin{align*}
&\max_{1\leq m,s\leq M}\Big|\frac{1}{n}\sum_{i=1}^n \bar \epsilon^{(i)}_m \bar \epsilon^{(i)}_s v^T\Sigma_X^{-1/2}\hat H_m^{-1}(\hat H_m-H_m)H_m^{-1}\tilde X^{(i)}\tilde X^{(i)T} H_s^{-1}(\hat H_s+H_s)\hat H_s^{-1}\Sigma_X^{-1/2}v\Big|\\
&\leq\max_{1\leq m,s\leq M} \|v\|_2^2\|\Sigma_X^{-1}\|_{\text{op}}\|\hat H_m^{-1}\|_{\text{op}}^2\|H_m^{-1}\|_{\text{op}}^2 \|\hat H_m-H_m\|_{\text{op}}\|\hat H_m+H_m\|_{\text{op}}\Big\|\frac{1}{n}\sum_{i=1}^n \bar \epsilon^{(i)}_m \bar \epsilon^{(i)}_s\tilde X^{(i)}\tilde X^{(i)T} \Big\|_{\text{op}}\\
&\lesssim p\sqrt{\frac{\log (p\vee M)}{n}} \log (M\vee n) \{\log (M\vee n)\}^2\Big\|\frac{1}{n}\sum_{i=1}^n \tilde X^{(i)}\tilde X^{(i)T} \Big\|_{\text{op}}\\
&\lesssim p\sqrt{\frac{\log (p\vee M)}{n}}\{\log (M\vee n)\}^3,
\end{align*}
where the third line follows from $\max_i\|\bar\epsilon^{(i)}\|_\infty\lesssim \log (M\vee n)$, (\ref{eq_hat_H}) and (\ref{eq_tilde_H}), and the last line is from 
$$
\Big\|\frac{1}{n}\sum_{i=1}^n \tilde X^{(i)}\tilde X^{(i)T} -I_p\Big\|_{\text{op}}\lesssim \|I_p\|_{\text{op}}\sqrt{\frac{p}{n}}=o_p(1).
$$
Therefore, 
$$
J_{21}\lesssim \|u^TP_B^\perp\|_1^2 \cdot p\sqrt{\frac{\log (p\vee M)}{n}}\{\log (M\vee n)\}^3\lesssim R^2_n K p\sqrt{\frac{\log (p\vee M)}{n}}\{\log (M\vee n)\}^3.
$$
To bound $J_{22}$, we can decompose $J_{22}$ as 
$$
J_{22}\leq \Bigg|\frac{1}{n}\sum_{i=1}^n \{(u^TP_B^\perp\tilde h^{(i)})^2-(u^T P_B^\perp\hat h^{(i)})^2\}\Bigg|+\Bigg|\frac{1}{n}\sum_{i=1}^n \{(u^TP_B^\perp\hat h^{(i)})^2-(u^T\hat P_B^\perp\hat h^{(i)})^2\}\Bigg|.
$$
We first note that $\max_i\|\bar\epsilon^{(i)}\|_\infty\lesssim \log (M\vee n)$, and $\max_i\|\hat\epsilon^{(i)}\|_\infty\lesssim \log (M\vee n)$ since we have 
\begin{align}\label{eq_epsilon_dif}
|\hat\epsilon^{(i)}_m-\bar\epsilon^{(i)}_m|\lesssim \log (M\vee n) \|\hat F_m-F^*_m\|_1\lesssim \log (M\vee n) p\sqrt{\frac{\log (p\vee M)}{n}}.
\end{align}
Following a similar argument for $J_{21}$, we have
\begin{align*}
&\Bigg|\frac{1}{n}\sum_{i=1}^n \{(u^TP_B^\perp\tilde h^{(i)})^2-(u^T P_B^\perp\hat h^{(i)})^2\}\Bigg|\\
&=\Bigg|\frac{1}{n}\sum_{i=1}^n \sum_{m=1}^M\sum_{s=1}^M (u^TP_B^\perp)_m (u^TP_B^\perp)_s (\bar \epsilon^{(i)}_m-\hat \epsilon^{(i)}_m) (\bar \epsilon^{(i)}_s+\hat \epsilon^{(i)}_s) v^T\hat G_m^{-1}X^{(i)}X^{(i)T}\hat G_s^{-1}v\Bigg|\\
&\lesssim \|u^TP_B^\perp\|_1^2\max_i\|\hat\epsilon^{(i)}+\bar\epsilon^{(i)}\|_\infty\|\hat\epsilon^{(i)}-\bar\epsilon^{(i)}\|_\infty\\
&\lesssim R_n^2 K p\{\log (M\vee n)\}^2\sqrt{\frac{\log (p\vee M)}{n}},
\end{align*}
where we use (\ref{eq_thm_inference_pf_uP_max}) and (\ref{eq_epsilon_dif}) in the final step. Finally, we further note that by Lemma \ref{lem_proj_L2}, with $\eta$ defined therein, we have
\begin{align}\label{eq_P_dif}
\|u^T(P_B^\perp-\hat P_B^\perp)\|_1\leq \sqrt{M}\|u^T(P_B^\perp-\hat P_B^\perp)\|_2\leq \sqrt{M}\|u\|_1\max_j\|e_j^T(P_B^\perp-\hat P_B^\perp)\|_2\lesssim R_n\eta\sqrt{p}.
\end{align}
Combined with  (\ref{eq_thm_inference_pf_uP_max}), $\|u^T\hat P_B^\perp\|_1\lesssim R_n\sqrt{K}+R_n\eta\sqrt{p}\lesssim R_n\sqrt{K}$, where indeed we have $\eta \sqrt{p}=o(1)$ under the assumption $\frac{p}{\sqrt{M}}=o(1)$ and $p\sqrt{\frac{\log (p\vee M)}{n}}=o(1)$. For the second term in the decomposition of $J_{22}$, we can obtain that
\begin{align*}
&\Bigg|\frac{1}{n}\sum_{i=1}^n \{(u^TP_B^\perp\hat h^{(i)})^2-(u^T\hat P_B^\perp\hat h^{(i)})^2\}\Bigg|\\
&=\Bigg|\frac{1}{n}\sum_{i=1}^n \sum_{m=1}^M\sum_{s=1}^M (u^TP_B^\perp-u^T\hat P_B^\perp)_m (u^TP_B^\perp+u^T\hat P_B^\perp)_s \hat \epsilon^{(i)}_m \hat \epsilon^{(i)}_s v^T\hat G_m^{-1}X^{(i)}X^{(i)T}\hat G_s^{-1}v\Bigg|\\
&\lesssim \|u^T(P_B^\perp-\hat P_B^\perp)\|_1 \|u^T(P_B^\perp+\hat P_B^\perp)\|_1\max_i\|\hat\epsilon^{(i)}\|^2_\infty\\
&\lesssim R_n\eta\sqrt{p} R_n\sqrt{K}\{\log (M\vee n)\}^2\\
&\lesssim R_n^2\sqrt{K}\{\log (M\vee n)\}^2 \bigg(\frac{p}{\sqrt{M}}+\frac{1}{\sqrt{p}}+p\sqrt{\frac{\log (p\vee M)}{n}}\bigg). 
\end{align*}
By the decomposition of $J_{22}$, we finally derive the bound
\begin{align*}
J_{22}&\lesssim R_n^2\{\log (M\vee n)\}^2\sqrt{K} \bigg(\frac{p}{\sqrt{M}}+\frac{1}{\sqrt{p}}+p\sqrt{\frac{K\log (p\vee M)}{n}}\bigg).
\end{align*}
Combining the bound for $J_1$, $J_{21}$ and $J_{22}$, we prove that
\begin{align*}
\Bigg|\frac{s_n^2}{n}-\frac{\hat s_n^2}{n}\Bigg|&\lesssim \frac{R_n^2 ((\log M)^3\vee K)}{\sqrt{n}}+R^2_n K p\sqrt{\frac{\log (p\vee M)}{n}}\{\log (M\vee n)\}^3\\
&~~~~+R_n^2\{\log (M\vee n)\}^2\sqrt{K} \bigg(\frac{p}{\sqrt{M}}+\frac{1}{\sqrt{p}}+p\sqrt{\frac{K\log (p\vee M)}{n}}\bigg)\\
&\lesssim R_n^2\{\log (M\vee n)\}^2\sqrt{K} \bigg(\frac{p}{\sqrt{M}}+\frac{1}{\sqrt{p}}+p\sqrt{\frac{K\log (p\vee M)}{n}}\log (M\vee n)\bigg).
\end{align*}
This completes the proof.
\end{proof}

\subsection{Supplementary Lemmas}

\begin{lemma}\label{thm_first_term}
    Under Assumptions \ref{assum_ident} - \ref{assum_pervasiveness}, we have
    \begin{align*}
        \lambda_{K}\Big[B(\Sigma_Z^{-1}+A^TA)^{-1}B^T\Big] ~\geq~ C\cdot\frac{M}{p}
    \end{align*}
    for some fixed constant $C>0$.
\end{lemma}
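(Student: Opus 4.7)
The plan is to reduce the $M\times M$ eigenvalue problem to a $K\times K$ one using the low-rank structure of $B$, then apply Rayleigh quotient arguments together with the pervasiveness conditions in Assumption \ref{assum_pervasiveness}.

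First I would observe that $B(\Sigma_Z^{-1}+A^TA)^{-1}B^T$ has rank $K$, so its $K$-th largest eigenvalue is its smallest nonzero eigenvalue. Writing $M_0:=(\Sigma_Z^{-1}+A^TA)^{-1}$, the nonzero eigenvalues of $BM_0B^T$ coincide with those of the $K\times K$ matrix $M_0^{1/2}B^TBM_0^{1/2}$. So it suffices to show
\begin{equation*}
\lambda_{\min}\bigl(M_0^{1/2}B^TBM_0^{1/2}\bigr)\gtrsim \frac{M}{p}.
\end{equation*}

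Next I would lower bound this by a Rayleigh quotient argument: for any unit $x\in\RR^K$,
\begin{equation*}
x^T M_0^{1/2}B^TBM_0^{1/2}x = \bigl(M_0^{1/2}x\bigr)^T B^TB \bigl(M_0^{1/2}x\bigr) \geq \lambda_{\min}(B^TB)\,\|M_0^{1/2}x\|_2^2 \geq \lambda_{\min}(B^TB)\,\lambda_{\min}(M_0).
\end{equation*}
Now I use the assumptions: Assumption \ref{assum_pervasiveness}(b) gives $\lambda_{\min}(B^TB)\geq \kappa_{B,1}M$, while
\begin{equation*}
\lambda_{\min}(M_0)=\frac{1}{\lambda_{\max}(\Sigma_Z^{-1}+A^TA)}\geq \frac{1}{\lambda_{\max}(\Sigma_Z^{-1})+\lambda_{\max}(A^TA)}\geq \frac{1}{1/\kappa_{Z,1}+\kappa_{A,2}\,p},
\end{equation*}
using Assumptions \ref{assum_pervasiveness}(a) and \ref{assum_pervasiveness}(c). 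For $p$ large enough this denominator is on the order of $p$, so combining the two bounds yields $\lambda_{\min}(M_0^{1/2}B^TBM_0^{1/2})\gtrsim M/p$, which proves the claim with $C=\kappa_{B,1}\kappa_{Z,1}/(1+\kappa_{Z,1}\kappa_{A,2})$ (or a similar explicit constant).

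There is no real technical obstacle here; this is essentially a bookkeeping exercise combining Weyl/Rayleigh estimates with the pervasiveness assumption. The only thing to be careful about is not to confuse $\lambda_{\min}$ with $\lambda_K$, which are equal precisely because $B$ has full column rank $K$ by assumption, so $BM_0B^T$ has exactly $K$ nonzero eigenvalues.
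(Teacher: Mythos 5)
Your proposal is correct and follows essentially the same route as the paper: both reduce $\lambda_K(BM_0B^T)$ to $\lambda_{\min}(M_0^{1/2}B^TBM_0^{1/2})$ via the nonzero-eigenvalue identity for $XY$ versus $YX$, then bound this below by $\lambda_{\min}(B^TB)\,\lambda_{\min}(M_0)=\lambda_{\min}(B^TB)/\lambda_{\max}(\Sigma_Z^{-1}+A^TA)$ and invoke Assumption \ref{assum_pervasiveness}. The only cosmetic difference is that you spell out the intermediate Rayleigh-quotient step that the paper states without comment.
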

\begin{proof}
    \begin{align*}
        \lambda_{K}\Big[B(\Sigma_Z^{-1}+A^TA)^{-1}B^T\Big] ~&=~\lambda_{K}\Big[\big\{B(\Sigma_Z^{-1}+A^TA)^{-1/2}\big\}\big\{B(\Sigma_Z^{-1}+A^TA)^{-1/2}\big\}^T\Big],
    \end{align*}
    where $(\Sigma_Z^{-1}+A^TA)^{-1/2}$ is defined since $\Sigma_Z^{-1}+A^TA$ is symmetric positive definite. Since $\Sigma_Z^{-1}+A^TA$ has rank $K$, the above is equal to the smallest eigenvalue of its transpose as $XY$ shares the same non-zero eigenvalues as $YX$.
    \begin{align*}
    \lambda_{K}\Big[B(\Sigma_Z^{-1}+A^TA)^{-1}B^T\Big] ~&=~  \lambda_{\min}\Big[\big\{B(\Sigma_Z^{-1}+A^TA)^{-1/2}\big\}^T\big\{B(\Sigma_Z^{-1}+A^TA)^{-1/2}\big\}\Big] \\
    ~&=~ \lambda_{\min}\Big[(\Sigma_Z^{-1}+A^TA)^{-1/2}B^TB(\Sigma_Z^{-1}+A^TA)^{-1/2}\Big] \\
    ~&\geq~\lambda_{\min}\Big[(\Sigma_Z^{-1}+A^TA)^{-1}\Big]\cdot\lambda_{\min}(B^TB)\\
    ~&=~ \frac{\lambda_{\min}(B^TB)}{\lambda_{\max}\Big[(\Sigma_Z^{-1}+A^TA)\Big]} \\
    ~&\geq~ C\cdot \frac{M}{p},
    \end{align*}
where the last line follows from Assumption \ref{assum_pervasiveness}.
\end{proof}

\begin{lemma}\label{thm_sec_term}
    Under Assumption \ref{assum_bound} and \ref{assum_glm}, in the regime of $\log M < n$, we have the following bound:
    \begin{align*}
        \bigg|\bigg|\frac{1}{n}\bm\epsilon\bm\epsilon^T - \EE\Big[\frac{1}{n}\bm\epsilon\bm\epsilon^T\Big]\bigg|\bigg|_F ~~&\leq~~ C\cdot\sigma^4_{\epsilon, \max}\cdot M \cdot\sqrt{\frac{t}{n}}~~~\text{w.p. }\geq 1-\frac{2M^2}{e^t}
    \end{align*}
    for some fixed constant $C>0$. Thus, we have 
$$
\bigg|\bigg|\frac{1}{n}\bm\epsilon\bm\epsilon^T - \EE\Big[\frac{1}{n}\bm\epsilon\bm\epsilon^T\Big]\bigg|\bigg|_F=O_p\bigg(M\sqrt{\frac{\log M}{n}}\bigg).
$$
\end{lemma}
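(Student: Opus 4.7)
The plan is to reduce the matrix concentration to an entrywise maximum bound and then combine via the elementary inequality $\|A\|_F \leq M \cdot \|A\|_{\max}$, valid since the matrix is $M \times M$.

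First I would establish that each residual $\epsilon_m^{(i)}$ is sub-exponential with $\|\epsilon_m^{(i)}\|_{\psi_1} \lesssim \sigma^2_{\epsilon,\max}$: Assumption \ref{assum_bound} gives the sub-exponential tail of $Y_m - b'(\Theta_m X + B_m Z)$, and Assumption \ref{assum_glm} gives the lower bound $b''(\cdot) \geq C_1 > 0$, so dividing by $b''$ does not inflate the sub-exponential norm beyond a constant factor. Consequently, the product $\epsilon_m^{(i)} \epsilon_{m'}^{(i)}$ is $\psi_{1/2}$-bounded (i.e., sub-Weibull$(1/2)$) with norm $\lesssim \sigma^4_{\epsilon,\max}$, via the Orlicz-type inequality $\|XY\|_{\psi_{1/2}} \lesssim \|X\|_{\psi_1}\|Y\|_{\psi_1}$.

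Next, for each fixed pair $(m,m')$, the centered i.i.d.\ random variables $\epsilon_m^{(i)}\epsilon_{m'}^{(i)} - \EE[\epsilon_m \epsilon_{m'}]$ are mean zero and $\psi_{1/2}$-bounded. I would then invoke Lemma \ref{GenBern} (the generalized Bernstein inequality for $\psi_{1/2}$ random variables) to obtain, for each entry and each $t \geq 0$ in the Gaussian tail regime (which requires $t \lesssim n$),
$$\PP\bigg(\bigg|\frac{1}{n}\sum_{i=1}^n \epsilon_m^{(i)}\epsilon_{m'}^{(i)} - \EE[\epsilon_m \epsilon_{m'}]\bigg| > C\sigma^4_{\epsilon,\max}\sqrt{t/n}\bigg) \leq 2e^{-t}.$$
A union bound over the $M^2$ entries then yields, with probability at least $1 - 2M^2 e^{-t}$,
$$\max_{m,m'}\bigg|\frac{1}{n}\sum_{i=1}^n \epsilon_m^{(i)}\epsilon_{m'}^{(i)} - \EE[\epsilon_m \epsilon_{m'}]\bigg| \leq C\sigma^4_{\epsilon,\max}\sqrt{t/n},$$
and summing squares over the $M^2$ entries (equivalently, the bound $\|\cdot\|_F \leq M \cdot \|\cdot\|_{\max}$) delivers the stated Frobenius inequality; the $O_p(M\sqrt{\log M/n})$ conclusion follows by taking $t \asymp \log M$.

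The main obstacle I anticipate is the careful bookkeeping around the sub-Weibull$(1/2)$ Bernstein bound. Because the product of two sub-exponentials has heavier-than-exponential tails, Lemma \ref{GenBern} produces a mixed bound with a Gaussian regime (rate $\sqrt{t/n}$) and a heavier regime (rate scaling polynomially in $t/n$). The condition $\log M < n$ in the statement is precisely what keeps the union-inflated $t$ (of order $\log M$ after controlling all $M^2$ entries) inside the Gaussian regime, so that the clean $\sqrt{t/n}$ rate survives the union bound; verifying this transition point is the only subtle step.
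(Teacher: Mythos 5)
Your proposal follows essentially the same route as the paper's proof: reduce to the $M^2$ entries, observe that each product $\epsilon_m^{(i)}\epsilon_{m'}^{(i)}$ is a $\tfrac12$-sub-exponential variable with norm bounded by $\sigma^4_{\epsilon,\max}$, apply Lemma \ref{GenBern}, union bound, and pass to the Frobenius norm via the entrywise maximum. One small point: for $\alpha=\tfrac12$ the Gaussian regime of Lemma \ref{GenBern} actually requires $T\lesssim n^{1/3}$ (i.e.\ $(\log M)^3=O(n)$ after the union bound), not merely $T\lesssim n$, though the paper's own statement glosses over this in the same way.
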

\begin{proof}
    The $(m,m')$-th element of $\frac{1}{n}\bm\epsilon\bm\epsilon^T - \EE\big[\frac{1}{n}\bm\epsilon\bm\epsilon^T\big]$ is $\sum_{i=1}^n\big\{ \epsilon_{m}^{(i)}\epsilon_{m'}^{(i)}-E[\epsilon_{m}^{(i)}\epsilon_{m'}^{(i)}]\big\}/n$. When $m=m'$, $\epsilon_{m}^{(i)2}-\EE[\epsilon_{m}^{(i)2}]$ is a centered $\frac{1}{2}$-sub-exponential random variable (defined in Definition \ref{alphasub}) with a norm bounded by $\sigma^4_{\epsilon, \max}$. By the concentration inequality in Lemma \ref{GenBern}, we have
    \begin{align*}
        \bigg|\frac{1}{n}\sum_{i=1}^n \Big\{\epsilon_{m}^{(i)}\epsilon_{m'}^{(i)}-E\big[\epsilon_{m}^{(i)}\epsilon_{m'}^{(i)}\big]\Big\}\bigg| ~&\leq~ \sigma^4_{\epsilon, \max} \cdot \max \Bigg\{ \sqrt{\frac{T}{n}},~\frac{T^2}{n}\Bigg\}~~~\text{w.p. }\geq 1-\frac{2}{e^{c\cdot T}} \\&\leq~ \sigma^4_{\epsilon, \max} \cdot \sqrt{\frac{T}{n}} ~~~\text{w.p. }\geq 1-\frac{2}{e^{c\cdot T}}.
    \end{align*}
    The same logic holds when $m\neq m'$, except now $\epsilon_m^{(i)}\epsilon_{m'}^{(i)}$ is a centered random variable. The upper bound for the average is exactly the same. Using the union bound over all $M^2$ elements in the matrix, we get the conclusion.
\end{proof}
\begin{lemma}\label{thm_third_term}
Under Assumptions \ref{assum_ident} - \ref{assum_pervasiveness} and the regime of $K<n$, we have the following bound:
        \begin{align*}
        \EE \bigg|\bigg|\frac{1}{n}B(\ZZ-\tilde{\ZZ})(\ZZ-\tilde{\ZZ})^TB^T - B~\EE\Big[\frac{1}{n}(\ZZ-\tilde{\ZZ})(\ZZ-\tilde{\ZZ})^T\Big]B^T\bigg|\bigg|_F ~~&\leq~~ C\cdot\frac{M}{p}\cdot\sqrt{\frac{K}{n}}
    \end{align*}
    for some fixed constant $C>0$. Thus, we have
$$
\bigg|\bigg|\frac{1}{n}B(\ZZ-\tilde{\ZZ})(\ZZ-\tilde{\ZZ})^TB^T - B~\EE\Big[\frac{1}{n}(\ZZ-\tilde{\ZZ})(\ZZ-\tilde{\ZZ})^T\Big]B^T\bigg|\bigg|_F=O_p\bigg(\frac{M}{p}\sqrt{\frac{K}{n}}\bigg). 
$$
\end{lemma}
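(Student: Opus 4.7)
The plan is to bound the expectation of the Frobenius norm via Jensen's inequality and then compute the resulting $L^2$ error entry by entry. Introduce the shorthand $U^{(i)}:=B(Z^{(i)}-\tilde Z^{(i)})\in\RR^M$, so that the matrix whose norm we need to bound is exactly $\hat S-S$ with $\hat S:=\frac{1}{n}\sum_{i=1}^n U^{(i)}U^{(i)T}$ and $S:=\EE\,U^{(1)}U^{(1)T}$. Jensen immediately gives $\EE\|\hat S-S\|_F\leq\big(\EE\|\hat S-S\|_F^2\big)^{1/2}$, reducing the task to a second-moment calculation and avoiding any high-probability machinery.

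Writing out the Frobenius norm entry by entry and using that the $U^{(i)}$'s are i.i.d.,
$$\EE\|\hat S-S\|_F^2\;=\;\sum_{m,m'=1}^M\Var\Big(\tfrac{1}{n}\textstyle\sum_{i=1}^n U^{(i)}_m U^{(i)}_{m'}\Big)\;=\;\frac{1}{n}\sum_{m,m'=1}^M\Var(U_m U_{m'}),$$
so the whole estimate reduces to a uniform bound on the single-sample variance $\Var(U_m U_{m'})$. Here Lemma \ref{lem_tildeZ} is the essential input: it gives the uniform sub-Gaussian bound $\|U_m\|_{\psi_2}=\|B_m(Z-\tilde Z)\|_{\psi_2}\leq c/\sqrt{p}$ for all $1\leq m\leq M$. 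From the standard $L^q$-moment characterization of sub-Gaussian variables, $\EE U_m^4\leq C/p^2$, and Cauchy--Schwarz then yields $\Var(U_m U_{m'})\leq \EE(U_m U_{m'})^2\leq(\EE U_m^4\,\EE U_{m'}^4)^{1/2}\leq C/p^2$ uniformly in $(m,m')$.

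Assembling the pieces, $\EE\|\hat S-S\|_F^2\leq C\,M^2/(np^2)$, and Jensen gives $\EE\|\hat S-S\|_F\leq C\cdot M/(p\sqrt{n})$, which is dominated by the claimed $C\cdot(M/p)\sqrt{K/n}$ since $K\geq 1$. There is no serious obstacle: the only nontrivial ingredient is the $p^{-1/2}$ scaling of $\|B_m(Z-\tilde Z)\|_{\psi_2}$, which is driven by the pervasiveness of $A^TA$ (Assumption \ref{assum_pervasiveness}(a)) through the block-matrix inversion carried out in Lemma \ref{lem_tildeZ}, and this has already been established. The remaining steps are routine variance/moment accounting combined with Cauchy--Schwarz. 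I would note in passing that the direct argument actually produces the slightly sharper rate $M/(p\sqrt n)$, so the $\sqrt K$ factor in the statement is essentially cosmetic and presumably retained to reflect the rank-$K$ structure of $\hat S-S = B(\hat\Sigma_{Z-\tilde Z}-\Sigma_{Z-\tilde Z})B^T$.
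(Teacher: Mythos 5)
Your proof is correct, but it takes a genuinely different route from the paper's. The paper applies an expected-operator-norm concentration bound for sample covariance matrices of sub-Gaussian vectors (the Koltchinskii--Lounici-type inequality it cites) to the $K$-dimensional vector $V=Z-\tilde Z$, obtaining $\EE\|\hat\Sigma_V-\Sigma_V\|_{\mathrm{op}}\lesssim \|\Sigma_V\|_{\mathrm{op}}(\sqrt{K/n}+K/n)\lesssim p^{-1}\sqrt{K/n}$ (the bound $\|\Sigma_V\|_{\mathrm{op}}\leq C/p$ being derived as in Lemma \ref{thm_first_term}), and then converts to the Frobenius norm of $B(\hat\Sigma_V-\Sigma_V)B^T$ by pulling out $\|B\|_F^2$; the $\sqrt{K/n}$ in the statement is exactly the effective-rank term of that operator-norm bound. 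You instead work directly with the $M$-dimensional vector $U=B(Z-\tilde Z)$, pass to the second moment of the Frobenius norm by Jensen, and control each of the $M^2$ entrywise variances by $C/p^2$ using only the $\psi_2$-bound of Lemma \ref{lem_tildeZ} and Cauchy--Schwarz. Your argument is more elementary --- it needs no external covariance-concentration theorem and no verification that $Z-\tilde Z$ is a sub-Gaussian vector in the sense that theorem requires --- and, as you observe, it produces the slightly sharper rate $M/(p\sqrt{n})$: the paper's route pays for the latent dimension $K$ through the operator-norm concentration (and, if one tracks $\|B\|_F^2\asymp KM$ under Assumption \ref{assum_pervasiveness}(b), arguably pays more than the stated bound reflects when $K$ grows), whereas your entrywise accounting is insensitive to $K$ altogether. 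Both arguments ultimately rest on the same input, namely the $p^{-1/2}$ scaling coming from the pervasiveness of $A^TA$ via the block-matrix inversion in Lemma \ref{lem_tildeZ}. The final $O_p$ claim follows from your expectation bound by Markov's inequality, just as in the paper.
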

\begin{proof}
    Note that for a $K \times 1$ centered sub-Gaussian random vector $V$, if we define $\hat{\Sigma}_V:=\sum_{i=1}^n V^{(i)}V^{(i)T}/n$ and $\Sigma_V:=\EE[VV^T]$, then we have the following bound from \cite{koltchinskii2017concentration}:
    \begin{align*}
        \EE ||\hat{\Sigma}_V-\Sigma_V||_{\text{op}}~&\leq~ C \cdot||\Sigma_V||_{\text{op}}\cdot \bigg(\sqrt{\frac{K}{n}} + \frac{K}{n}\bigg).
    \end{align*}
If we consider $V=Z-\tilde{Z}$, we have 
\begin{align*}
    \EE \bigg|\bigg|\frac{1}{n}(\ZZ-\tilde{\ZZ})(\ZZ-\tilde{\ZZ})^T &- \EE\Big[(Z-\tilde{Z})(Z-\tilde{Z})^T\Big]\bigg|\bigg|_{\text{op}} \\ ~&\leq~ 
    C\cdot\bigg|\bigg|\EE\Big[(Z-\tilde{Z})(Z-\tilde{Z})^T\Big]\bigg|\bigg|_{\text{op}}\cdot\bigg(\sqrt{\frac{K}{n}} + \frac{K}{n}\bigg)\\
    ~&\leq~ \frac{C'}{p}\cdot\sqrt{\frac{K}{n}}.
\end{align*}
For the last step, we follow the proof of Lemma \ref{thm_first_term} to obtain
$$
\bigg|\bigg|\EE\Big[(Z-\tilde{Z})(Z-\tilde{Z})^T\Big]\bigg|\bigg|_{\text{op}}\leq C/p.
$$
So, we have the conclusion:
\begin{align*}
  \EE  \bigg|\bigg|\frac{1}{n}B(\ZZ-\tilde{\ZZ})(\ZZ-\tilde{\ZZ})^TB^T &- B~\EE\Big[\frac{1}{n}(\ZZ-\tilde{\ZZ})(\ZZ-\tilde{\ZZ})^T\Big]B^T\bigg|\bigg|_F \\
    ~&\leq~ ||B||_F^2\cdot \bigg|\bigg|\frac{1}{n}(\ZZ-\tilde{\ZZ})(\ZZ-\tilde{\ZZ})^T - \EE\Big[(Z-\tilde{Z})(Z-\tilde{Z})^T\Big]\bigg|\bigg|_{\text{op}} \\
    ~&\leq~ C\cdot\frac{M}{p}\cdot\sqrt{\frac{K}{n}}.
\end{align*}
This completes the proof. 
\end{proof}
\begin{remark}\label{rem.sharper}
    While Theorem \ref{thm_projection} controls the Frobenius norm of $\hat P_{B}-P_{B}$, it is not sharp enough to bound $ \max_j\|(\hat P_B-P_B)e_j\|_2$ required in the proof of Theorem \ref{thm_inference}. To this end, we provide the necessary bound in Lemma \ref{lem_proj_L2} and provide the derivation for the relevant rate, $\eta$, in this remark. We first apply spectral decomposition 
$$
\frac{1}{M}\hat\Sigma=\frac{1}{nM}\sum_{i=1}^n (\hat\epsilon^{(i)})^{\otimes 2}=\hat V\hat D^2\hat V^T,
$$
where $\hat V\in\RR^{M\times M}$ and $\hat D^2\in\RR^{M\times M}$ are the corresponding eigenvectors and eigenvalues (in non-increasing order). We define the estimator 
$$
\hat B=\hat V_K \hat D_K \sqrt{M},
$$
where $\hat D_K\in\RR^{K\times K}$ is the square root of the top $K$ eigenvalues and $\hat V_K\in\RR^{M\times K}$ is the matrix of corresponding eigenvectors. Then $\frac{1}{M}\hat\Sigma \hat V_K=\hat V_K\hat D_K^2$ and $\hat V_K=\frac{1}{M}\hat\Sigma \hat V_K \hat D_K^{-2}$, which implies 
$$
\hat B=\hat V_K \hat D_K \sqrt{M}=\frac{1}{M}\hat\Sigma \hat V_K \hat D_K^{-1}\sqrt{M}.
$$
Note that
$$
\hat\Sigma=\EE_n \{B(Z-\tilde Z)\}^{\otimes 2}+ \EE_n \epsilon^{\otimes 2}+ \EE_n[\hat \epsilon^{\otimes 2}-\epsilon^{\otimes 2}-\{B(Z-\tilde Z)\}^{\otimes 2}].
$$
Define $H=\frac{1}{M}\EE_n (Z-\tilde Z)^{\otimes 2}B^T\hat V_K \hat D_K^{-1}\sqrt{M}$. We have
$$
\hat B=BH+\frac{1}{\sqrt{M}}\Big\{ \EE_n \epsilon^{\otimes 2}+\EE_n[\hat \epsilon^{\otimes 2}-\epsilon^{\otimes 2}-\{B(Z-\tilde Z)\}^{\otimes 2}]\Big\}\hat V_K \hat D_K^{-1}. 
$$
Recall that $\Sigma=B(\Sigma_Z^{-1}+A^TA)^{-1}B^T$. By the proof of Lemma \ref{thm_first_term}, we know $C_1M/p\leq\lambda_k(\Sigma)\leq C_2M/p$. Since $\lambda_k(\hat\Sigma/M)=\lambda_k^2(\hat D)$, by Weyl's inequality and the proof of Theorem \ref{thm_projection}, under the assumption $K=o(n)$, $p=o(\sqrt{M})$, we have
$$
\Big|\lambda^2_k(\hat D)-\lambda_k(\frac{1}{M}\Sigma)\Big|\leq \frac{1}{M} \|\hat\Sigma-\Sigma\|_F=o_p(p^{-1})
$$
which implies $\lambda_k(\hat D) \asymp p^{-1/2}$. We further have
\begin{align}
\|e_j^T(\hat B-BH)\|_2&=\Big\|\frac{1}{\sqrt{M}}e_j^T\Big\{ \EE_n \epsilon^{\otimes 2}+\EE_n[\hat \epsilon^{\otimes 2}-\epsilon^{\otimes 2}-\{B(Z-\tilde Z)\}^{\otimes 2}]\Big\}\hat V_K \hat D_K^{-1}\Big\|_2\nonumber\\
&\lesssim \sqrt{\frac{p}{M}} \bigg\|\frac{1}{n}\sum_{i=1}^n \epsilon_j^{(i)}\epsilon^{(i)}+e_j^T\EE_n[\hat \epsilon^{\otimes 2}-\epsilon^{\otimes 2}-\{B(Z-\tilde Z)\}^{\otimes 2}]\bigg\|_2\nonumber\\
&\lesssim \sqrt{\frac{p}{M}}\bigg\{1+\sqrt{\frac{M\log M}{n}}+\sqrt{M}\Big(\frac{1}{p^{3/2}}+\sqrt{\frac{\log (p\vee M)}{n}}\Big[1+\Big(\frac{\log (n\vee M)}{p}\Big)^{3/2}\Big]\Big)\bigg\}\nonumber\\
&\lesssim \sqrt{\frac{p}{M}}+\frac{1}{p}+ \sqrt{\frac{p\log (M\vee p)}{n}}+\frac{\log^{3/2}(n\vee M)}{p}\sqrt{\frac{\log (M\vee p)}{n}}\nonumber\\
&\lesssim \sqrt{\frac{p}{M}}+\frac{1}{p}+ \sqrt{\frac{p\log (M\vee p)}{n}},\label{eq_rate_b_j}
\end{align}
where the third step follows from the proof of Theorem \ref{thm_projection} by just taking the sum of $M$ elements in one column of the $M \times M$ matrix in consideration rather than all $M^2$ elements. The last step is due to $\{\log (M\vee p)\}^5=O(n)$. Indeed, the above bound holds uniformly over $1\leq j\leq M$. For simplicity, we denote 
\begin{align}\label{eq_eta}
\eta=\sqrt{\frac{p}{M}}+\frac{1}{p}+ \sqrt{\frac{p\log (M\vee p)}{n}}.
\end{align}
\end{remark}

\begin{lemma}\label{lem_proj_L2}
Under the same assumptions as in Theorem \ref{thm_projection} and $K=o(n)$, $p=o(\sqrt{M})$, we have
\begin{align*}
\max_{1\leq j\leq M}\|(\hat P_B-P_B)e_j\|_2\lesssim \eta \sqrt{\frac{p}{M}}.
\end{align*}
\end{lemma}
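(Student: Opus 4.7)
The plan is to leverage the decomposition $\hat B = BH + E$ from Remark \ref{rem.sharper}, where $\|e_j^T E\|_2 \lesssim \eta$ uniformly in $j \in \{1,\dots,M\}$, to obtain a row-wise ($\ell_{2,\infty}$-type) perturbation bound for the sample eigenvector matrix $\hat V_K$ that improves on the bulk Frobenius bound in Theorem \ref{thm_projection}. Since $\hat V_K = \hat B \hat D_K^{-1}/\sqrt{M}$ by construction and $B = V\Lambda U^T$ is the SVD of $B$, substituting gives
\[
\hat V_K = V A + E', \qquad A := \Lambda U^T H \hat D_K^{-1}/\sqrt{M}, \qquad E' := E \hat D_K^{-1}/\sqrt{M}.
\]
Using $\lambda_K(\hat D) \asymp p^{-1/2}$ from Remark \ref{rem.sharper}, this delivers the key row-wise bound $\max_j \|e_j^T E'\|_2 \lesssim \eta\sqrt{p/M}$ and the Frobenius bound $\|E'\|_F \lesssim \eta\sqrt{p}$.

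Next I would show that, after a Procrustes rotation, $A$ is close to the identity. Since $V$ is an isometry on $\RR^K$, $\|A\|_{\text{op}} = \|VA\|_{\text{op}} \le \|\hat V_K\|_{\text{op}} + \|E'\|_{\text{op}} \le 1 + \eta\sqrt{p} = O(1)$, using that $\eta\sqrt{p} = o(1)$ under $p = o(\sqrt{M})$ and the conditions of Theorem \ref{thm_hatF}. Expanding $I_K = \hat V_K^T \hat V_K = A^T A + A^T V^T E' + (E')^T V A + (E')^T E'$ and using $\|V^T\|_{\text{op}}\le 1$ then yields
\[
\|A^T A - I_K\|_{\text{op}} \le 2\|A\|_{\text{op}}\|E'\|_{\text{op}} + \|E'\|_{\text{op}}^2 \lesssim \eta\sqrt{p}.
\]
Taking the SVD $A = U_A \Sigma_A V_A^T$ and choosing the Procrustes rotation $O := V_A U_A^T$ so that $AO = U_A \Sigma_A U_A^T$ is symmetric, the elementary inequality $|\sigma-1| \le |\sigma^2 - 1|$ for $\sigma \ge 0$ gives
\[
\|AO - I_K\|_{\text{op}} = \max_k |\sigma_k(A) - 1| \le \max_k |\sigma_k(A)^2 - 1| = \|A^T A - I_K\|_{\text{op}} \lesssim \eta\sqrt{p}.
\]

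Setting $U := \hat V_K O$, so that $UU^T = \hat V_K \hat V_K^T = \hat P_B$, the identity $U - V = V(AO - I_K) + E' O$ gives
\[
\max_j \|e_j^T(U - V)\|_2 \le \max_j \|e_j^T V\|_2 \|AO - I_K\|_{\text{op}} + \max_j \|e_j^T E'\|_2 \lesssim \frac{\eta\sqrt{p}}{\sqrt{M}} + \eta\sqrt{\frac{p}{M}} \lesssim \eta\sqrt{\frac{p}{M}},
\]
where $\|e_j^T V\|_2 = \sqrt{(P_B)_{jj}} \le \|B_j\|_2/\sqrt{\lambda_{\min}(B^T B)} \lesssim 1/\sqrt{M}$ by Assumption \ref{assum_pervasiveness}. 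Likewise $\|U - V\|_{\text{op}} \le \|AO - I_K\|_{\text{op}} + \|E'\|_{\text{op}} \lesssim \eta\sqrt{p}$. Plugging these into the standard identity $(UU^T - VV^T)e_j = U(U - V)^T e_j + (U - V) V^T e_j$ together with $\|U\|_{\text{op}} = 1$ and $\|V^T e_j\|_2 \lesssim 1/\sqrt{M}$ then yields $\|(\hat P_B - P_B)e_j\|_2 \lesssim \eta\sqrt{p/M} + \eta\sqrt{p}\cdot(1/\sqrt{M}) \lesssim \eta\sqrt{p/M}$, as claimed.

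The main obstacle is establishing $\|AO - I_K\|_{\text{op}} \lesssim \eta\sqrt{p}$ rather than the weaker $\lesssim \|\hat V O - V\|_F$ from a direct Davis--Kahan argument --- the latter would import the $\sqrt{K/n}$ term from Theorem \ref{thm_projection} and, after division by $\sqrt{M}$, produce a $\sqrt{K/(nM)}$ contribution that is not absorbed into the target $\eta\sqrt{p/M}$. The improvement crucially depends on the fact that the bulk latent-factor term $B(Z-\tilde Z)(Z-\tilde Z)^TB^T$ has already been absorbed into $BH$ in the decomposition of Remark \ref{rem.sharper}, so that $\|E\|_F$ (and hence $\|E'\|_F$) escapes the $\sqrt{K/n}$ fluctuation; the exact orthonormality $\hat V_K^T \hat V_K = I_K$ then converts the tighter bound on $E'$ directly into the tighter bound on $\|A^T A - I_K\|_{\text{op}}$.
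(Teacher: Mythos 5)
Your proof is correct, and it takes a genuinely different route from the paper's. The paper works at the level of the (unnormalized) factor loadings: it writes $P_B=\tilde B(\tilde B^T\tilde B)^{-1}\tilde B^T$ with $\tilde B=BH$ and $\hat P_B=\hat B(\hat B^T\hat B)^{-1}\hat B^T$, expands the difference via the resolvent identity $A^{-1}-B^{-1}=-A^{-1}(A-B)B^{-1}$ into three terms, and bounds each using $\|\hat B-\tilde B\|_F\lesssim\eta\sqrt M$, $\|B^Te_j\|_2\le C_4$, $\|(B^TB)^{-1}\|_{\mathrm{op}}\lesssim M^{-1}$, a separately established bound $\|H^{-1}\|_{\mathrm{op}}\lesssim\sqrt p$ (via $\lambda_{\min}(HH^T)\gtrsim 1/p$ from Lemma \ref{thm_third_term} and Weyl), and a Weyl-inequality lower bound $\lambda_{\min}(\hat B^T\hat B)\gtrsim M/p$. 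You instead pass to the normalized eigenvector matrices, write $\hat V_K=VA+E'$, and exploit the exact orthonormality $\hat V_K^T\hat V_K=I_K$ to force $A^TA$ close to $I_K$, after which a Procrustes rotation aligns $\hat V_K$ with $V$ row by row. Both arguments consume exactly the same two inputs from Remark \ref{rem.sharper} --- the uniform row-wise bound $\max_j\|e_j^T(\hat B-BH)\|_2\lesssim\eta$ and $\lambda_K(\hat D)\asymp p^{-1/2}$ --- and both correctly identify that the naive route through the Frobenius bound of Theorem \ref{thm_projection} is too lossy. What your version buys is that the orthonormality identity replaces the explicit control of $\|H^{-1}\|_{\mathrm{op}}$ and of $\lambda_{\min}(\hat B^T\hat B)$ with a single algebraic step, and the symmetric Procrustes factor $AO$ makes the "closeness to identity" quantitative via the elementary inequality $|\sigma-1|\le|\sigma^2-1|$; the paper's version is more mechanical but generalizes more directly to settings where the estimated basis is not exactly orthonormal.
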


\begin{proof}
Let $\tilde B=BH$. Since we can also write $P_B=\tilde B(\tilde B^T\tilde B)^{-1}\tilde B^T$, it is apparent from the identity $A^{-1} - B^{-1} = -A^{-1}(A-B)B^{-1}$ that
\begin{align*}
\|(\hat P_B-P_B)e_j\|_2&\lesssim \|(\tilde B-\hat B)(\tilde B^T\tilde B)^{-1}\tilde B^T e_j\|_2\\
&~~~+\|\hat B\{(\hat B^T\hat B)^{-1}-(\tilde B^T\tilde B)^{-1}\}\tilde B^T e_j\|_2+\|\hat B(\hat B^T\hat B)^{-1}(\tilde B-\hat B)^T e_j\|_2.
\end{align*}
We denote the above three terms on the right hand side as $I_1, I_2$, and $I_3$, respectively. For $I_1$, 
$$
I_1\leq \|\tilde B-\hat B\|_F \|H^{-1}\|_{\text{op}}\|(B^TB)^{-1}\|_{\text{op}}\|B^T e_j\|_2.
$$
We note that $\|B^T e_j\|_2\leq C_4$ by Assumption \ref{assum_pervasiveness}, $\|(B^T B)^{-1}\|_{op}\lesssim M^{-1}$ and $\|\tilde B-\hat B\|_F\lesssim \eta \sqrt{M}$ by (\ref{eq_rate_b_j}).  In addition, by Weyl's inequality and Lemma \ref{thm_third_term},
\begin{align*}
\lambda_{\min}(HH^T)&\geq C\frac{p}{M}\lambda_{\min} \Big(\EE_n (Z-\tilde Z)^{\otimes 2}B^T\Big)^{\otimes 2}\\
&\geq C'\frac{p}{M}\lambda_{\min} \Big((\Sigma_Z^{-1}+A^TA)^{-1}B^TB(\Sigma_Z^{-1}+A^TA)^{-1}\Big)\\
&\geq C''/p,
\end{align*}
which implies $\|H^{-1}\|_{\text{op}}\lesssim p^{1/2}$. Combining all these results, we obtain that 
$$
I_1\lesssim \eta \sqrt{M}p^{1/2} M^{-1}\lesssim \eta \sqrt{\frac{p}{M}}. 
$$
For $I_2$, note that the smallest eigenvalue of $\hat B^T\hat B$ is lower bounded by $M/p$ since for $\tilde{B}$ we have $\lambda_{\min}(\tilde{B}^T\tilde{B}) = \lambda_{\min}(H^TB^TBH) \geq \lambda_{\min}(B^TB)\lambda_{\min}(H^TH)\gtrsim \sqrt{M/p}\sqrt{M/p} = M/p$, and by Weyl's inequality we have $\lambda_{\min}(\hat{B}^T\hat{B}) \geq \lambda_{\min}(\tilde{B}^T\tilde{B}) - ||\hat{B}^T\hat{B} - \tilde{B}^T\tilde{B}||_{\text{op}}$, where the last term is upper bounded by $\big(||\hat{B}||_{\text{op}} + ||\tilde{B}||_{\text{op}}\big)||\hat{B}-\tilde{B}||_{F} = O(\sqrt{M/p})\cdot o(\sqrt{M/p}) = o(M/p)$ in the assumed regime of $p=o(\sqrt{M})$. Combined with the following:
$$
\|P_B e_j\|_2=\|B( B^T B)^{-1} B^T e_j\|_2\leq \|B( B^T B)^{-1}\|_{\text{op}} \|B^T e_j\|_2\lesssim M^{-1/2},
$$ 
we derive
\begin{align*}
I_2&=\|\hat B(\hat B^T\hat B)^{-1}\{\hat B^T\hat B-\tilde B^T\tilde B\}(\tilde B^T\tilde B)^{-1}\tilde B^T e_j\|_2\\
&\leq \|\hat B(\hat B^T\hat B)^{-1}\|_{\text{op}}\Big[\|(\hat B-\tilde B)^TP_B e_j\|_2+ \|\hat B^T(\hat B-\tilde B)^T(\tilde B^T\tilde B)^{-1}\tilde B^T e_j\|_2\Big]\\
&\lesssim \sqrt{\frac{p}{M}}\Bigg\{\sqrt{M}\eta \frac{1}{\sqrt{M}}+\sqrt{\frac{M}{p}}I_1\Bigg\}\\
&\lesssim \eta \sqrt{\frac{p}{M}}. 
\end{align*}
Finally, we can show that
$$
I_3\leq \|\hat B(\hat B^T\hat B)^{-1}\|_{\text{op}}\|(\tilde B-\hat B)^T e_j\|_2\lesssim \eta \sqrt{\frac{p}{M}}. 
$$
This completes the proof.
\end{proof}

\section{Additional Technical Results for Remark \ref{rem_1}}\label{sec_rem_1}

In this section, we present some results on the eigenvalue ratio used in Remark \ref{rem_1}. Recall that $\Sigma=B\big(\Sigma_Z^{-1}+A^TA\big)^{-1}B^T$. We know from Assumption \ref{assum_pervasiveness} that $\lambda_j\big(\Sigma\big)\asymp M/p$  for $1 \leq j \leq K$, and $\lambda_j\big(\Sigma\big)=0$ for $K+1\leq j\leq M$. Similar to the derivation in the proof for Theorem \ref{thm_projection}, we know by Weyl's inequality that for all $1 \leq j \leq M$, 
    \begin{align*}
        \big|\lambda_j(\hat{\Sigma}) - \lambda_j (\Sigma)\big|~&\leq~ \Big|\Big|\hat{\Sigma} - \Sigma\Big|\Big|_{\text{op}} \\
        ~&\lesssim \frac{M}{p}\Bigg(\frac{p}{\sqrt{M}}+\frac{1}{\sqrt{p}}+p\sqrt{\frac{\log (p\vee M)}{n}}+\sqrt{\frac{K}{n}}\Bigg).
    \end{align*}
Assume that 
$$
\frac{p}{\sqrt{M}}+\frac{1}{\sqrt{p}}+p\sqrt{\frac{\log (p\vee M)}{n}}+\sqrt{\frac{K}{n}}=o(1).
$$
Then $M/p$ dominates all of the terms in the above rate and we conclude that $\lambda_j(\hat{\Sigma})\asymp M/p$ for $1 \leq j \leq K$, and  $\lambda_j(\hat{\Sigma})= o(M/p)$ for $K+1 \leq j \leq M$. This implies that $\lambda_{j}(\hat{\Sigma})/\lambda_{j+1}(\hat{\Sigma}) \asymp 1$ for $1 \leq j \leq K-1$, and $\lambda_{K}(\hat{\Sigma})/\lambda_{K+1}(\hat{\Sigma}) \rightarrow \infty$. This implies $\hat K\geq K$. While the eigenvalue ratio approach in general cannot imply $\hat K=K$ with high probability, the numerical results in \cite{bing2022adaptive} show that the performance of the PCA based estimator is robust even if $\hat K$ is above $K$, as long as it's in a reasonable range.

\section{Other Useful Definitions and Inequalities}

\subsection{Davis-Kahan Theorem for Statisticians}
\begin{lemma} \label{DavisKahan}
 Let $\Sigma, \hat{\Sigma} \in \RR^{p \times p}$ be symmetric with eigenvalues $\lambda_1 \geq ... \geq \lambda_p$ and $\hat{\lambda}_1 \geq ... \geq \hat{\lambda}_p$. Fix $1 \leq r \leq s \leq p$ and assume that $\min(\lambda_{r-1} - \lambda_r,~ \lambda_s - \lambda_{s+1}) > 0$ where we define $\lambda_0 = \infty$ and $\lambda_{p+1} = -\infty$. Let $d=s-r+1$ and let $V = [v_r, v_{r+1}, ..., v_s] \in \RR^{p \times d}$ and $\hat{V} = [\hat{v}_r, \hat{v}_{r+1}, ..., \hat{v}_s] \in \RR^{p \times d}$ have orthonormal columns satisfying $\Sigma v_j = \lambda_j v_j$ and $\hat{\Sigma} \hat{v}_j = \hat{\lambda}_j \hat{v}_j$ for $j=r, r+1,...,s$. Then there exists an orthogonal matrix $\hat{O} \in \RR^{d \times d}$ such that 
\[ ||\hat{V}\hat{O} - V||_{F} ~\leq~ \frac{2^{3/2} \min\big(\sqrt{d}||\hat{\Sigma}-\Sigma||_{\text{op}}, ~||\hat{\Sigma}-\Sigma||_{F} \big)}{\min\big(\lambda_{r-1} - \lambda_r,~ \lambda_{s} - \lambda_{s+1}\big)}.\]
\end{lemma}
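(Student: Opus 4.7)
The statement is the ``statisticians' variant'' of Davis--Kahan due to Yu, Wang, and Samworth (2015), and the plan is to combine a standard geometric reduction with the classical $\sin\Theta$ theorem of Davis--Kahan.

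\textbf{Step 1 (reduction to principal angles).} The natural candidate for $\hat O$ comes from the polar factor of $\hat V^T V \in \RR^{d\times d}$. I would take a singular value decomposition $\hat V^T V = U_1 D U_2^T$ with $D = \mathrm{diag}(\cos\theta_1,\ldots,\cos\theta_d)$, where $\theta_1,\ldots,\theta_d$ are the principal angles between the column spaces $\mathrm{col}(\hat V)$ and $\mathrm{col}(V)$, and set $\hat O = U_1 U_2^T$. Using $\hat V^T\hat V = V^TV = I_d$, a one-line expansion gives $\|\hat V\hat O - V\|_F^2 = 2d - 2\,\mathrm{tr}(U_2 D U_2^T) = 2\sum_{i=1}^d(1-\cos\theta_i)$. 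Since $\cos\theta_i\in[0,1]$, we have $1-\cos\theta_i \le 1-\cos^2\theta_i = \sin^2\theta_i$, so $\|\hat V\hat O - V\|_F \le \sqrt{2}\,\|\sin\Theta(\hat V, V)\|_F$. This reduces the problem to bounding $\|\sin\Theta(\hat V,V)\|_F$.

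\textbf{Step 2 (Davis--Kahan $\sin\Theta$ bound with a gap assumption on $\Sigma$).} Writing $\delta := \min(\lambda_{r-1}-\lambda_r,\lambda_s-\lambda_{s+1})$, I would split into two cases. If $\|\hat\Sigma-\Sigma\|_{\mathrm{op}} \ge \delta/2$, then the trivial bound $\|\sin\Theta\|_F \le \sqrt{d}$ already yields $\|\sin\Theta\|_F \le 2\sqrt{d}\,\|\hat\Sigma-\Sigma\|_{\mathrm{op}}/\delta$, and also $\le 2\|\hat\Sigma-\Sigma\|_F/\delta$ because $\|\cdot\|_{\mathrm{op}}\le\|\cdot\|_F$. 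Otherwise $\|\hat\Sigma-\Sigma\|_{\mathrm{op}} < \delta/2$, so by Weyl's inequality, every eigenvalue of $\hat\Sigma$ indexed outside $[r,s]$ is separated from $[\lambda_s,\lambda_r]$ by at least $\delta/2$. Let $\hat V_\perp$ collect the remaining $p-d$ eigenvectors of $\hat\Sigma$ with eigenvalue matrix $\hat\Lambda_\perp$. From $\Sigma V = V\Lambda$ and $\hat\Sigma\hat V_\perp = \hat V_\perp\hat\Lambda_\perp$, I get the Sylvester identity
\[
\hat V_\perp^T V\,\Lambda - \hat\Lambda_\perp\,\hat V_\perp^T V \;=\; \hat V_\perp^T(\Sigma-\hat\Sigma)\,V.
\]
The operator acting on $\hat V_\perp^T V$ on the left-hand side is invertible with inverse bounded in Frobenius norm by $2/\delta$ (by the $\delta/2$-separation). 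Applying this inverse and using $\|\sin\Theta(\hat V,V)\|_F = \|\hat V_\perp^T V\|_F$ gives $\|\sin\Theta\|_F \le \tfrac{2}{\delta}\,\|\hat V_\perp^T(\Sigma-\hat\Sigma)V\|_F$, which is bounded by $\tfrac{2}{\delta}\,\|\hat\Sigma-\Sigma\|_F$ and, using $\|AB\|_F \le \sqrt{d}\,\|A\|_F\|B\|_{\mathrm{op}}$-style slicing, by $\tfrac{2\sqrt d}{\delta}\,\|\hat\Sigma-\Sigma\|_{\mathrm{op}}$.

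\textbf{Combining.} Multiplying the two steps yields the claimed constant $\sqrt{2}\cdot 2 = 2^{3/2}$. The only subtle point, and the one that requires care, is the case split that trades the gap in the spectrum of $\hat\Sigma$ (which is what the classical theorem naturally uses) for the cleaner gap $\delta$ in the spectrum of $\Sigma$; this trade costs precisely the factor of $2$. Boundary cases $r=1$ or $s=p$ are handled transparently by the conventions $\lambda_0=+\infty$ and $\lambda_{p+1}=-\infty$, so only one gap needs to be finite. Everything else is routine linear algebra.
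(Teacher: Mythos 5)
The paper itself does not prove this lemma: it states it and refers to Yu, Wang and Samworth (2015) for the proof, and your two-step plan (an orthogonal Procrustes alignment reducing $\|\hat V\hat O-V\|_F$ to $\sqrt{2}\,\|\sin\Theta(\hat V,V)\|_F$, followed by a $\sin\Theta$ bound in which Weyl's inequality trades the eigengap of $\hat\Sigma$ for the gap $\delta=\min(\lambda_{r-1}-\lambda_r,\lambda_s-\lambda_{s+1})$ of $\Sigma$ at the cost of a factor $2$) is exactly the route of that reference. Step 1 is correct, and so is the main case of Step 2: the Sylvester identity, the $2/\delta$ bound on the inverse of the Sylvester operator under the $\delta/2$ separation obtained from Weyl when $\|E\|_{\mathrm{op}}<\delta/2$, the bound $\|\hat V_\perp^T E V\|_F\le\min(\sqrt d\,\|E\|_{\mathrm{op}},\|E\|_F)$, and the bookkeeping $\sqrt2\cdot 2=2^{3/2}$ are all fine.

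The genuine gap is in your complementary case $\|E\|_{\mathrm{op}}\ge\delta/2$, for the Frobenius half of the minimum. There, $\|\sin\Theta\|_F\le\sqrt d$ together with $\|E\|_{\mathrm{op}}\ge\delta/2$ does give $\|\sin\Theta\|_F\le 2\sqrt d\,\|E\|_{\mathrm{op}}/\delta$, but the further claim that it is "also $\le 2\|E\|_F/\delta$ because $\|\cdot\|_{\mathrm{op}}\le\|\cdot\|_F$" is a non sequitur: from $\|E\|_F\ge\|E\|_{\mathrm{op}}\ge\delta/2$ you only learn $2\|E\|_F/\delta\ge 1$, while $\|\sin\Theta\|_F$ may be as large as $\sqrt d$; substituting $\|E\|_F$ for $\|E\|_{\mathrm{op}}$ in your first bound yields only $2\sqrt d\,\|E\|_F/\delta$, which is off by the factor $\sqrt d$. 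So precisely in the regime where the Frobenius term is the active minimum ($\|E\|_F<\sqrt d\,\|E\|_{\mathrm{op}}$ with $d\ge 2$) and the perturbation is large, the claimed inequality is not established by your argument, and it is exactly there that the Weyl-based spectral separation you rely on is unavailable. The fix is not a one-liner for a general middle block $r\le s$: for the extreme blocks ($r=1$ or $s=p$) one can avoid any case split via the variational/trace argument $\mathrm{tr}\{(\hat\Sigma-\Sigma)(\hat V\hat V^T-VV^T)\}\ge\delta\,\|\sin\Theta\|_F^2$, which gives the unconditional bound $\|\sin\Theta\|_F\le\sqrt2\,\|E\|_F/\delta$, but that argument uses that $\hat V$ maximizes the Rayleigh trace and does not transfer directly to interior blocks, so you need either a separate argument for the Frobenius half in the large-perturbation case or a different organization of the case split.
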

The full presentation of the theorem and its proof can be found in \cite{yu2015useful}.

\subsection{Maximal Inequality for Sub-Gaussian Random Variables}
\begin{lemma}\label{maxmax}
    Let $X_1,...,X_n$ be $n$ random variables such that $X_i \sim $subGaussian$(||X||^2_{\psi_2})$. Then,
    \begin{align*}
        \EE\Big[\underset{1\leq i \leq n}{\max}X_i\Big] ~\leq~ ||X||_{\psi_2}\cdot \sqrt{2\log n}&~,~~~~\EE\Big[\underset{1\leq i \leq n}{\max}|X_i|\Big] ~\leq~ ||X||_{\psi_2}\cdot \sqrt{2\log (2n)}\\
        \PP\Big[\underset{1\leq i \leq n}{\max}X_i > t\Big] ~\leq~ n\cdot e^{-\frac{t^2}{2\cdot||X||^2_{\psi_2}}}&~,~~~~\PP\Big[\underset{1\leq i \leq n}{\max}|X_i| > t\Big] ~\leq~ 2n\cdot e^{-\frac{t^2}{2\cdot||X||^2_{\psi_2}}}.
    \end{align*}
\end{lemma}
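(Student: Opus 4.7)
The plan is to derive all four bounds from the standard moment generating function (MGF) characterization of sub-Gaussianity, namely that under the convention adopted in the paper, $\EE[e^{\lambda X_i}] \leq \exp(\lambda^2 \|X\|_{\psi_2}^2 / 2)$ for all $\lambda \in \RR$. The two tail inequalities are the easy half, and the two expectation inequalities are obtained by the classical \emph{soft-max trick}.

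First I would dispose of the tail bounds. For the one-sided tail, Markov's inequality applied to $e^{\lambda X_i}$ combined with the MGF bound yields $\PP(X_i > t) \leq \exp(-\lambda t + \lambda^2\|X\|_{\psi_2}^2/2)$, and optimizing over $\lambda > 0$ by choosing $\lambda = t/\|X\|_{\psi_2}^2$ gives the Chernoff bound $\PP(X_i > t) \leq \exp(-t^2/(2\|X\|_{\psi_2}^2))$. A union bound over $i = 1,\dots,n$ then produces the third inequality. For the two-sided tail, I would note that $\{\max_i |X_i| > t\} = \{\max_i X_i > t\} \cup \{\max_i (-X_i) > t\}$, observe that $-X_i$ is also sub-Gaussian with the same norm, and apply the one-sided bound twice, yielding the factor of $2n$.

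Next, for the expectation bounds, I would use the soft-max / exponential moment trick. By Jensen's inequality applied to the convex function $x \mapsto e^{\lambda x}$ (for $\lambda > 0$), $\exp(\lambda\,\EE[\max_i X_i]) \leq \EE[\exp(\lambda \max_i X_i)] = \EE[\max_i e^{\lambda X_i}] \leq \sum_{i=1}^n \EE[e^{\lambda X_i}] \leq n \exp(\lambda^2 \|X\|_{\psi_2}^2/2)$. Taking logs and dividing by $\lambda$ gives $\EE[\max_i X_i] \leq \frac{\log n}{\lambda} + \frac{\lambda \|X\|_{\psi_2}^2}{2}$, and the optimal choice $\lambda = \sqrt{2\log n}/\|X\|_{\psi_2}$ yields $\EE[\max_i X_i] \leq \|X\|_{\psi_2}\sqrt{2\log n}$. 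For the absolute-value version, I would apply exactly the same argument to the $2n$ variables $\{X_1,-X_1,\dots,X_n,-X_n\}$, each still sub-Gaussian with the same norm; replacing $n$ by $2n$ delivers $\EE[\max_i |X_i|] \leq \|X\|_{\psi_2}\sqrt{2\log(2n)}$.

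There is no serious obstacle here since this is a textbook result, and the only place to be careful is the normalization convention: the constants $\sqrt{2\log n}$ and the exponent $t^2/(2\|X\|_{\psi_2}^2)$ are consistent with interpreting $\|X\|_{\psi_2}^2$ as a variance proxy in the MGF sense, so the argument above reproduces the stated constants exactly. No joint distribution assumption on the $X_i$ (independence, identical distribution, etc.) is needed, as only the marginal MGF bound and the union bound are used.
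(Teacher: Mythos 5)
Your proof is correct, and it is the canonical argument (Chernoff bound plus union bound for the tails, the soft-max/Jensen trick for the expectations, and symmetrization over $\{X_i,-X_i\}$ for the two-sided versions); the paper itself states Lemma \ref{maxmax} without proof or citation, treating it as a standard fact, so there is no competing derivation to compare against. You also correctly flag the only delicate point: the stated constants $\sqrt{2\log n}$ and $e^{-t^2/(2\|X\|^2_{\psi_2})}$ hold under the variance-proxy (MGF) reading of $\|X\|^2_{\psi_2}$ implicit in the notation ``$X_i \sim \textrm{subGaussian}(\|X\|^2_{\psi_2})$'', which presumes centered variables, rather than under the Orlicz-norm normalization of \cite{vershynin2018high} used elsewhere in the paper, where the same bounds would hold only up to absolute constants.
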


\subsection{Other Concentration Inequalities}
\begin{lemma} \label{Hoef}
    $\mathrm{(Hoeffding ~Inequality)}$ Let $X_1, ..., X_n$ be independent, mean-zero sub-Gaussian random variables, and let $a=(a_1,...,a_n) \in \mathbf{R}^n$. Then, for every $t \geq 0$, we have
    \[P\Bigg\{ \bigg|\sum_{i=1}^n a_iX_i \bigg| ~\geq~ t \Bigg\} \leq 2 \exp\bigg(-c_H \cdot \frac{t^2}{V^2 \cdot||a||_2^2} \bigg)\]
    where $V = \underset{1\leq i\leq n}{\max} ||X_i||_{\psi_2}$.
\end{lemma}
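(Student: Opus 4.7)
The plan is to prove this by the standard Chernoff/MGF route, which is the canonical proof of the Hoeffding inequality for sub-Gaussian random variables. First I would invoke the equivalent characterization of sub-Gaussianity: there is an absolute constant $c>0$ such that for any mean-zero sub-Gaussian random variable $X$ with $\|X\|_{\psi_2}<\infty$, the moment generating function satisfies $\EE \exp(\lambda X) \leq \exp(c\,\lambda^{2} \|X\|_{\psi_2}^{2})$ for all $\lambda \in \RR$. This is a standard fact (see, e.g., Proposition 2.5.2 of Vershynin (2018), which the paper already cites for sub-Gaussian norms).

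Next I would use independence to multiply the moment generating functions. For any $\lambda \in \RR$,
\begin{align*}
\EE \exp\Big(\lambda \sum_{i=1}^{n} a_i X_i\Big) = \prod_{i=1}^{n} \EE \exp(\lambda a_i X_i) \leq \prod_{i=1}^{n} \exp(c\,\lambda^{2} a_i^{2} \|X_i\|_{\psi_2}^{2}) \leq \exp\big(c\,\lambda^{2} V^{2} \|a\|_{2}^{2}\big),
\end{align*}
where the last step uses $\|X_i\|_{\psi_2} \leq V$ uniformly in $i$.

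Then I would apply the Chernoff bound for the one-sided tail: for any $\lambda>0$,
\begin{align*}
\PP\Big(\sum_{i=1}^{n} a_i X_i \geq t\Big) \leq e^{-\lambda t}\,\EE \exp\Big(\lambda \sum_{i=1}^{n} a_i X_i\Big) \leq \exp\big(-\lambda t + c\,\lambda^{2} V^{2}\|a\|_{2}^{2}\big).
\end{align*}
Optimizing the right-hand side over $\lambda>0$ via $\lambda^{\ast} = t/(2c V^{2}\|a\|_{2}^{2})$ yields the bound $\exp\!\big(-t^{2}/(4c V^{2}\|a\|_{2}^{2})\big)$. Setting $c_H := 1/(4c)$ gives a one-sided tail of $\exp(-c_H t^{2}/(V^{2}\|a\|_{2}^{2}))$. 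The same argument applied to $-\sum_i a_i X_i$ (which has the same sub-Gaussian structure since sub-Gaussianity is symmetric and $\|{-X_i}\|_{\psi_2} = \|X_i\|_{\psi_2}$) gives the matching lower tail bound, and a union bound over the two events produces the factor of $2$ in front.

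There is no genuine obstacle here since the lemma is textbook; the only minor care point is tracking the numerical constant $c_H$ through the MGF bound and the Chernoff optimization, but since $c_H$ is stated as an unspecified absolute constant the argument is concluded. I would simply cite \cite{vershynin2018high} for the underlying sub-Gaussian MGF bound and present the three-line MGF/Chernoff/optimization computation described above.
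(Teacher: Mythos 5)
Your proof is correct and coincides with the paper's treatment: the paper simply defers to \cite{vershynin2018high} (Theorem 2.6.2 there), whose proof is exactly the sub-Gaussian MGF bound, independence factorization, Chernoff optimization, and a two-sided union bound that you describe. No gaps — the argument is the standard one and the unspecified absolute constant $c_H$ absorbs the constants from the MGF bound and the optimization.
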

\begin{lemma} \label{Bern}
    $\mathrm{(Bernstein ~Inequality)}$ Let $X_1, ..., X_n$ be independent, mean-zero sub-exponential random variables, and let $a=(a_1,...,a_n) \in \mathbf{R}^n$. Then, for every $t \geq 0$, we have
    \[P\Bigg\{ \bigg|\sum_{i=1}^n a_iX_i \bigg| ~\geq~ t \Bigg\} \leq 2 \exp\bigg(-c_B \cdot \min \Big(\frac{t^2}{V^2 \cdot||a||_2^2}, \frac{t}{V\cdot||a||_{\max}} \Big) \bigg)\]
    where $V = \underset{1\leq i\leq n}{\max} ||X_i||_{\psi_1}$.
\end{lemma}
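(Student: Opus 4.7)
The plan is to prove this via the classical Chernoff (exponential moment) method adapted to the sub-exponential tail. First I would recall the standard MGF characterization of a centered sub-exponential variable: there exist absolute constants $c_1, c_2 > 0$ such that whenever $\|X_i\|_{\psi_1} \leq V$ and $E X_i = 0$, the bound
\[
E\!\left[e^{\lambda X_i}\right] \leq \exp\!\big(c_1\, \lambda^2 V^2\big) \qquad \text{holds for all } |\lambda|\, V \leq c_2.
\]
This is the key ingredient; it is a textbook consequence of the equivalence between sub-exponentiality and moment growth (see, e.g., Vershynin, High-Dimensional Probability, Prop.\ 2.7.1), so I would invoke it rather than reprove it.

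Next I would combine across coordinates. Set $S = \sum_{i=1}^n a_i X_i$ and fix $\lambda > 0$ small enough that $|\lambda a_i|\, V \leq c_2$ for every $i$, which is ensured by $\lambda \leq c_2 / (V\, \|a\|_{\max})$. Because the $X_i$ are independent,
\[
E\!\left[e^{\lambda S}\right] \;=\; \prod_{i=1}^n E\!\left[e^{\lambda a_i X_i}\right] \;\leq\; \prod_{i=1}^n \exp\!\big(c_1\, \lambda^2 a_i^2 V^2\big) \;=\; \exp\!\big(c_1\, \lambda^2 V^2 \|a\|_2^2\big).
\]
Markov's inequality then yields $P(S \geq t) \leq \exp\!\big(-\lambda t + c_1\, \lambda^2 V^2 \|a\|_2^2\big)$ on the admissible range of $\lambda$.

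The core step is the two-regime optimization. The unconstrained minimizer of $\lambda t - c_1 \lambda^2 V^2 \|a\|_2^2$ is $\lambda^\ast = t / (2 c_1 V^2 \|a\|_2^2)$, giving the sub-Gaussian-type bound $\exp\!\big(-t^2 / (4 c_1 V^2 \|a\|_2^2)\big)$. If $\lambda^\ast$ lies inside $[0, c_2/(V\|a\|_{\max})]$, which is the small-$t$ regime, I use this choice directly. Otherwise $t$ is large, and I plug in the boundary value $\lambda = c_2/(V\|a\|_{\max})$; the linear-in-$t$ term then dominates and yields $\exp\!\big(-c_3\, t/(V\|a\|_{\max})\big)$ for some $c_3 > 0$. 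Collapsing the two cases into one bound gives $\exp\!\big(-c_B \min\{t^2/(V^2 \|a\|_2^2),\, t/(V\|a\|_{\max})\}\big)$ for a constant $c_B$ that can be taken as the smaller of the two prefactors. Finally, applying the identical argument to $-S$ and adding the two tails produces the factor $2$ in front, matching the statement exactly.

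The main obstacle is really just bookkeeping the constants $c_1, c_2, c_3$ and transitioning cleanly between the two regimes at the breakpoint $t \asymp V \|a\|_2^2 / \|a\|_{\max}$; no deeper mechanism is needed. The only conceptual point worth double-checking is that the admissibility condition $|\lambda a_i| V \leq c_2$ holds uniformly in $i$ via the $\|a\|_{\max}$-dependence, which is precisely why $\|a\|_{\max}$ appears in the linear-tail term of the final bound.
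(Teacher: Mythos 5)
Your proof is correct and is essentially the standard argument: the paper itself does not prove this lemma but simply cites \cite{vershynin2018high}, and your Chernoff-method derivation (sub-exponential MGF bound on a restricted range of $\lambda$, independence to factor the MGF, Markov's inequality, and the two-regime optimization at the breakpoint $t \asymp V\|a\|_2^2/\|a\|_{\max}$) is precisely the proof given there. No gaps; the only care needed is the bookkeeping of constants, as you note.
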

The proof for this lemma can be found in \cite{vershynin2018high}. Note that if you set $T=\min \Big(\frac{t^2}{V^2 \cdot||a||_2^2}, \frac{t}{V\cdot||a||_{\max}} \Big)$, this can be rewritten as:
\[\bigg|\sum_{i=1}^n a_iX_i \bigg| ~\leq~ V\cdot \max \Big( \sqrt{T}\cdot||a||_2,~ T\cdot ||a||_{\max}\Big) ~~~\text{w.p.} ~ \geq 1-2\exp\big(-c_B\cdot T\big).\]
\begin{defn} \label{alphasub}
$(\alpha-\mathrm{Sub}$-$\mathrm{exponential~Random~Variables)}$ We say that a random variable $X$ is $\alpha$-sub-exponential if there exists $K_{\alpha}>0$ such that $P(|X| \geq t) \leq 2 \exp (-\frac{t^{\alpha}}{K_{\alpha}^{\alpha}})$ for all $t \geq 0$. We define the $\alpha$-sub-exponential norm as $||X||_{\psi_{\alpha}} := \underset{p \geq 1}{\sup} ~\frac{1}{p^{1/\alpha}}\cdot \big\{E(|X|^p)\big\}^{1/p}$. 
\end{defn}
Note that it follows that $E(|X|^2) \leq 2^{\frac{2}{\alpha}}\cdot ||X||_{\psi_{\alpha}}^2$. Thus, for a fixed $\alpha \in (0,1]$, a centered $\alpha$-sub-exponential random variable with a finite $\alpha$-sub-exponential norm has a bounded variance. 
\begin{lemma} \label{GenBern}
$(\mathrm{Concentration~for~}\alpha-\mathrm{Sub}$-$\mathrm{exponential~Random~Variables)}$ Let $X_1, ..., X_n$ be independent, mean-zero $\alpha$-sub-exponential random variables satisfying $||X_i||_{\psi_{\alpha}} \leq V$, for some $\alpha \in (0,1]$. Let $a \in \mathbf{R}^n$. For any $t > 0$, we have
\[ P\Bigg\{ \bigg|\sum_{i=1}^n a_iX_i \bigg| ~\geq~ t \Bigg\} \leq 2 \exp\bigg(-c_{\alpha} \cdot \min \Big(\frac{t^2}{V^2 \cdot||a||_2^2}, \frac{t^{\alpha}}{V^{\alpha}\cdot||a||_{\max}^{\alpha}} \Big) \bigg). \]
\end{lemma}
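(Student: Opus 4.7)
The plan is to combine a standard truncation argument with the classical Bernstein inequality (Lemma~\ref{Bern}). A direct Chernoff bound is not available for $\alpha<1$, since the moment generating function of an $\alpha$-sub-exponential random variable may fail to exist for large arguments, so the $\alpha$-dependent heavy-tail regime in the bound must be produced by truncation rather than by an MGF estimate.

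Fix a threshold $M>0$ to be chosen, and write $X_i=U_i+R_i$, where $U_i:=X_i\mathbf{1}\{|X_i|\leq M\}-\EE[X_i\mathbf{1}\{|X_i|\leq M\}]$ and $R_i:=X_i-U_i$. Both are centered since $\EE X_i=0$. The bounded variables $U_i$ satisfy $|U_i|\leq 2M$ and $\Var(U_i)\leq \EE X_i^2\leq CV^2$, with the last bound following from Definition~\ref{alphasub}. Applying Lemma~\ref{Bern} in its bounded form to $\sum a_iU_i$ yields
$$
\PP\Big(\big|\textstyle\sum a_iU_i\big|\geq t/2\Big)\leq 2\exp\Big(-c\min\Big(\tfrac{t^2}{V^2\|a\|_2^2},\,\tfrac{t}{M\|a\|_\infty}\Big)\Big).
$$

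For the residual $\sum a_iR_i$, I would use two facts. First, on the event $\Omega_M=\{\max_i|X_i|\leq M\}$, every $X_i\mathbf{1}\{|X_i|>M\}$ vanishes, so the sum collapses to the deterministic drift $-\sum a_i\EE[X_i\mathbf{1}\{|X_i|>M\}]$, whose magnitude is bounded by $C\|a\|_\infty V(M/V)^{1-\alpha}\exp(-(M/V)^\alpha/c_\alpha)$ after integrating the sub-Weibull tail from Definition~\ref{alphasub}. Second, rather than use a union bound on $\Omega_M^c$ (which would produce an unwanted factor of $n$), I would instead estimate $\sum a_iR_i$ through the second moment: by independence, $\EE(\sum a_iR_i)^2=\sum a_i^2\EE R_i^2$, and Cauchy--Schwarz with the fourth-moment bound $\EE X_i^4\leq CV^4$ (again from Definition~\ref{alphasub}) gives $\EE R_i^2\leq 4\EE[X_i^2\mathbf{1}\{|X_i|>M\}]\leq CV^2\exp(-(M/V)^\alpha/(2c_\alpha))$. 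Markov's inequality then yields a tail probability for $\sum a_iR_i$ that decays like $\exp(-(M/V)^\alpha/(2c_\alpha))$ up to an algebraic prefactor, independent of $n$.

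Finally, to assemble the bound, I would set $s:=\min(t^2/(V^2\|a\|_2^2),(t/(V\|a\|_\infty))^\alpha)$ and choose $M\asymp V s^{1/\alpha}$. With this choice $(M/V)^\alpha\asymp s$, so the residual contribution decays like $e^{-c s}$; meanwhile the Bernstein second-regime term becomes $t/(M\|a\|_\infty)\asymp s$ precisely in the heavy-tail regime where $s=(t/(V\|a\|_\infty))^\alpha$, while in the light-tail regime the Bernstein first-regime term already delivers $e^{-c s}$. The main obstacle is this balancing step: the choice of $M$ must simultaneously make the Bernstein second regime match $(t/(V\|a\|_\infty))^\alpha$ and keep the tail contribution $\exp(-(M/V)^\alpha/(2c_\alpha))$ smaller than the target, which requires a case split on which term in the outer $\min$ is active and a careful verification that the algebraic prefactor coming from the Markov step is dominated by the exponential for $t\geq V\|a\|_\infty$ (a separate elementary argument handles the trivial regime $t\lesssim V\|a\|_2$). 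Combining the two probability bounds and absorbing universal constants into $c_\alpha$ yields the stated inequality.
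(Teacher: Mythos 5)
You have identified the right danger but underestimated it: the ``balancing step'' you defer to a case split cannot actually be carried out, so the proof as proposed does not go through. Write $u=t/(V\|a\|_{\max})$ and let $s=\min\bigl(t^2/(V^2\|a\|_2^2),\,u^{\alpha}\bigr)$ be the target exponent. For the bounded part, Bernstein forces the truncation level to satisfy $t/(M\|a\|_{\max})\gtrsim s$, i.e.\ $M\lesssim t/(s\|a\|_{\max})$; for the residual, your Chebyshev bound is at best $\frac{CV^2\|a\|_2^2}{t^2}e^{-(M/V)^{\alpha}/(2c)}\leq \frac{C}{s}e^{-(M/V)^{\alpha}/(2c)}$, so to reach $e^{-c's}$ you need $(M/V)^{\alpha}\gtrsim s$, i.e.\ $M\gtrsim Vs^{1/\alpha}$. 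These two requirements are compatible only when $s^{1+1/\alpha}\lesssim u$, and precisely in the regime where the $\alpha$-term of the minimum is active, $s=u^{\alpha}$, this reads $u^{\alpha}\lesssim 1$ — i.e.\ only the trivial regime survives. The simplest instance already exhibits the failure: $n=1$, $a=(1)$, $t\gg V$, where Chebyshev on the tail part needs $M\gtrsim t$ while Bernstein on the bounded part needs $M\lesssim t/(t/V)^{\alpha}$. The obstruction is conceptual, not bookkeeping: the $t^{\alpha}/(V^{\alpha}\|a\|_{\max}^{\alpha})$ regime encodes the ``one big jump'' event that a single summand has size of order $t/\|a\|_{\max}$, and a second-moment (or variance-based) bound on the truncated tail can only be exponentially small in $(M/V)^{\alpha}$; making that match $u^{\alpha}$ forces $M\gtrsim t/\|a\|_{\max}$, at which point the bounded part has range comparable to $t$ and Bernstein's second regime degenerates to a constant. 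Replacing Chebyshev by a union bound over $\{\max_i|X_i|>M\}$ does not rescue the scheme either, because the incompatibility is in the forced truncation level, not in the factor of $n$ (which is in fact absorbable in this regime). Two smaller points: the drift $\sum_i a_i\EE[X_i\ind\{|X_i|>M\}]$ is controlled by $\|a\|_1$, not $\|a\|_{\max}$; and the paper's Lemma \ref{Bern} is stated with the $\psi_1$-norm, so applied to variables bounded by $2M$ it yields $t^2/(M^2\|a\|_2^2)$ in the sub-Gaussian regime — you need the variance-form Bernstein inequality to keep $V^2\|a\|_2^2$ there.

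For comparison, the paper does not prove this lemma at all: it is quoted from \cite{gotze2021concentration}, where the inequality is obtained by a genuinely different route — sharp $L^p$-moment bounds of the form $\|\sum_i a_iX_i\|_{L^p}\lesssim \sqrt{p}\,V\|a\|_2+p^{1/\alpha}V\|a\|_{\max}$ (via entropy/modified log-Sobolev-type arguments for $\alpha$-sub-exponential variables), followed by Markov's inequality with an optimized $p$. That moment-based mechanism is exactly what captures the heavy-tail, single-large-summand regime that single-level truncation plus Bernstein plus Chebyshev cannot reach; if you want an elementary self-contained proof, you would need either such moment estimates or a Fuk--Nagaev/Hoffmann-J{\o}rgensen-type treatment of the large values, rather than the decomposition you propose.
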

The proof for Lemma \ref{GenBern} can be found in \cite{gotze2021concentration}. In a similar fashion to the lemma above, this can be rewritten as: 
\[\bigg|\sum_{i=1}^n a_iX_i \bigg| ~\leq~ V\cdot \max \Big( \sqrt{T}\cdot||a||_2,~ T^{\frac{1}{\alpha}}\cdot ||a||_{\max}\Big) ~~~\text{w.p.} ~ \geq 1-2\exp\big(-c_{\alpha}\cdot T\big).\]

In particular, by taking $T=\log M$ and plugging in $a = [1/n, ..., 1/n]$, we obtain 
$$
\bigg|\frac{1}{n}\sum_{i=1}^n X_i \bigg|\lesssim V\sqrt{\frac{\log M}{n}},
$$
provided $(\log M)^{(2-\alpha)/\alpha}/n=O(1)$. 

\begin{lemma} \label{GenBern2}
$(\mathrm{Concentration~of~the~Euclidean~Norm~of~}\alpha-\mathrm{Sub}$-$\mathrm{exponential~Random~Variables)}$ Let $X_1, ..., X_n$ be independent, mean-zero $\alpha$-sub-exponential random variables satisfying $||X_i||_{\psi_{\alpha}} \leq V$, for some $\alpha \in (0,1]$ and $E(X^2_{i}) \leq w_i^2$. For an $n \times n$ matrix $Q$, let $A=Q^TQ=(a_{ij})$. Then for any $t>0$, we have
\begin{align*} \Bigg| \Big|\Big| QX \Big|\Big|_2^2 - \sum_{i=1 }\bigg(w_i^2\sum_{j=1}^n q_{ji}^2\bigg)\Bigg| &~\leq~ V^2 \max \Big( \sqrt{t} \cdot ||A||_F,~~ t \cdot||A||_{\text{op}},~~ t^{\frac{2 + \alpha}{2\alpha}}\cdot\underset{1 \leq i \leq n}{\max}||(a_{ij})_j||_2,~~ t^{\frac{2}{\alpha}}\cdot||A||_{\max} \Big) \\
& \hspace{3cm} \text{w.p.~} \geq 1-2\exp\Big(-\frac{t}{C_{\alpha}}\Big).
\end{align*}
\end{lemma}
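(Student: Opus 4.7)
The statement is a Hanson--Wright-type concentration bound for the quadratic form $\|QX\|_2^2 = X^T A X$ with $A = Q^T Q$, under $\alpha$-sub-exponential tails. My plan is to decompose the quadratic form into diagonal and off-diagonal pieces, and bound each by iterated applications of the scalar concentration inequality in Lemma \ref{GenBern}, combined with a decoupling step for the off-diagonal part. Write
$$\|QX\|_2^2 - \sum_{i} w_i^2\sum_{j}q_{ji}^2 = \sum_{i} a_{ii}\bigl(X_i^2 - E X_i^2\bigr) + \sum_{i\neq j} a_{ij} X_i X_j + \sum_i a_{ii}\bigl(E X_i^2 - w_i^2\bigr),$$
using $a_{ii} = \sum_j q_{ji}^2$. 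The last (deterministic, non-positive) bias term can be absorbed since we only need an upper bound on the absolute deviation, so it suffices to handle the first two stochastic pieces.

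For the diagonal piece, note that if $X_i$ is $\alpha$-sub-exponential with norm at most $V$, then $X_i^2 - E X_i^2$ is centered and $(\alpha/2)$-sub-exponential with norm $\lesssim V^2$. Applying Lemma \ref{GenBern} (with $\alpha$ replaced by $\alpha/2$) to $\sum_i a_{ii}(X_i^2 - E X_i^2)$ gives a tail of the form $V^2\max\bigl(\sqrt{t}\,\|\mathrm{diag}(A)\|_2,\ t^{2/\alpha}\|\mathrm{diag}(A)\|_\infty\bigr)$. Since $\|\mathrm{diag}(A)\|_2 \leq \|A\|_F$ and $\|\mathrm{diag}(A)\|_\infty \leq \|A\|_{\max}$, this already contributes to the first and fourth terms of the asserted bound.

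For the off-diagonal piece, I would use a standard decoupling argument (e.g.\ Bourgain--Tzafriri) to pass from $\sum_{i\neq j} a_{ij} X_i X_j$ to $\sum_{i,j} a_{ij} X_i X'_j$ with $X'$ an independent copy of $X$, at the cost of a universal constant. Conditional on $X'$, this is a linear combination $\sum_i X_i Y_i$ with $Y_i = \sum_j a_{ij} X'_j$, so Lemma \ref{GenBern} conditionally yields, with probability at least $1 - 2e^{-t/c_\alpha}$,
$$\Bigl|\sum_i X_i Y_i\Bigr| \leq V \max\bigl(\sqrt{t}\,\|Y\|_2,\ t^{1/\alpha}\|Y\|_\infty\bigr).$$
For $\|Y\|_2$, observe $\|Y\|_2^2 = X'^T A^2 X'$ is itself a quadratic form with expectation $\lesssim \|A\|_F^2$ and operator-norm fluctuation $\|A\|_{op}^2$, yielding $\sqrt{t}\,\|A\|_F + t\|A\|_{op}$ after a square root. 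For $\|Y\|_\infty$, a second application of Lemma \ref{GenBern} to each $Y_i = \sum_j a_{ij} X'_j$ together with a union bound gives $|Y_i| \leq V\max(\sqrt{t}\|(a_{ij})_j\|_2,\ t^{1/\alpha}\|(a_{ij})_j\|_\infty)$. Multiplying by $t^{1/\alpha}$ and bounding $\|(a_{ij})_j\|_\infty \leq \|A\|_{\max}$ produces exactly the remaining two terms, $t^{(2+\alpha)/(2\alpha)}\max_i \|(a_{ij})_j\|_2$ and $t^{2/\alpha}\|A\|_{\max}$.

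The main obstacle will be bookkeeping the four tail regimes cleanly, since each application of Lemma \ref{GenBern} introduces a $\max$ of two terms, and composing two such applications for the off-diagonal piece produces exactly four cross-terms that must be matched with the four terms in the statement (and combined with the diagonal piece without any spurious contributions). A cleaner alternative is simply to cite the $\alpha$-Hanson--Wright inequality of \cite{gotze2021concentration}, which already produces precisely this four-term tail for quadratic forms in $\alpha$-sub-exponential variables; the stated bound is its immediate specialization to $X^T (Q^T Q) X$ with centering $E[X^T A X] = \sum_i a_{ii} w_i^2 = \sum_i w_i^2 \sum_j q_{ji}^2$.
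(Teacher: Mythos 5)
The paper does not actually prove this lemma at all: its entire ``proof'' is the sentence citing \cite{gotze2021concentration}, so your closing observation---that the stated bound is the $\alpha$-Hanson--Wright inequality of that reference specialized to $X^T(Q^TQ)X$ with the centering $\sum_i w_i^2\sum_j q_{ji}^2$---is exactly the paper's route, and on that point you are aligned with it. Your preceding self-contained sketch, however, is not how that result is established and has two soft spots worth flagging. First, the step bounding $\|Y\|_2$ with $Y_i=\sum_j a_{ij}X_j'$ by ``expectation $\lesssim\|A\|_F^2$ and operator-norm fluctuation'' is itself a quadratic-form concentration statement for the matrix $A^2$, so the argument is circular as written; the crude alternative $\|Y\|_2\leq\|A\|_{\mathrm{op}}\|X'\|_2$ only yields $\sqrt{n}\,\|A\|_{\mathrm{op}}$ in place of $\|A\|_F$, which is strictly weaker, and the actual proof in \cite{gotze2021concentration} avoids this by going through moment estimates for polynomial chaos (in the spirit of Lata{\l}a and Adamczak--Wolff) rather than decoupling plus recursive conditioning on Lemma \ref{GenBern}. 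Second, the claim that the deterministic term $\sum_i a_{ii}(E X_i^2-w_i^2)$ ``can be absorbed since we only need an upper bound on the absolute deviation'' is backwards: centering at $\sum_i w_i^2 a_{ii}$ rather than at the true mean can only enlarge the absolute deviation, so this term cannot be discarded for free (it is harmless only when $w_i^2=E X_i^2$, which is how the cited reference states the result). Since the lemma is imported rather than proved in the paper, the citation is the intended justification, and your sketch should be regarded as an optional reconstruction that would need the moment-method machinery to be made rigorous.
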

The proof for Lemma \ref{GenBern2} can also be found in \cite{gotze2021concentration}.

\section{NHANES dataset}\label{sec_NHANES}
We now apply our \textsc{G-hive} procedure to the 2017-2018 NHANES dataset through the \texttt{R} package \texttt{nhanesA} \citep{nhanesA}. This is a public dataset that consists of a combination of demographic data (age, gender, etc.), examination data (height, weight, etc.), questionnaire responses (\say{Do you now smoke cigarettes every day, some days, or not at all?}), and laboratory results (quantity of biomarkers taken from blood or urine samples). We take a subset of this data such that each response variable is binary and each covariate is continuous, and apply our \textsc{data driven G-hive} method on it. We focused on five binary responses corresponding to whether the individuals were diagnosed with depression, hypertension, arthritis, diabetes, and osteoporosis. We included age, income-to-poverty-ratio, systolic blood pressure (mmHg), body mass index (BMI, kg/$\text{m}^2$), and fasting glucose levels (mg/dL) as the observed covariates. After removing observations that have missing values for any of the response variables or covariates, we were left with $n=320$ valid observations. All of the observed covariates were standardized prior to the data analysis to prevent variables that are on a larger scale from dominating the model. The results with \textsc{data driven G-hive} for the coefficients $\widehat{\Theta}$ are presented in Table \ref{table_2}.

\begin{table}[ht]
  \centering
  \begin{tabular}{lrrrrr}
    \toprule
                 &    Age   &  Income  &   SysBP   &    BMI    & Glucose  \\
    \midrule
    Depression   &  0.0240  & -0.4243  & -0.2453   &  0.5891   & -0.7752  \\
    Diabetes     &  0.3589  &  0.1676  &  0.1695   &  0.4284   &  0.9332  \\
    Osteoporosis & -0.0595  &  0.0475  &  0.0286   & -0.1754   &  0.0643  \\
    Hypertension &  0.3733  &  0.1202  &  0.4070   &  0.5687   &  0.6886  \\
    Arthritis    &  0.3566  & -0.0667  & -0.2127   &  0.7989   & -0.3615  \\
    \bottomrule
  \end{tabular}
  \caption{\textsc{G-hive} results for the coefficient values relating the covariates to the response variables in the NHANES dataset with $n=320$ observations, $M=5$ binary response variables and $p=5$ observed covariates.}
  \label{table_2}
\end{table}

While it is infeasible in general to discern the ground truth coefficient values for real data analyses, it is apparent that many coefficients are well aligned with basic knowledge of the relationships between the covariates and the response variables. For instance, fasting glucose levels are known to be higher in individuals with diabetes (\cite{american20252}), and this is reflected in the large positive coefficient value of $0.9332$. Also, it is well known that BMI is a good predictor of diabetes (\cite{hu2001diet}), and this is consistent with our relatively large positive value of $0.4284$. Another example of well-alignment with the medical literature is the large positive coefficient value of $0.4070$ that relates systolic blood pressure to hypertension as hypertension is normally diagnosed with a combination of systolic blood pressure values and diastolic blood pressure values (\cite{parikh2008risk}). Lastly, age is shown to be positively correlated with arthritis, which is consistent with common knowledge in medicine as well (\cite{elgaddal2024arthritis}).

\begin{table}[ht]
\centering
\caption{Results of \textsc{naive mle} and \textsc{G-hive} on the reduced model. The estimates in the second column represent the effect of ``Income" on the response variables \textbf{without} taking into account the effect of ``Age".}
\begin{subtable}[t]{0.48\textwidth}
\centering
\caption{Results of \textsc{naive mle} on the reduced model.}
\label{table_reduced_1}
\begin{tabular}{l S[table-format=1.2] S[table-format=1.2]}
\toprule
Outcome & {Intercept} & {Income} \\
\midrule
Smoke        & -1.45 & -0.63 \\
Diabetes     & -1.60 & -0.17 \\
Hypertension & -0.65 &  0.16 \\
Arthritis    & -1.14 &  0.19 \\
Depression   & -1.31 & -0.33 \\
\bottomrule
\end{tabular}
\end{subtable}\hfill
\begin{subtable}[t]{0.48\textwidth}
\centering
\caption{Results of \textsc{G-hive} on the reduced model.}
\label{table_reduced_2}
\begin{tabular}{l S[table-format=1.2] S[table-format=1.2]}
\toprule
Outcome & {Intercept} & {Income} \\
\midrule
Smoke        & -1.90 & -0.47 \\
Diabetes     & -1.02 & -0.38 \\
Hypertension & -0.45 &  0.09 \\
Arthritis    & -1.05 &  0.14 \\
Depression   & -1.23 & -0.35 \\
\bottomrule
\end{tabular}
\end{subtable}\label{table_reduced}
\end{table}

\begin{table}[ht]
\centering
\caption{Results of \textsc{naive mle} on the full model. The estimates in the third column represent the effect of ``Income" on the response variables \textbf{while} taking into account the effect of ``Age".}
\begin{tabular}{l S[table-format=1.2] S[table-format=1.2] S[table-format=1.2]}
\toprule
Outcome & {Intercept} & {Age} & {Income} \\
\midrule
Smoke        & -1.52 & -0.51 & -0.54 \\
Diabetes     & -1.91 &  1.03 & -0.36 \\
Hypertension & -0.73 &  0.73 &  0.04 \\
Arthritis    & -1.32 &  0.87 &  0.07 \\
Depression   & -1.32 & -0.23 & -0.29 \\
\bottomrule
\end{tabular}\label{table_full}
\end{table}

\subsection{Deconfounding benefits of G-HIVE on the NHANES dataset}
We also present real data analysis results that highlight the ability of \textsc{G-hive} to account for model misspecification bias in the context of confounding. Recall that our method assumes that $Y$ depends on $X,Z$ through the GLM in (\ref{glm}) and that $X$ depends on the hidden variable $Z$ through the factor model $X=AZ+W$ in (\ref{factor_model}). This has a very natural connection to the basic confounded model depicted in Figure \ref{figure_confound_1}. Because $Z$ affects $X$ and $Y$, the observational association $P(Y|X)$ mixes the effect of $X \rightarrow Y$ and the spurious flow through $Z$ (\cite{pearl2009causality}). We utilize the same NHANES dataset from the previous section with slightly different variables that yield $n=230$ viable observations (with no missing values, etc.). We focus on two explanatory variables, \say{Age} and \say{Income} to clearly see the effects of confounding from a hidden variable (\say{Age}) and deconfounding with \textsc{G-hive}. Figure \ref{figure_confound_2} shows the basic confounded model with \say{Hypertension} as an example response variable. Figure \ref{figure_confound_2} is reasonable as it is well established that \say{Age} has a positive effect on \say{Income} (\cite{mincer1974schooling}) and that \say{Age} has a positive effect on \say{Hypertension} (\cite{parikh2008risk}). The same can be said about the effect of \say{Age} on \say{Diabetes} (\cite{wilson2007prediction}) and \say{Age} on \say{Arthritis} (\cite{elgaddal2024arthritis}), hence justifying similar figures with these response variables included instead. We run both \textsc{naive mle} and \textsc{G-hive} on the reduced model that just includes \say{Income} as the covariate, and we run \textsc{naive mle} on the full model that includes both \say{Age} and \say{Income} as covariates. It is expected that the confounding will cause the coefficient corresponding to \say{Income} in the reduced model to appear more positive than it truly is in the full model. The coefficient values for each of these models are shown in Tables \ref{table_reduced} and \ref{table_full}. We assume the latter model shows the \say{true} effect of \say{Income} on \say{Hypertension}, \say{Diabetes}, and \say{Arthritis} after accounting for the effect of \say{Age}. Comparing the coefficient values corresponding to \say{Income} on \say{Arthritis} between Table \ref{table_reduced_1} and Table \ref{table_reduced_2}, it is apparent that \textsc{G-hive}'s $0.14$ is closer to the \say{true} value of $0.07$ compared to the more positively pushed \textsc{naive mle} value of $0.19$. Similarly, for \say{Hypertension}, \textsc{G-hive}'s $0.09$ is closer to the \say{true} value of $0.04$ compared to the positively shifted \textsc{naive mle} value of $0.16$. Lastly, for \say{Diabetes}, \textsc{G-hive}'s $-0.38$ is closer to the \say{true} value of $-0.36$ compared to the positively shifted \textsc{naive mle} value of $-0.17$. This demonstrates that unlike \textsc{naive mle}, \textsc{G-hive} is able to account for confounding effects from the hidden variables and obtain estimates that are closer to the unconfounded effects even in real datasets. 

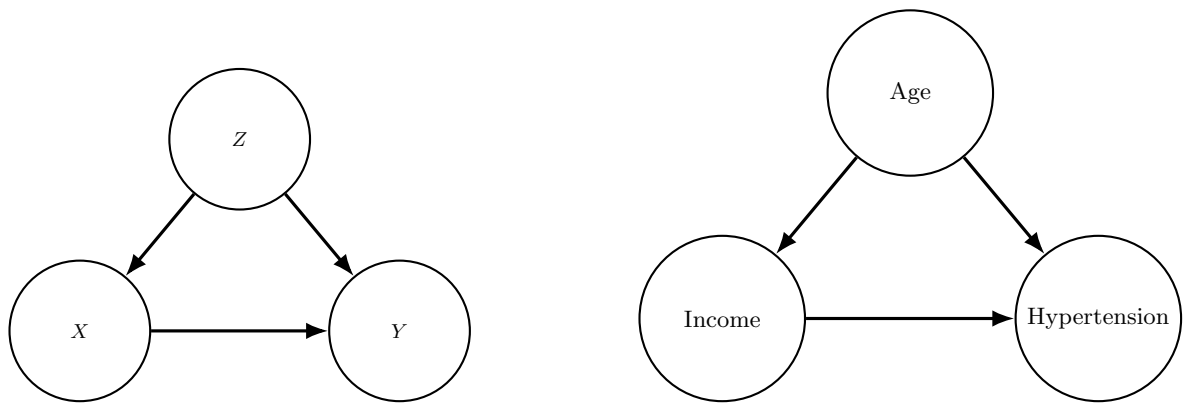
\begin{figure}[ht]
\centering

\tikzset{
  >=Latex,
  dagnode/.style={
    circle, draw, thick,
    minimum size=22mm,   
    text width=20mm,     
    align=center,
    inner sep=1pt,
    font=\small
  },
  solidedge/.style={-Latex, very thick},
  dashededge/.style={-Latex, very thick, dashed}
}

\begin{subfigure}[t]{0.46\textwidth}
\centering
\begin{tikzpicture}[scale=0.85, every node/.style={transform shape}]
  \node[dagnode] (Z) at (2.5,0)   {$Z$};
  \node[dagnode] (X) at (0,-3)    {$X$};
  \node[dagnode] (Y) at (5,-3)    {$Y$};

  \draw[solidedge]  (Z) -- (X);   
  \draw[solidedge]  (Z) -- (Y);   
  \draw[solidedge] (X) -- (Y);   
\end{tikzpicture}
\caption{Models (\ref{glm}) and (\ref{factor_model}) represented as a basic confounded model.}
\label{figure_confound_1}
\end{subfigure}\hfill
\begin{subfigure}[t]{0.46\textwidth}
\centering
\begin{tikzpicture}
  \node[dagnode] (age) at (2.5,0)   {Age};
  \node[dagnode] (inc) at (0,-3)    {Income};
  \node[dagnode] (htn) at (5,-3)    {Hy\-per\-ten\-sion}; 

  \draw[solidedge]  (age) -- (inc); 
  \draw[solidedge]  (age) -- (htn); 
  \draw[solidedge] (inc) -- (htn); 
\end{tikzpicture}
\caption{The basic confounded model applied to variables in the NHANES dataset from 2017-2018.}
\label{figure_confound_2}
\end{subfigure}

\caption{Models (\ref{glm}) and (\ref{factor_model}) and variables in the real dataset NHANES (2017-2018) in the context of confounding (\cite{pearl2009causality}).}
\end{figure}

\end{document}